\documentclass[letterpaper,12pt]{article}

\usepackage{geometry,amssymb,amsmath,latexsym,amsthm}
\usepackage[T1]{fontenc}
\usepackage[utf8]{inputenc}
\usepackage{lmodern}
\newtoks\SupplementEveryDisplay
\SupplementEveryDisplay=\everydisplay
\usepackage{setspace}
\usepackage{indentfirst}
\usepackage{microtype}

\usepackage{enumitem}
\usepackage{booktabs}
\usepackage[font=small,labelfont=bf]{caption}
\usepackage[round,sort]{natbib}
\usepackage{xcolor}
\usepackage{hyperref}
\usepackage[nameinlink,noabbrev]{cleveref}

\hypersetup{
	colorlinks=true,
	linkcolor=blue,
	citecolor=blue,
	urlcolor=blue,
	pdfauthor={Paul H.Y. Cheung and Zichang Wang},
	pdftitle={Scrutiny and Conservatism}
}

\newtheoremstyle{blueplain}
{7pt}
{7pt}
{\itshape}
{}
{\bfseries\color{blue}}
{.}
{0.5em}
{}

\newtheoremstyle{bluedefinition}
{7pt}
{7pt}
{\normalfont}
{}
{\bfseries\color{blue}}
{.}
{0.5em}
{}

\theoremstyle{blueplain}
\newtheorem{theorem}{Theorem}
\newtheorem{proposition}{Proposition}
\newtheorem{lemma}{Lemma}

\theoremstyle{bluedefinition}
\newtheorem{definition}{Definition}
\newtheorem{axiom}{Axiom}

\newtheorem{remark}{Remark}

\theoremstyle{plain}
\newtheorem{suppaxiom}{Axiom}
\newtheorem{suppdefinition}{Definition}
\newtheorem{supplemma}{Lemma}
\newtheorem{suppauxproposition}{Auxiliary Proposition}
\newtheorem{suppproposition}{Proposition}
\newtheorem{suppremark}{Remark}

\crefname{theorem}{Theorem}{Theorems}
\Crefname{theorem}{Theorem}{Theorems}
\crefname{proposition}{Proposition}{Propositions}
\Crefname{proposition}{Proposition}{Propositions}
\crefname{lemma}{Lemma}{Lemmas}
\Crefname{lemma}{Lemma}{Lemmas}
\crefname{corollary}{Corollary}{Corollaries}
\Crefname{corollary}{Corollary}{Corollaries}
\crefname{claim}{Claim}{Claims}
\Crefname{claim}{Claim}{Claims}
\crefname{definition}{Definition}{Definitions}
\Crefname{definition}{Definition}{Definitions}
\crefname{axiom}{Axiom}{Axioms}
\Crefname{axiom}{Axiom}{Axioms}
\crefname{example}{Example}{Examples}
\Crefname{example}{Example}{Examples}
\crefname{remark}{Remark}{Remarks}
\Crefname{remark}{Remark}{Remarks}
\crefname{suppaxiom}{Axiom}{Axioms}
\Crefname{suppaxiom}{Axiom}{Axioms}
\crefname{suppdefinition}{Definition}{Definitions}
\Crefname{suppdefinition}{Definition}{Definitions}
\crefname{supplemma}{Lemma}{Lemmas}
\Crefname{supplemma}{Lemma}{Lemmas}
\crefname{suppauxproposition}{Auxiliary Proposition}{Auxiliary Propositions}
\Crefname{suppauxproposition}{Auxiliary Proposition}{Auxiliary Propositions}
\crefname{suppproposition}{Proposition}{Propositions}
\Crefname{suppproposition}{Proposition}{Propositions}
\crefname{suppremark}{Remark}{Remarks}
\Crefname{suppremark}{Remark}{Remarks}

\newcommand{\R}{\mathbb{R}}
\newcommand{\Acts}{\mathcal{F}}
\newcommand{\Menus}{\mathcal{M}}
\newcommand{\DeltaO}{\Delta(\Omega)}
\newcommand{\one}{\mathbf{1}}
\newcommand{\co}{\operatorname{co}}

\DeclareMathOperator{\supp}{supp}
\DeclareMathOperator*{\argmax}{argmax}

\setlist[enumerate]{label=(\alph*),leftmargin=2.2em}
\setlist[itemize]{leftmargin=1.6em}

\makeatletter
\newcommand{\BeginSupplementaryAppendix}{%
    \clearpage
    \newgeometry{margin=1in}%
    \setcounter{page}{1}%
    \setcounter{section}{0}%
    \setcounter{subsection}{0}%
    \setcounter{subsubsection}{0}%
    \setcounter{equation}{0}%
    \setcounter{suppaxiom}{0}%
    \setcounter{suppdefinition}{0}%
    \setcounter{supplemma}{0}%
    \setcounter{suppauxproposition}{0}%
    \setcounter{suppproposition}{0}%
    \setcounter{suppremark}{0}%
    \renewcommand\thesection{S\arabic{section}}%
    \renewcommand\theequation{S\arabic{equation}}%
    \renewcommand\thepage{S\arabic{page}}%
    \renewcommand\theHsection{supplement.\arabic{section}}%
    \renewcommand\theHsubsection{supplement.\arabic{section}.\arabic{subsection}}%
    \renewcommand\theHsubsubsection{supplement.\arabic{section}.\arabic{subsection}.\arabic{subsubsection}}%
    \renewcommand\theHequation{supplement.\arabic{equation}}%
    \let\axiom\suppaxiom
    \let\endaxiom\endsuppaxiom
    \let\definition\suppdefinition
    \let\enddefinition\endsuppdefinition
    \let\lemma\supplemma
    \let\endlemma\endsupplemma
    \let\auxproposition\suppauxproposition
    \let\endauxproposition\endsuppauxproposition
    \let\proposition\suppproposition
    \let\endproposition\endsuppproposition
    \let\remark\suppremark
    \let\endremark\endsuppremark
    \setlist[enumerate]{}%
    \setlist[itemize]{}%
    \everydisplay=\SupplementEveryDisplay
    \def\@afterindentfalse{\let\if@afterindent\iffalse}%
    \@afterindentfalse
    \def\NAT@sort{\z@}%
    \ignorespaces
}
\makeatother

\title{\normalfont Scrutiny and Conservatism}

\author{
	Paul H.Y. Cheung\thanks{
		Naveen Jindal School of Management, University of Texas at Dallas.
		Email: \href{mailto:paul.cheung@utdallas.edu}{paul.cheung@utdallas.edu}.
		Website: \url{https://www.paulcheung.net/}.
	}
	\and
	Zichang Wang\thanks{
		School of Economics and Wang Yanan Institute for Studies in Economics,
		Xiamen University.
		Email: \href{mailto:zichang.wang@xmu.edu.cn}{zichang.wang@xmu.edu.cn}.
		Website: \url{https://zichangwang.com/}.
	}
}

\date{Aug 28, 2026}

\begin{document}

\maketitle

\begin{abstract}
We study a setting in which an agent receives private information before choosing from a menu and anticipates hindsight scrutiny.  Such scrutiny creates a motive for conservatism toward menu expansions. Our key axiom, \textit{conservatism}, is a direct weakening of preference for flexibility: Adding an option is weakly beneficial whenever it leaves the menu's hindsight benchmark unchanged. Together with standard axioms, conservatism characterizes a scrutiny representation of preferences over menus in which the agent behaves as if she evaluates each menu by subtracting anticipated scrutiny from the material value of informed choice. Menu preference identifies a unique minimum pair of private information and scrutiny intensity, while incorporating subsequent stochastic-choice data pins down the actual pair. Applications to medical liability and corporate innovation illustrate the implications of our model for accountability design.

\end{abstract}

\noindent
\textbf{Keywords:} Scrutiny, conservatism, menu preference, preference for flexibility, preference for commitment, stochastic choice.

\noindent
\textbf{JEL Codes:}
D81, D83, D91.

\newpage


\section{Introduction}
\label{sec:introduction}
Decisions made under uncertainty are often reevaluated once the uncertainty has been fully resolved. By then, an option reasonably rejected ex ante may appear obviously superior ex post, while the information that justified rejecting it may be difficult to verify. A forgone option can therefore become a source of scrutiny.

Consider a physician who combines documented test results with nonverifiable clinical judgment before choosing a treatment. After the patient's condition is fully understood, the physician may be asked: ``Why did you not choose the treatment that turned out to be best?'' Her truthful answer may be that the treatment was not optimal given the information available at the time. Yet because the complete clinical assessment that guided her decision is not fully verifiable, that answer may be difficult to substantiate. A potentially suitable treatment can therefore become an incriminating counterfactual. Similar tensions arise in many settings. A portfolio manager may choose among investment opportunities based on a proprietary report, while an employer may rely on private interviews to decide whom to hire.

In each of these settings, having more options creates both an opportunity and a liability. As in the standard preference-for-flexibility logic, a wider range of options allows the decision maker (DM) to choose differently depending on what she learns. Under hindsight scrutiny, however, an option that is not chosen does not simply disappear: once uncertainty is resolved, it may create a new basis for scrutiny. More options thus retain their familiar flexibility benefit while also exposing the DM to additional scrutiny. This trade-off leads to our central question: When are more options valuable under hindsight scrutiny?

In this paper, we study this question through preferences over menus of acts. An objective state of the world determines the consequences of each act. The DM first chooses a menu, then receives private information about the state and selects an act from the menu. Finally, the state is revealed. The DM anticipates being evaluated against the act in her menu that would have been best in the realized state, even though that act may have appeared to be suboptimal when she made her choice.

Our answer is a simple and intuitive behavioral postulate. To introduce it, we first extend the standard notion of statewise domination between acts to a domination relation between an act and a menu. We say that an act $g$ is \emph{dominated} by a menu $A$ if, in every state, some act in $A$ delivers an outcome at least as good as $g$. Note that adding a dominated act to a menu does not raise the menu's state-by-state hindsight benchmark. Yet the act can still be useful when private information is sufficiently imprecise, as it may serve as a ``safe'' option among more extreme acts in the menu. Our key axiom, \emph{conservatism}, therefore requires that adding a dominated act weakly improves the menu. Formally, it says $A\cup\{g\}\succsim A$ if $g$ is dominated by $A$. Thus, more options are unambiguously beneficial when they create no new basis for scrutiny. Conservatism is a direct weakening of preference for flexibility.

We now describe the functional form identified by our representation theorem. Let $u$ be an affine utility function over outcomes. If the DM chooses act $f$ from menu $A$ and state $\omega$ is subsequently revealed, her scrutiny cost is
\[
R(f,A,\omega)=K\left[\max_{g\in A}u(g(\omega))-u(f(\omega))\right],
\]
where $K\geq0$ measures the intensity of scrutiny. The bracketed term is the gap between the realized outcome of the chosen act and the best outcome that could have been obtained from the menu in that state. The DM's private information is a distribution $\mu$ over posterior beliefs. Before she selects an act, a posterior $p\in\Delta(\Omega)$ is drawn according to $\mu$. The DM evaluates a menu $A$ according to
\begin{equation}
\label{eq:intro-primitive}
V(A)=\int_{\Delta(\Omega)}\max_{f\in A}\sum_{\omega\in\Omega}p(\omega)\bigl[u(f(\omega))-R(f,A,\omega)\bigr]\,\mu(dp).
\end{equation}
Thus, the material benefit of informed choice is weighed against the scrutiny created by comparisons with what the menu could have delivered under perfect hindsight. We call this a \textit{scrutiny representation}.  

An important feature of \eqref{eq:intro-primitive} is that scrutiny affects the ex-ante value of a menu without distorting choice from it. For fixed $A$ and $p$, the hindsight benchmark is independent of which candidate act is being considered. The DM therefore still chooses an act maximizing posterior expected material utility. Her conservatism concerns which options she wants available, not how she chooses once the menu and information are fixed.

\medskip
\noindent\textbf{Example 1.}
Consider a physician who receives a diagnostic report before treating a patient. A clinical guideline contains an established treatment $f_1$ and a standard treatment $f_2$, giving the menu $B=\{f_1,f_2\}$. The physician could also make an experimental treatment $g$ available, giving $A=\{f_1,f_2,g\}$. Patients are equally likely to be type 1 or type 2, and the state-dependent utilities are

\begin{center}
\begin{tabular}{lcc}
\toprule
& Type 1 & Type 2 \\
\midrule
Established treatment $f_1$ & $10$ & $1$ \\
Standard treatment $f_2$ & $5$ & $6$ \\
Experimental treatment $g$ & $14$ & $-2.5$ \\
\bottomrule
\end{tabular}
\end{center}

The report is favorable or unfavorable with equal probability and yields a posterior probability of type 1 equal to $0.6$ or $0.4$, respectively. Under $B$, the physician chooses $f_1$ after a favorable report and $f_2$ after an unfavorable one. Under $A$, she chooses $g$ after a favorable report and $f_2$ after an unfavorable one. Direct calculation from \eqref{eq:intro-primitive} gives $V(A)=6.5-3.5K$ and $V(B)=6-2K$. 

The experimental treatment $g$ improves the expected material payoff from $6$ to $6.5$, but its availability also creates a more demanding state-by-state standard for evaluating the physician. Consequently, she prefers to stick to the clinical guideline and exclude the experimental treatment from consideration when $K>1/3$. The example captures the central trade-off: An option can improve informed choice while creating an even larger increase in scrutiny. \qed

In \Cref{sec:characterization}, we present our main characterization result. We show that conservatism, together with weak order, strong continuity, independence, and domination, is necessary and sufficient for a scrutiny representation. 

We then study the identification problem. Menu preference does not, in general, separately identify the agent's private information $\mu$ and scrutiny intensity $K$. Instead, it identifies a unique \textit{minimum} pair $(\underline\mu,\underline K)$: every other representation of the same (nonconstant) scrutiny preference must feature private information that is more informative, in the Blackwell sense, and a weakly larger scrutiny intensity. Moreover, we show that $\underline K>0$ if and only if the DM violates preference for flexibility. The minimum representation also allows us to obtain a comparative-statics result. In particular, we provide a behavioral characterization of comparative scrutiny aversion between two DMs who anticipate the same minimum private information but differ in their minimum scrutiny intensity $K$.

In \Cref{sec:stochastic-choice}, we turn to the DM's subsequent choices from menus. Because her choice depends on a privately observed signal, randomness in signal realizations generates stochastic choice data from the perspective of an outside observer, in the sense of \citet{Lu2016}. Although these data alone cannot reveal scrutiny aversion—the DM chooses optimally after every signal realization—they identify her private information. That information, together with the menu preference, selects a unique representation of the preference and hence pins down scrutiny intensity. The joint data therefore fully identify the DM's actual private information and scrutiny intensity.

We then turn from identification to characterization and ask when menu preference and subsequent stochastic-choice data can be generated by common primitives. We provide two cross-stage axioms. \emph{Singleton alignment} requires the two data sets to agree on the ranking of acts. \emph{Joint scrutiny consistency} requires a menu to be preferred over another menu whenever (i) stochastic choice reveals that it has a weakly greater informed-choice value and (ii) it has a weakly less demanding hindsight benchmark. Given the separate marginal representations, these two axioms are necessary and sufficient for a joint scrutiny representation.

In \Cref{sec:accountability-applications}, we apply the model to two delegated-decision settings with aligned material interests: a patient relies on a physician to consider and select a treatment, while a venture capitalist relies on a manager to generate and implement projects. Although the parties share the same material utility, the agent's anticipated scrutiny can distort these decisions. In the first application, blanket immunity weakens incentives for defensive diagnostic testing, consistent with \citet{FrakesGruber2019}, whereas a targeted safe harbor also increases the relative appeal of the protected treatment menu and may trade excessive diagnosis for excessive standardization. In the second application, intolerance of early failure discourages innovative projects, consistent with \citet{TianWang2014}, and creates a disclosure margin: a manager may explore broadly while keeping rejected experiments off the formal record. Thus, scrutiny can distort information acquisition, menu formation, and disclosure without any conventional conflict over material outcomes.

\subsection{Related Literature}
This paper first contributes to the literature on preferences over menus. Flexibility may be valuable because of uncertainty about future tastes (\citealp{Kreps1979} and \citealp{DekelLipmanRustichini2001}) or because the DM expects to learn about an objective state before choosing (\citealp{dillenberger2014theory}). Commitment (smaller menus) could be valuable if it protects a DM from temptation (\citealp{GulPesendorfer2001}) or regret (\citealp{Sarver2008}). As in \citet{dillenberger2014theory}, our DM learns about an objective state before choosing and remains materially optimal conditional on her information. Her conservatism arises because the statewise quality of forgone acts is observed after the state is revealed. While allowing preference for commitment, our key axiom directly identifies the type of flexibility that is unambiguously beneficial.

\citet{Sarver2008} axiomatizes anticipated regret from preferences over menus of lotteries. Sarver's Dominance axiom allows a weakly inferior alternative, according to the singleton ranking, to make a menu worse by raising ex-post regret. Such an alternative can never make the menu better.  Both papers use menu preferences to reveal costs created by forgone alternatives. Unlike Sarver's agent, however, ours receives private information before choosing, so choice need not follow the ex-ante singleton ranking; conservatism accordingly relies on statewise domination.

\citet{Kopylov2012} also allows an unchosen option to affect menu value through the menu's best normative option. The perfectionist set-betweenness axiom restricts comparisons among menus that share this option. There, menu dependence arises from temptation and self-evaluation. In our model, conditional choice is materially optimal and the relevant benchmark varies state by state.

The closest formal comparison concerning information is \citet{Pennesi2026}. The early-information representation there evaluates the improvement of informed choice relative to the act that is optimal at the prior. Our representation instead evaluates informed choice relative to the value attainable under perfect hindsight. Accordingly, Pennesi's interim rejoice-seeking axiom protects an expansion when the new act is no better than some existing act under the singleton ranking, whereas conservatism protects an expansion when the new act is dominated by the menu. The two axioms are not directly comparable and reflect different counterfactual benchmarks: Uninformed choice in his model and perfect hindsight in ours.

Our model also contributes to the literature on stochastic choice, particularly work that jointly studies menu preference and subsequent choice from menus. \citet{AhnSarver2013} study two behavioral primitives under uncertainty about future tastes: preferences over menus, in the sense of \citet{DekelLipmanRustichini2001}, and random choice from menus, in the sense of \citet{GulPesendorfer2006}. Their analysis maintains preference for flexibility, so larger menus are always weakly preferred. We depart from Ahn and Sarver along two dimensions. First, our DM faces objective uncertainty and receives private information about the state before choosing. Second, our model allows preference for commitment generated by hindsight scrutiny. As in Ahn and Sarver, combining menu preference with stochastic choice sharpens identification. In our setting, stochastic choice identifies the DM's private information; together with menu preference, this pins down scrutiny intensity and fully identifies the primitives of the model.

Our paper is also broadly related to the literature on hindsight bias. Once an outcome is known, people tend to overestimate how foreseeable it was ex ante \citep{Fischhoff1975,HawkinsHastie1990}. In a liability setting, \citet{KaminRachlinski1995} find that knowledge of an adverse outcome raises assessments of its prior likelihood and leads prior conduct to be judged more harshly. Our model does not require such a bias; hindsight bias would further strengthen the scrutiny mechanism we study.

The remainder of the paper is organized as follows. Section 2 introduces the model. Section 3 characterizes scrutiny preferences and studies uniqueness and comparative scrutiny aversion. Section 4 combines menu-preference data with choice-from-menu data to obtain identification and a joint axiomatic characterization. Section 5 develops the applications to medical liability and corporate innovation, and Section 6 concludes.


\section{Model}
\label{sec:model}
\noindent\textbf{Acts and menus.} Let \(\Omega\) be a finite set of states with $|\Omega|\geq 2$, and \(Z\) be a finite set of prizes with $|Z|\geq 2$. Let $X:=\Delta(Z)$ be the set of lotteries over prizes.\footnote{For any compact metric space $Y$, let $\Delta(Y)$ denote the set of probability measures on its Borel $\sigma$-algebra, endowed with the weak$^*$ topology. When $Y$ is finite, $\Delta(Y)$ is the set of all probability measures on $Y$.} An (Anscombe--Aumann) \textbf{act} is a function \(f:\Omega\to X\). Let \(\Acts\) denote the set of all acts, endowed with the Euclidean metric $d$. A menu is a nonempty compact subset of \(\Acts\). Let $\Menus$ denote the collection of all menus, endowed with the Hausdorff metric $d_h$.\footnote{This is defined by $d_h(A,B):=\max\{\max_{f\in A}\min_{g\in B}d(f,g),\max_{f\in B}\min_{g\in A}d(f,g)\}$.} Our primitive is a binary relation \(\succsim\) on \(\Menus\).

Mixtures are defined in the standard way. For \(\alpha\in[0,1]\), the act \(\alpha f+(1-\alpha)g\) delivers \(\alpha f(\omega)+(1-\alpha)g(\omega)\) in state \(\omega\). The mixture of
menus $A$ and $B$ with weight $\alpha\in[0,1]$ is defined by $\alpha A+(1-\alpha)B:=\{\alpha f+(1-\alpha)g:f\in A,\ g\in B\}$.

When no confusion can arise, we employ the common abuse of notation: $x$ also denotes the constant act that delivers lottery $x$ in every state, and $f$ also denotes the singleton menu containing only act $f$. With these conventions, we write $f(\omega)\succsim g(\omega)$ to mean that the singleton menu containing only the constant act $f(\omega)$ is preferred to the singleton menu containing only the constant act $g(\omega)$.

\medskip
\noindent\textbf{Belief and information.} We consider a decision maker (henceforth DM) who has a full-support prior belief $\pi$ in the interior of $\Delta(\Omega)$.\footnote{The full-support prior assumption is nonessential and purely for ease of exposition.} The DM's private information\footnote{Here and throughout, ``private'' means that the DM's complete contemporaneous information or posterior is not fully verifiable or reconstructible by an ex-post evaluator. Some components of the underlying evidence may nevertheless be observable or documented.} is modeled as a probability distribution over posteriors that is Bayes plausible with respect to $\pi$. That is, the DM's private information is some $\mu\in\Delta(\Delta(\Omega))$ satisfying $\pi=\int_{\Delta(\Omega)}p\mu(dp)$.

\subsection{Scrutiny}

The timeline of events is the following:
\begin{enumerate}
	\vspace{-0.1cm}\item[1.] First, the DM chooses a menu $A$, anticipating private information represented by $\mu$;
	\vspace{-0.15cm}\item[2.] Then, a posterior $p\in\Delta(\Omega)$ is drawn according to the information $\mu$;
	\vspace{-0.15cm}\item[3.] Knowing the posterior, the DM chooses an act $f\in A$;
	\vspace{-0.15cm}\item[4.] Finally, the state $\omega$ is revealed and the outcome $f(\omega)$ is realized.
\end{enumerate}

As explained in the Introduction, even though the DM might have chosen the best act from $A$ given her posterior belief $p$, her choice may be assessed differently after $\omega$ is revealed: ``Why did you not choose the act from $A$ that is best in state $\omega$?''

The exact form of scrutiny we identify is based on the difference between the assessments of the chosen act at belief $p$ and after state $\omega$ is revealed. Specifically, if we fix an affine utility function $u:X\to\R$ and a scalar $K\geq0$ measuring the intensity of scrutiny, the \textit{(ex-post) scrutiny cost} for having chosen act $f$ from menu $A$ after state $\omega$ is revealed is
$$R(f,A,\omega):=K\left[\max_{g\in A}u(g(\omega))-u(f(\omega))\right],$$
where the bracketed term will be referred to as the \textit{realized hindsight gap}.

Anticipating scrutiny, the DM evaluates a menu of acts by
\begin{equation}
\label{eq:scrutiny-primitive}
	V(A)=\int_{\Delta(\Omega)}\max_{f\in A}\left(\sum_{\omega\in\Omega}p(\omega)\big[(u(f(\omega))-R(f,A,\omega)\big]\right)\mu(dp).
\end{equation}
At each potential posterior $p\in\Delta(\Omega)$, the DM chooses an act with the highest expected net value, which is its expected material value minus its expected scrutiny cost.
\begin{definition}
\label{def:scrutiny}
	A binary relation $\succsim$ over $\Menus$ has a \textbf{scrutiny representation} if there exists a tuple $(u,\mu,K)$ such that $\succsim$ can be represented by $V$ as defined in equation~\eqref{eq:scrutiny-primitive} where $u:X\to\R$ is an affine utility function, $\mu\in\Delta(\Delta(\Omega))$ is a distribution over posteriors that induces a full-support prior, and $K\geq 0$ is a scalar representing the intensity of scrutiny.
\end{definition}

The representation accommodates two conceptually distinct sources of scrutiny. First, scrutiny may be internal, taking the form of ex-post regret or self-evaluation \citep{Bell1982,LoomesSugden1982}. After the state is revealed, the DM may experience a sense of loss from comparing her choice with the act that would have been best in that state, even when she remembers her information and recognizes that her choice was optimal given what she knew. Under this interpretation, privacy is not essential. Second, scrutiny may come from an outside evaluator. The evaluator may observe the realized state and the forgone alternatives without being able to reconstruct the DM's complete posterior because some of the evidence is not fully verifiable. Alternatively, the evaluator may observe much of the underlying evidence but evaluate the decision through a procedure that does not fully condition on it, placing weight instead on the realized outcome and the alternatives that could have been chosen. The scrutiny cost $R(f,A,\omega)$ captures the resulting hindsight comparison in reduced form. We remain agnostic about which channel is operative, and $K$ may reflect either source of scrutiny or a combination of the two.

Substituting the definition of $R$ into
equation \eqref{eq:scrutiny-primitive} yields the equivalent reduced form
\begin{equation}
\label{eq:scrutiny-reduced}
V(A)=(1+K)\int_{\Delta(\Omega)}\max_{f\in A}\left(\sum_{\omega\in\Omega}p(\omega)u(f(\omega))\right)\mu(dp)-K\sum_{\omega\in\Omega}\pi(\omega)\max_{f\in A}u(f(\omega)),
\end{equation}
where $\pi=\int_{\Delta(\Omega)}p\,\mu(dp)$. The first integral is the expected material utility achieved using the DM's private information. The second summation is referred to as the \textit{hindsight value} of a menu, that is, the value that the menu would deliver if the state were known before the act was selected.  When $K=0$, the representation reduces to the subjective-learning benchmark of \citet{dillenberger2014theory}. When $K>0$, a menu expansion can improve informed choice and nevertheless reduce menu value if it raises the hindsight value sufficiently more.

The reduced form also shows that the same act will be selected after a posterior is observed with or without scrutiny. For a fixed menu and posterior, the hindsight term is independent of the candidate act. Hence the DM chooses from $\argmax_{f\in A}\sum_{\omega\in\Omega}p(\omega)u(f(\omega)).$ The parameter \(K\) changes only the ex-ante ranking of menus, but not the choice from menus.


\section{Characterization}
\label{sec:characterization}

\subsection{Axioms}

We impose five axioms on preferences. The first three are standard in the setting of preferences over menus of lotteries, adapted to our framework with menus of acts.

\begin{axiom}[Weak Order]
	\label{ax:weak-order}
	The binary relation \(\succsim\) is complete and transitive.
\end{axiom}

\begin{axiom}[Strong Continuity]
	\label{ax:strongcontinuity}\
	\begin{enumerate}
		\item[1.] von Neumann--Morgenstern (vNM) Continuity: If $A\succ B\succ C$, then there exist $\alpha,\bar\alpha\in(0,1)$ such that 
        \vspace{-0.2cm}
        $$\alpha A+(1-\alpha)C\succ B\succ\bar\alpha A+(1-\bar\alpha)C.$$
		\item[2.] \vspace{-0.2cm} Lipschitz (L) Continuity: There exist menus $A^*,A_*\in\Menus$ and $\gamma>0$ such that for every $A,B\in\Menus$ and $\alpha\in(0,1)$ with $d_h(A,B)\leq\alpha/\gamma$,
        \vspace{-0.2cm}
		$$(1-\alpha)A+\alpha A^*\succsim(1-\alpha)B+\alpha A_*.$$
	\end{enumerate}
\end{axiom}

\begin{axiom}[Independence]
	\label{ax:independence}
	If $A\succ B$, then for all $C\in\Menus$ and all $\alpha\in(0,1]$,
	$$\alpha A+(1-\alpha)C\succ\alpha B+(1-\alpha)C.$$
\end{axiom}
Because the representation permits a potentially infinite subjective state space and a signed measure, standard vNM-continuity has to be strengthened to strong continuity. For a more detailed discussion on L-Continuity, see \citet[henceforth DLR]{DekelLipmanRustichini2001} and a relevant corrigendum \citet[henceforth DLRS]{DekelLipmanRustichiniSarver2007}.

To proceed, we introduce the notion of ``statewise domination.''
\begin{definition}[Statewise Domination]
\label{def:statewisedomination}We say an act $g$ is \textit{statewise dominated} by a menu $A$ if for any $\omega\in\Omega$, there exists $f\in A$ such that $f(\omega)\succsim g(\omega)$. We say that a menu $B$ is \textit{statewise dominated} by another menu $A$ if every act in $B$ is statewise dominated by $A$.
\end{definition}

Under our definition, an act $g$ could be statewise dominated by a menu $A$ even when $g$ is not statewise dominated by \textit{any single act} in $A$. This definition only reduces to the standard state-by-state domination between acts if both $A$ and $B$ are singletons. That is, $g$ is statewise dominated by $f$ if and only if $f(\omega)\succsim g(\omega)$ for every $\omega$. We henceforth suppress the qualifier and just write ``$g$ is dominated by $A$'' to indicate statewise domination.

Our next axiom deals with acts satisfying state-by-state domination.

\begin{axiom}[Domination]
	\label{ax:domination}
	For any $f,g\in\Acts$ such that $g$ is dominated by $f$:
	\begin{enumerate}
		\item $f\succsim g$;
		\item If, in addition, $f\in A$ for some $A\in\Menus$, then $A\sim A\cup\{g\}$;
		\item If, in addition, $f(\omega^*)\succ g(\omega^*)$ for some $\omega^*\in\Omega$, then $f\succ g$.
	\end{enumerate}
\end{axiom}
This axiom combines three common requirements regarding preference over menus of acts and preference over acts themselves. Clause (a) is the standard monotonicity requirement for a preference over acts. Clause (b) is the ``domination'' axiom used in \citet[henceforth DLST]{dillenberger2014theory} and the source of the axiom's name. If $\succsim$ satisfies preference for flexibility (which was assumed in DLST), then (b) implies (a). Since our motivation is to weaken preference for flexibility, (a) is not automatically implied by (b). Clause (c) is the standard Anscombe-Aumann requirement to make sure no state is a null state. As we explained, clause (c) is nonessential.

Our key axiom identifies the exact type of flexibility that is valuable to a DM.
\begin{axiom}[Conservatism]
	\label{ax:PDF}
	If an act $g$ is dominated by a menu $A$, then $A\cup\{g\}\succsim A$.
\end{axiom}

\textit{Conservatism} is a direct weakening of preference for flexibility. Adding a dominated act gives the DM an additional option that can be chosen without creating a harsher benchmark against which the DM is scrutinized. Note that, in combination with \Cref{ax:domination}(b), adding $g$ to the menu $A$ is strictly welfare-improving only if $g$ is not dominated by any particular $f\in A$. In such a case, under the model, there must exist nondegenerate posterior beliefs under which $g$ is chosen, with $g$ serving as a ``safe'' or conservative choice relative to the other acts in $A$.

Moreover, the axiom does not rule out the possibility of improving a menu by adding an undominated act. Adding an undominated act could still be beneficial if its contribution to informed choice is large enough to compensate for the more demanding benchmark it creates.

\subsection{Representation Theorem}
The following is our main representation theorem:
\begin{theorem}
	\label{thm:characterization}
	A preference $\succsim$ has a scrutiny representation if and only if it satisfies weak order, strong continuity, independence, domination, and conservatism.
\end{theorem}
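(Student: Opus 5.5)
The plan is to reduce the problem to the DLR/DLRS signed-measure representation and then use domination and conservatism to pin down the shape of the measure. Necessity is direct. Weak order and independence follow because $V$ in \eqref{eq:scrutiny-reduced} is mixture-linear in the support-function sense. Strong continuity follows because $V$ is Lipschitz in $d_h$. For domination, (a) and (c) hold because $V(f)=\sum_\omega\pi(\omega)u(f(\omega))$ with $\pi$ of full support. Clause (b) holds because adding an act dominated by some $f\in A$ changes neither the hindsight term nor any posterior maximum. For conservatism, if $g$ is dominated by $A$, then $\max_{A\cup\{g\}}u(\cdot(\omega))=\max_A u(\cdot(\omega))$ for every $\omega$. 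So the hindsight term is unchanged while the informed-choice term weakly increases, and since $K\ge0$ we get $V(A\cup\{g\})\ge V(A)$.

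For sufficiency, I would first extract $u$. Restricting $\succsim$ to constant singletons gives a vNM preference on $X$ with an affine representation $u$, and domination (c) forces $u$ to be nonconstant. Next, map each act $f$ to the vector $u\circ f\in\R^\Omega$ and each menu $A$ to the convex hull of its image. Domination (b), exactly as in DLST, shows that $\succsim$ depends only on the comprehensive convex hull of this image. Applying the DLR/DLRS theorem (weak order, strong continuity, independence) to these sets yields a representation
$$V(A)=\int_{\Delta(\Omega)}\sigma_A(q)\,\nu(dq),\qquad \sigma_A(q):=\max_{f\in A}\sum_\omega q(\omega)u(f(\omega)).$$
Here $\nu$ is a finite signed measure. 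Comprehensiveness (monotonicity from domination (a) and (b)) lets us normalize the subjective states to $\Delta(\Omega)$, and we take the canonical, minimal-Jordan-decomposition version $\nu=\nu^+-\nu^-$, which is unique up to positive scaling.

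The main step, and the one I expect to be hardest, is to show that conservatism forces $\nu^-$ to be supported on the vertices $\{\delta_\omega\}$. For a menu $A$ and an act $g$ dominated by $A$, the function $h:=\sigma_{A\cup\{g\}}-\sigma_A$ is nonnegative and vanishes at every vertex. Conservatism therefore says $\int h\,d\nu\ge0$ for every such $h$. I would construct a rich family of these bumps. Take $A$ to be the convex hull of the scaled coordinate vectors $c e_\omega$, viewed in utility space and realized by acts because $u$ is nonconstant, so that its upper boundary is a simplex facet. Then let $g$ be a point slightly above that facet near a chosen normal direction $q$. As $g$ varies, $h$ becomes a small tent concentrated on posteriors near any nondegenerate $q$, including $q$ in relative interiors of lower-dimensional faces, by using sub-faces of the simplex. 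If $\nu^-$ charged a neighborhood of some nondegenerate $q$ where $\nu^+$ did not (which is possible since the Jordan parts are mutually singular), then a tent localized there would give $\int h\,d\nu<0$. The delicate part is the approximation: I need to show that these tents, together with positive combinations and mixtures via independence, separate measures on $\Delta(\Omega)\setminus\{\delta_\omega\}$, say by Stone--Weierstrass or a direct density argument, so that $\nu^-$ restricted there is zero.

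Finally, I would rewrite the representation in scrutiny form. Write $\nu^-=\sum_\omega c_\omega\delta_{\delta_\omega}$ with $c_\omega\ge0$, and let $b:=\int q\,\nu^+(dq)\in\R^\Omega_+$. Singletons give $V(f)=\sum_\omega(b_\omega-c_\omega)u(f(\omega))$, so domination (c) yields $w:=b-c\gg0$. Set $\pi:=w/\sum_\omega w_\omega$ and $K:=\max_\omega c_\omega/w_\omega\ge0$, and add $t_\omega:=Kw_\omega-c_\omega\ge0$ mass at $\delta_{\delta_\omega}$ to both $\nu^+$ and $\nu^-$; this leaves $V$ unchanged since the additions cancel. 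The new negative part is $K w$ placed on the vertices, and the new positive part has total mass $\sum_\omega(b_\omega+t_\omega)=(1+K)\sum_\omega w_\omega$ and barycenter proportional to $b+t=(1+K)w$. Define $\mu:=\nu^{+\prime}/\big((1+K)\sum_\omega w_\omega\big)$, and rescale $u$ by $\sum_\omega w_\omega$. Then $\mu$ is a probability measure, its barycenter $\int p\,\mu(dp)$ equals $\pi$ (so it is Bayes plausible), and $V$ coincides with \eqref{eq:scrutiny-reduced}, hence with \eqref{eq:scrutiny-primitive}.
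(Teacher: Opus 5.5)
There is a genuine gap in the step you flag as hardest, and it is the only place where conservatism enters. You leave it unproved, and the sketch you give would not close it.

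\textbf{Where the sketch fails.} Mutual singularity of $\nu^{+}$ and $\nu^{-}$ yields disjoint \emph{Borel} sets. It does not yield an open neighborhood of a nondegenerate $q$ that $\nu^{-}$ charges and $\nu^{+}$ does not. For example, take $\nu^{+}$ with full support and $\nu^{-}$ concentrated on a nowhere-dense set. So you do need the approximation statement, and neither tool you name delivers it:
\begin{itemize}
\item Stone--Weierstrass concerns subalgebras, and your tents are not closed under products. In any case it gives density of a linear span. Positivity of $\nu$ off the vertices instead requires that the \emph{convex cone} generated by your test functions be uniformly dense in the nonnegative continuous functions vanishing at the vertices.
\item Each single-act tent $\sigma_{A\cup\{g\}}-\sigma_A=(\ell_g-\sigma_A)^+$, with $\ell_g(q):=\sum_\omega q(\omega)u(g(\omega))$, is concave on a convex support. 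In dimension two or more such bumps do not directly form a partition of unity, and nothing in the proposal establishes the needed density.
\item With a common scale $c$, the exposed facet of $\co\{ce_\omega\}$ has the uniform normal. You would need state-dependent scales even to center a tent at an arbitrary $q$.
\end{itemize}

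\textbf{The missing idea.} Apply conservatism to \emph{every} nonnegative continuous piecewise-affine $h$ vanishing at the vertices, not just to single-act tents over a fixed facet.
\begin{itemize}
\item Write $h=\phi-\psi$ with $\phi,\psi$ maxima of finitely many affine functions; every continuous piecewise-affine function is a difference of convex ones.
\item After a common rescaling into $u(X)$, realize $\phi,\psi$ as support functions $\sigma_C,\sigma_B$ of finite menus. Set $A=B\cup C$, so $\sigma_A-\sigma_B=ah$ with $a>0$.
\item Since $h(\delta_\omega)=0$, $\max_{f\in A}u(f(\omega))=\max_{f\in B}u(f(\omega))$ for every $\omega$. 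So each act of $A\setminus B$ is dominated by $B$.
\item Adding these acts one at a time, conservatism and transitivity give $A\succsim B$, i.e.\ $\int h\,d\nu\ge 0$. This telescopes $\sigma_A-\sigma_B$ into tents, but relative to growing intermediate menus.
\item Piecewise-affine interpolation on iterated barycentric subdivisions keeps every $\delta_\omega$ as a vertex. It therefore approximates any nonnegative continuous function vanishing at the vertices uniformly from within this class.
\item The Riesz representation on the locally compact space $\Delta(\Omega)\setminus\{\delta_\omega:\omega\in\Omega\}$ then shows $\nu$ is positive there.
\end{itemize}
This is the paper's route.

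\textbf{Two smaller points.} First, domination (c) does not force $u$ to be nonconstant. It is vacuous when $u$ is constant, and the constant preference satisfies all five axioms, so you must treat that case separately with $u\equiv 0$. Second, once the support claim is in hand, your reparametrization (with $K=\max_\omega c_\omega/w_\omega$ and padding at the vertices) is correct. That $K$ is in fact the minimal intensity, and the construction matches the paper's $\mu=(\eta+K\nu_\pi)/(1+K)$.
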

Given this result, we refer to a preference that satisfies weak order, strong continuity, independence, domination, and conservatism as a \textit{scrutiny preference}. Here we provide an outline of the proof of the theorem. The formal proof is in Section~\ref{app:proof-characterization} of the Appendix.

The first step in establishing Theorem 1 is to note that the scrutiny representation is a special case of a more general class of additive expected-utility (EU) representations, which is the counterpart to the additive EU representation characterized in DLR and DLRS for preferences over menus of lotteries.

\begin{definition}
	An \textbf{additive EU representation} is a pair $(u,\eta)$ that consists of a 	finite (and countably additive) signed Borel measure $\eta$ on $\Delta(\Omega)$ and an affine utility function $u:X\to\R$ such that $\succsim$ is represented by the function $W:\Menus\to\R$ defined by
	\begin{equation}
		W(A)=\int_{\Delta(\Omega)}\max_{f\in A}\left(\sum_{\omega\in\Omega}p(\omega)u(f(\omega))\right)\eta(dp),
	\end{equation}
	where  $\eta(\Delta(\Omega))=1$ and its barycenter
	 $\int_{\Delta(\Omega)}p\eta(dp)$ is in the interior of $\Delta(\Omega)$.
\end{definition}

The following is a representation theorem for the additive EU representation.

\begin{proposition}
	\label{prop:additive-foundation}
	A preference $\succsim$ has an additive EU representation if and only if it satisfies
	weak order, strong continuity, independence, and domination.
\end{proposition}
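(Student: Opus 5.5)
The plan is to follow the Dekel--Lipman--Rustichini argument and its corrigendum (DLR and DLRS), translating menus of acts into support functions on the simplex $\Delta(\Omega)$. \emph{Necessity} is routine and I would dispose of it first. Weak order and independence hold because $W$ is affine under menu mixtures: for menus $A,B$, the support function of $\alpha A+(1-\alpha)B$ is $\alpha\sigma_A+(1-\alpha)\sigma_B$. vNM continuity follows from this affinity. L-continuity holds with $\gamma$ proportional to $\|u\|\,|\eta|(\Delta(\Omega))$, taking $A^*,A_*$ to be the best and worst constant acts. Domination holds as follows: (b) because $\max_{f\in A}p\cdot u\circ f$ is unchanged by adding a statewise-dominated act when $p\ge 0$; (a) and (c) because on singletons $W(f)=\sum_\omega m(\omega)u(f(\omega))$, where $m=\int p\,\eta(dp)$ is the barycenter, which lies in the interior of the simplex.

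For \emph{sufficiency}, the steps are as follows.
\begin{enumerate}
\item[1.] Restrict $\succsim$ to singletons of constant acts. Weak order, vNM continuity and independence give an affine $u:X\to\R$ by the mixture-space theorem.
\item[2.] Show that $A\sim\co(A)$, by the standard DLR argument: $\tfrac12 A+\tfrac12 A\supseteq A$, combined with independence and continuity.
\item[3.] Use Domination(b) to show that $A$ depends only on the utility image $U(A)=\{u\circ f: f\in A\}\subset\R^\Omega$, up to free disposal. Two acts with the same utility vector dominate each other, and adding a statewise-dominated act leaves the menu indifferent. Hence $A$ is indifferent to any menu whose utility image has the same comprehensive convex hull.
\item[4.] Identify such a hull with its support function $\sigma_A(p)=\max_{f\in A}p\cdot u\circ f$ restricted to $p\in\Delta(\Omega)$. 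Only nonnegative directions matter because of comprehensiveness, and the positive homogeneity of support functions lets me normalize to the simplex.
\item[5.] Observe that $A\mapsto\sigma_A$ is mixture-linear, injective on these equivalence classes, and Lipschitz from $d_h$ to the sup norm on $C(\Delta(\Omega))$.
\item[6.] Transport $\succsim$ to the convex set $\Sigma=\{\sigma_A\}$. By independence and vNM continuity there is a mixture-linear $W$ on $\Sigma$ representing $\succsim$.
\item[7.] Extend $W$ to the linear span of $\Sigma$ in $C(\Delta(\Omega))$, and use L-continuity exactly as in DLRS to show the extension is a bounded linear functional.
\item[8.] Apply the Riesz representation theorem to obtain a finite signed Borel measure $\eta$ with $W(A)=\int\sigma_A\,d\eta$.
\end{enumerate}

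The step I expect to be the main obstacle is the L-continuity extension in steps 7--8. The difficulty is that $\Sigma$ consists only of support functions of comprehensive sets with bounded utility range. I must check two things. First, differences of elements of $\Sigma$ are dense in $C(\Delta(\Omega))$: support functions of convex sets span a dense subspace, by Stone--Weierstrass-type arguments as in DLR, and I would adapt their lemma to the simplex instead of the sphere. Second, the Lipschitz bound from the axiom transfers to the sup norm on that span. I would first try to mimic the DLRS proof verbatim, replacing their normalized sphere by $\Delta(\Omega)$ and their lottery vectors by utility vectors in $[\underline u,\bar u]^\Omega$. One has to ensure that the menus $A^*,A_*$ can be taken to be constant acts; if not, I would work with the convex-combination translates as DLRS do.

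Finally, I would normalize the measure. For a constant act $x$, $\sigma_x\equiv u(x)$, so $W(x)=u(x)\,\eta(\Delta(\Omega))$. If some $f\succ g$ exists, Domination(a),(c) forces nontrivial prize rankings, so $\eta(\Delta(\Omega))>0$ and I can rescale $\eta$ (and $u$) so that $\eta(\Delta(\Omega))=1$. For the barycenter $m=\int p\,\eta(dp)$, I would compare singletons: $f\succsim g$ iff $\sum_\omega m(\omega)[u(f(\omega))-u(g(\omega))]\ge0$. Taking $f,g$ that differ only in state $\omega^*$ with $f(\omega^*)\succ g(\omega^*)$, clause (c) yields $m(\omega^*)>0$. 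Since this holds for every $\omega^*$ and $m$ sums to one, $m$ lies in the interior of $\Delta(\Omega)$.
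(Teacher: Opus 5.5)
Your plan is correct, but it is not the route the paper takes.

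\emph{The paper's route.} It never reruns the DLR--DLRS functional analysis. Following DLST, it lifts utility vectors to canonical acts and embeds $[0,1]^\Omega$ into a lottery simplex with an artificial extra prize. It extends the transported preference to all lottery menus by shrinking toward the uniform lottery, and applies the DLR--DLRS additive theorem as a black box to get a signed measure on the sphere $\mathcal U_0$. Only afterwards does it invoke Domination, through the identity $h_{C^\downarrow}(r)=h_C(r^+)$ and a weighted pushforward $a\mapsto r(a)^+/s(a)$, to move the measure onto $\Delta(\Omega)$.

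\emph{Your route.} You reverse the order. Domination (comprehensiveness) first cuts the relevant directions down to the nonnegative orthant, and you then redo DLRS natively on $\Delta(\Omega)$. This saves the embedding, the extension off $P$, and the pushforward. The cost is that you must re-verify what the black box provides.

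\emph{The step you flag does hold.} If $C$ is convex and downward closed within $[\underline u,\bar u]^\Omega$, then $h_C(r)=h_C(r^+)-\underline u\sum_\omega r_\omega^-$. Hence $d_h(C,D)\le\sqrt{|\Omega|}\,\sup_{p\in\Delta(\Omega)}|h_C(p)-h_D(p)|$, and the DLRS Lipschitz bound transfers to the sup norm on $\Delta(\Omega)$. The map $t(\sigma_A-\sigma_B)\mapsto t[W(A)-W(B)]$ is then a well-defined bounded linear functional on $\operatorname{span}(\Sigma-\Sigma)$, and Hahn--Banach plus Riesz give $\eta$. Density of that span in $C(\Delta(\Omega))$ is needed only for uniqueness, not existence. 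It does hold, e.g.\ via the paper's lemma writing nonnegative piecewise-affine functions as scaled support-function differences.

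\emph{Two ordering details.} First, your Step 3 needs Hausdorff continuity to add an entire comprehensive hull (add a countable dense subset, then pass to the limit). So obtain the Lipschitz property (DLRS Lemma S8) before Step 3. Second, dispose of the lottery-trivial case separately; there Domination(b) plus continuity makes $\succsim$ constant.

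\emph{What each route buys.} The paper's route reuses the delicate continuity and density lemmas on $\mathcal U_0$ verbatim. It also cleanly separates the roles of clauses (a)--(c), in Proposition S1 and its corollary. Yours gives a self-contained argument in which $\Delta(\Omega)$ appears directly as the set of supporting directions.
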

The proof of Proposition \ref{prop:additive-foundation} is relegated to the supplementary appendix. It closely follows the DLR--DLRS characterization and relies heavily on its intermediate results. \citet[Proposition 1]{Pennesi2026} provides a closely related result. Our proof instead uses the geometric translation technique introduced in DLST and clarifies the role of each clause of domination.

In light of Proposition 1, the necessity of Axioms 1 to 4 follows from verifying that every scrutiny representation is an additive EU representation, and we explicitly verify that a scrutiny representation induces a preference satisfying conservatism.

To show sufficiency, we first obtain an additive EU representation $(u,\eta)$ through Axioms 1 to 4. We then show that conservatism dictates that the negative weights of $\eta$ must be concentrated on the set of degenerate posteriors on $\Omega$, which allows us to construct $(\mu,K)$, where $\mu$ is a distribution over posteriors obtained as a combination of $\eta$ and the distribution over degenerate posteriors, with the weight depending on $K$.

\subsection{Uniqueness of the Representation}
\label{subsec:menu-identification}

We now study the extent to which the primitives of a scrutiny representation are identified from menu preference. We begin by establishing uniqueness of the underlying additive EU representation. To do so, we first rule out the degenerate case in which the DM is indifferent between all menus. We say that a preference $\succsim$ on $\Menus$ is \textit{constant} if $A\sim B$ for any $A,B\in\Menus$. A preference on $\Menus$ is \textit{nonconstant} if it is not constant. We reserve the terms ``trivial'' and ``nontrivial'' for a more restricted meaning.

\begin{axiom}[Lottery Nontriviality]
\label{axiom:lotterynontriviality}
There exists $x,y\in X$ with $\{x\}\succ\{y\}$.
\end{axiom}

Lottery nontriviality is closely related to nonconstancy of $\succsim$.
\begin{lemma}
\label{lem:lottery-nontriviality-equivalence}
Suppose $\succsim$ has an additive EU representation. Then $\succsim$ is nonconstant if and only if it satisfies lottery nontriviality.
\end{lemma}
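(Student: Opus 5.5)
We prove the two directions separately, working throughout with a fixed additive EU representation $(u,\eta)$ of $\succsim$ and its value function $W$.

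The direction ``lottery nontriviality implies nonconstancy'' is immediate. If $\{x\}\succ\{y\}$, then the two singleton menus $\{x\}$ and $\{y\}$ are not indifferent, so $\succsim$ is nonconstant.

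For the converse, we argue by contraposition: assume $\{x\}\sim\{y\}$ for all $x,y\in X$ and show that all menus are indifferent.

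First, for a constant act $x$ we have $\sum_\omega p(\omega)u(x)=u(x)$ at every posterior $p$. Since $\eta(\Delta(\Omega))=1$, this gives $W(\{x\})=u(x)$. So indifference among all constant singletons forces $u$ to be constant on $X$, say $u\equiv c$.

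Second, with $u\equiv c$, every act $f$ satisfies $\sum_\omega p(\omega)u(f(\omega))=c$ at every $p$. Hence, for every menu $A$, the inner maximum equals $c$ at every $p$, and
\[
W(A)=\int_{\Delta(\Omega)} c\,\eta(dp)=c\,\eta(\Delta(\Omega))=c.
\]
Therefore $A\sim B$ for all $A,B\in\Menus$, i.e.\ $\succsim$ is constant.

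The argument has no real obstacle. The one point that needs care is the normalization $\eta(\Delta(\Omega))=1$, which is built into the definition of an additive EU representation. It is what makes $W(\{x\})=u(x)$ even though $\eta$ is a signed measure, and it ensures $W$ collapses to a constant once $u$ is constant.
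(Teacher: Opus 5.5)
Your proof is correct and follows essentially the same route as the paper's: the forward direction is immediate, and for the converse you use $\eta(\Delta(\Omega))=1$ to get $W(\{x\})=u(x)$, so that failure of lottery nontriviality forces $u\equiv c$ and hence $W(A)=c$ for every menu. The paper's argument is the same two steps, stated via the support function $\sigma_A^u(p)=c$.
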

See \Cref{app:proof-lottery-nontriviality-equivalence} for the proof.

For two real-valued functions $u,v:Y\to\R$, we write $u\approx v$ to denote that there exists $a>0$ and $b\in\R$ such that $v=au+b$. 

\begin{proposition}
\label{prop:additive-uniqueness}
Suppose $\succsim$ satisfies Lottery Nontriviality. Two additive EU representations  $(u,\eta)$ and $(u',\eta')$  both represent $\succsim$ if and only if $u\approx u'$ and $\eta=\eta'$.
\end{proposition}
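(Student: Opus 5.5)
The "if" direction is immediate. If $u'=au+b$ with $a>0$ and $\eta'=\eta$, then for every menu $A$ and posterior $p$,
\[
\max_{f\in A}\sum_\omega p(\omega)u'(f(\omega))=a\max_{f\in A}\sum_\omega p(\omega)u(f(\omega))+b .
\]
Integrating against $\eta$ and using $\eta(\Delta(\Omega))=1$ gives $W'=aW+b$, which represents the same preference.

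For the "only if" direction, I first identify $u$. Restricted to singleton constant acts $x\in X$, an additive EU representation gives $W(\{x\})=u(x)\,\eta(\Delta(\Omega))=u(x)$. So $u$ and $u'$ are both affine representations of the same preference over $X$. Lottery Nontriviality makes that preference nondegenerate, so the vNM uniqueness theorem yields $u'=au+b$ with $a>0$. By the "if" direction I may then replace $u'$ by $u$ without loss of generality.

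Next, $W$ and $W'$ (now both built from $u$) are two functions on $\Menus$ that represent $\succsim$. Each is affine under menu mixtures, since $\max$ of a linear functional over $\alpha A+(1-\alpha)B$ splits additively. Each is continuous in $d_h$. The domain $\Menus$ is a mixture space, and $\succsim$ is nonconstant. So the mixture-space uniqueness argument applies: two affine representations of the same nonconstant preference on a mixture space are positive affine transformations of each other. Hence $W'=cW+d$ with $c>0$.

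Evaluating at constant acts gives $u(x)=cu(x)+d$ for all $x$. Because $u$ is nonconstant by Lottery Nontriviality, this forces $c=1$ and $d=0$, so $W=W'$. Then $\int \sigma_A\,d(\eta-\eta')=0$ for every $A$, where
\[
\sigma_A(p):=\max_{f\in A}\sum_\omega p(\omega)u(f(\omega)).
\]

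The main step is to show that this forces $\eta=\eta'$, which is a density argument. As in the proof of Proposition \ref{prop:additive-foundation} (the DLST geometric translation), the functions $\sigma_A$ range over support functions of compact convex sets. These are $\max_{v\in C}\langle p,v\rangle$ for compact convex $C\subset u(X)^\Omega$, a box with nonempty interior. Because support functions are positively homogeneous and translation-invariant up to adding a linear function, one can rescale and shift $C$ to cover all compact convex subsets of $\R^\Omega$. Translating by a constant vector adds a constant on the simplex. Adding $\langle p,w\rangle$ integrates to $\langle \text{barycenter},w\rangle$, which is the same for both measures only once their barycenters agree; but since every shifted set is itself an admissible $\sigma_{A}$ after rescaling into the box, linearity gives $\int \sigma\,d(\eta-\eta')=0$ for all support functions $\sigma$ of compact convex sets. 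Differences of such functions, restricted to $\Delta(\Omega)$, are dense in $C(\Delta(\Omega))$ in the sup norm. This is the standard DLR step: differences of convex functions are dense by Stone--Weierstrass, and every convex positively homogeneous continuous function is a support function. Since $\eta-\eta'$ is a finite signed Borel measure annihilating a dense subspace of $C(\Delta(\Omega))$, the Riesz representation theorem gives $\eta=\eta'$. I expect this density/rescaling bookkeeping to be the only delicate part, and would cite the corresponding lemma from DLR/DLRS as used in the proof of Proposition \ref{prop:additive-foundation}.
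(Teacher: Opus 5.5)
Your proof is correct in substance, but its last step takes a different route from the paper's. The paper never runs a density argument on $\DeltaO$. It transports each $\eta_i$ to a DLRS measure $\lambda_i$ on the sphere $\mathcal U_0$ of normalized utilities over $m+1$ artificial prizes: $\lambda_i$ is the pushforward of $m\kappa(p)\,\eta_i(dp)$ under $\Phi(p)=z(p)/\kappa(p)$. It then shows that the induced lottery-menu indices $\overline V_1,\overline V_2$ coincide, and invokes the DLRS canonical-measure uniqueness (Lemmas S10--S12 of their supplement) to get $\lambda_1=\lambda_2$. Finally it recovers $\eta_1=\eta_2$ through the weighted pushforward from the existence proof.

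You instead stay on the simplex. Rescaling by $a>0$ and shifting by $b\one$ places any compact convex set in the utility box. The shift contributes only $b[\eta(\DeltaO)-\eta'(\DeltaO)]=0$, so your barycenter aside is unnecessary. Density in $C(\DeltaO)$ plus Riesz uniqueness then finishes the argument. The paper's route reuses an off-the-shelf uniqueness theorem and keeps the supplement parallel to the existence proof (same $J$, $r$, $s$, $P$). Yours is shorter and self-contained, and is essentially the argument the paper itself uses for \Cref{lem:zero-scrutiny}.

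Two steps need repair, though neither is a genuine gap.

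First, $\Menus$ under Minkowski mixtures is not literally a mixture space. The identity $\alpha(\beta A+(1-\beta)B)+(1-\alpha)B=\alpha\beta A+(1-\alpha\beta)B$ can fail for nonconvex $B$. You can fix this in either of two ways:
\begin{itemize}
\item apply mixture-space uniqueness to convex menus only and extend via $\sigma_A^u=\sigma_{\co A}^u$, as the paper does; or
\item bypass it: since $W(\{x\})=W'(\{x\})=u(x)$, calibrate each menu $A$ against constant acts using $W(\alpha A+(1-\alpha)\{x\})=\alpha W(A)+(1-\alpha)u(x)$. This yields $W=W'$ directly, even when $W(A)\notin u(X)$.
\end{itemize}

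Second, your density justification does not cover arbitrary convex continuous functions on $\DeltaO$. You appeal to ``every convex positively homogeneous continuous function is a support function,'' but $p\mapsto-\sqrt{p(\omega)}$ is convex and not Lipschitz. It is therefore not the restriction of the support function of any compact set. Use a dense class instead: continuous piecewise-affine functions are dense in $C(\DeltaO)$. Each is a difference of two maxima of finitely many functionals $p\mapsto p\cdot v_i$ (Facts 1--3 in the proof of \Cref{lem:support-difference}). That is, each is a difference of support functions of finite sets, which your rescaling already covers.
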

The proof is relegated to the supplementary appendix. Thus, for a nonconstant scrutiny preference, menu preference uniquely identifies the utility index, up to the usual positive affine transformation, and the signed measure $\eta$. This, however, does not yet imply uniqueness of the scrutiny representation. The same signed measure may correspond to different combinations of private information $\mu$ and scrutiny intensity $K$.

To understand this remaining indeterminacy, we first ask a simpler question: When can the observed menu preference be represented without scrutiny at all? The answer is closely related to preference for flexibility.

\begin{axiom}[Preference for Flexibility]
\label{axiom:flexibility}
If $A\supseteq B$, then $A\succsim B$.
\end{axiom}

In the following lemma, we establish that a DM with a scrutiny preference that also satisfies preference for flexibility behaves as if she anticipates zero scrutiny.

\begin{lemma}
\label{lem:zero-scrutiny}
A scrutiny preference $\succsim$ satisfies preference for flexibility if and only if $\succsim$ has a scrutiny representation $(u,\mu,0)$.
\end{lemma}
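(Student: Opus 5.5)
The "if" direction is immediate. With $K=0$, the reduced form \eqref{eq:scrutiny-reduced} gives
\[
V(A)=\int_{\Delta(\Omega)}\max_{f\in A}\sum_{\omega\in\Omega}p(\omega)u(f(\omega))\,\mu(dp).
\]
Because $\mu$ is a nonnegative measure and the inner maximum is weakly larger on a superset, $A\supseteq B$ implies $V(A)\ge V(B)$. That is preference for flexibility.

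For the "only if" direction, I would work with the additive EU representation underlying the scrutiny representation. Take any scrutiny representation $(u,\mu,K)$ of $\succsim$. By \eqref{eq:scrutiny-reduced}, $V=W$ for the signed measure
\[
\eta=(1+K)\mu-K\sum_{\omega\in\Omega}\pi(\omega)\delta_{e_\omega},
\]
where $e_\omega$ is the degenerate posterior at $\omega$. This $\eta$ has total mass $1$ and barycenter $\pi$, so $(u,\eta)$ is an additive EU representation.

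The constant case is handled separately. If $\succsim$ is constant, then $u$ must be constant: otherwise comparing two constant lottery acts would give a strict ranking, since $V(x)=u(x)$. Then $(u,\mu,0)$ also represents $\succsim$, and we are done. So assume $\succsim$ is nonconstant. By \Cref{lem:lottery-nontriviality-equivalence}, Lottery Nontriviality holds, and by \Cref{prop:additive-uniqueness}, $\eta$ is uniquely determined.

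The key step is to show that preference for flexibility forces $\eta\ge0$. Then $\mu':=\eta$ is a probability measure with barycenter $\pi$ in the interior, and $(u,\mu',0)$ is a scrutiny representation, since its $V$ equals $W$.

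Suppose instead that $\eta$ has a nontrivial negative part. Take the Jordan decomposition $\eta=\eta^+-\eta^-$. Note that the functions $p\mapsto\max_{f\in A}p\cdot u(f)$ are exactly the support functions of the convex compact sets $\co(u\circ A)\subseteq u(X)^\Omega$; after translating and shrinking, these range over support functions of sets in a small box. The argument then runs in three steps.

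\begin{enumerate}
\item \emph{Localize the negative mass.} Using a Hahn set and regularity, find a compact set $C\subseteq\Delta(\Omega)$ with $\eta^-(C)>0$ and $\eta^+(C)$ small. Take an open neighborhood $U$ of $C$ with $\eta^+(U)<\eta^-(C)$.
\item \emph{Build a menu $B$.} Choose a polytope menu $B$ whose support function is strictly larger on $C$ than on some fixed base menu. Starting from a base menu $A_0$ and a point $q$, set $B=A_0\cup\{g\}$, where $g$ is a single act. Then
\[
W(B)-W(A_0)=\int\bigl(p\cdot u(g)-\max_{A_0}p\cdot u\bigr)^+\,d\eta .
\]
\item \emph{Make the integrand hit the negative region.} The integrand is a nonnegative "hinge" function supported on the half-space region where $g$ beats $A_0$. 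The task is to pick $A_0$ and $g$ so that this hinge is concentrated near $C$. Then the difference is negative, so $A_0\succ B$ despite $B\supseteq A_0$, contradicting preference for flexibility.
\end{enumerate}

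The main obstacle is the third step. Hinge functions of a single added act are supported on the intersection of half-spaces, which need not localize to an arbitrary small set. I would handle this by following the DLR/DLRS approach. Differences of support functions of nested menus $A\subseteq B$ are exactly the nonnegative functions of the form $\sigma_B-\sigma_A$. These span a dense class, and preference for flexibility says $\int(\sigma_B-\sigma_A)\,d\eta\ge0$ for all of them. Choose $A$ to be a fine polytope approximating a ball, and let $B$ add a single vertex in direction $q\in C$. This makes $\sigma_B-\sigma_A$ a small bump concentrated on posteriors near $q$ (a cap). A Lebesgue-differentiation and covering argument over such caps then shows that $\eta$ is nonnegative on every cap, and hence $\eta\ge0$.

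Once $\eta\ge0$ is established, setting $\mu'=\eta$ gives the scrutiny representation $(u,\mu',0)$ and completes the proof.
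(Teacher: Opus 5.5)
\textbf{Verdict: there is a genuine gap in the key step.} Your overall architecture matches the paper's: the ``if'' direction, the constant case, and reducing the ``only if'' direction to showing that $\eta=(1+K)\mu-K\nu_\pi$ is a positive measure (after which $(u,\eta,0)$ is a scrutiny representation) are all correct. The appeal to \Cref{prop:additive-uniqueness} is harmless but unnecessary, since you only need positivity of the particular $\eta$ built from $(u,\mu,K)$. The gap is that the step ``preference for flexibility forces $\eta\ge 0$'' is asserted rather than proved. Preference for flexibility only gives $\int(\sigma_B^u-\sigma_A^u)\,d\eta\ge 0$ for nested menus. Saying that these differences ``span a dense class'' is the wrong property: density of the linear span of a family of nonnegative functions says nothing about the sign of $\eta$. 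What you need is that the convex cone they generate is uniformly dense in the nonnegative continuous functions.

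Your cap construction does not deliver this. Ball-plus-one-point menus produce tent-shaped hinges supported on caps, not indicators of caps. So ``$\eta$ is nonnegative on every cap'' does not follow from nonnegativity of these integrals. Each tent is bounded below only on an inner cap, so making $\int h\,d\eta<0$ requires comparing $\eta^-$ of the inner cap with $\eta^+$ of the whole cap. That is a doubling-type control of $|\eta|$, which a general finite signed measure does not provide for free. The route is repairable, for instance by an approximate-identity and Fubini argument over the cap centers, but the ``Lebesgue-differentiation and covering argument'' as written does not establish the conclusion.

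\textbf{The missing idea} is \Cref{lem:support-difference}, and it makes localization unnecessary. Every nonnegative continuous piecewise-affine $h$ on $\Delta(\Omega)$ equals $a^{-1}(\sigma_A^u-\sigma_B^u)$ for some finite menus $B\subseteq A$ and some $a>0$. The construction is as follows. Write $h=\phi-\psi$ with $\phi,\psi$ convex continuous piecewise-affine, and express each as a maximum of finitely many affine maps $p\mapsto p\cdot v$. Rescale by $a>0$ and shift by $b$ so that all coordinates lie in $u(X)$, realize the resulting vectors as acts forming menus $C$ (for $\phi$) and $B$ (for $\psi$), and set $A=B\cup C$. Because $\phi\ge\psi$, you get $\sigma_A^u=a\phi+b$ and $\sigma_B^u=a\psi+b$. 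Your single-act hinge obstacle disappears because $A\setminus B$ may contain many acts.

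Preference for flexibility then gives $\int h\,d\eta\ge 0$ for all such $h$. Nonnegative piecewise-affine functions are uniformly dense in the nonnegative continuous functions (use piecewise-linear interpolation on fine triangulations), and $|\eta|$ is finite. Hence $\int g\,d\eta\ge 0$ for every continuous $g\ge 0$, and $\eta\ge 0$ by the Riesz representation theorem.

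Equivalently, your own Step 1 closes at once with this lemma. Take a continuous $g$ with $\mathbf 1_C\le g\le \mathbf 1_U$, so that $\int g\,d\eta\le \eta^+(U)-\eta^-(C)<0$. Approximate $g$ uniformly by such an $h$ with $\int h\,d\eta<0$. The corresponding nested menus $B\subseteq A$ then satisfy $B\succ A$, contradicting preference for flexibility.
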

See \Cref{app:proof-zero-scrutiny} for a proof.

In light of Lemma \ref{lem:zero-scrutiny}, we say that a scrutiny preference $\succsim$ is \textit{trivial} if it satisfies preference for flexibility. We say that a scrutiny preference $\succsim$ is \textit{nontrivial} if it is not trivial. Since violation of preference for flexibility requires the existence of some $B\subseteq A$ satisfying $B\succ A$, a nontrivial scrutiny preference must also be nonconstant.

The lemma identifies exactly when zero scrutiny is consistent with the observed menu preference. It does not, however, imply that scrutiny intensity is uniquely identified as zero. A preference satisfying preference for flexibility may also admit a scrutiny representation $(u',\mu',K')$ with $K'>0$. Thus, even after the additive EU representation is uniquely identified, its decomposition into private information and scrutiny intensity need not be unique. 

We now characterize the full extent of this nonuniqueness. Although menu preference need not select a unique pair $(\mu,K)$, the admissible pairs turn out to have a particularly simple structure. For a scrutiny preference $\succsim$, define $\mathcal P(\succsim)$ by
$$\mathcal P(\succsim):=\{(\mu,K)\mid \text{there exists $u$ such that }(u,\mu,K)\text{ can represent $\succsim$}\}.$$
That is, $\mathcal P(\succsim)$ collects all information-intensity pairs that can be part of a scrutiny representation for $\succsim$. For two Bayes-plausible posterior distributions with a common prior, we say that $\mu$ is \emph{Blackwell more informative} than $\lambda$ if $\int\varphi(p)\,\mu(dp)\geq\int\varphi(p)\,\lambda(dp)$ for every continuous convex function $\varphi:\Delta(\Omega)\to\R$. 

Our next lemma establishes a minimum of $\mathcal P(\succsim)$ for lottery-nontrivial $\succsim$.

\begin{lemma}
\label{lem:minimum-pair}
Let $\succsim$ be a nonconstant scrutiny preference. Then there exists $\underline\mu(\succsim)\in\Delta(\Delta(\Omega))$ and $\underline K(\succsim)\geq 0$ such that:
\begin{enumerate}
	\item $(\underline\mu(\succsim),\underline K(\succsim))\in\mathcal P(\succsim)$; and
	\item For any $(\mu,K)\in\mathcal P(\succsim)$, $\mu$ is Blackwell more informative than $\underline\mu(\succsim)$ and $K\geq\underline K(\succsim)$.
\end{enumerate} 
Moreover, $\underline K(\succsim)>0$ if and only if $\succsim$ is nontrivial.
\end{lemma}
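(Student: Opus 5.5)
The plan is to reduce everything to the uniqueness of the additive EU representation, \Cref{prop:additive-uniqueness}. For $\omega\in\Omega$, write $\delta_\omega\in\Delta(\Omega)$ for the degenerate posterior on $\omega$. Given any prior $\pi$, define the \emph{full-information} distribution $\nu_\pi:=\sum_{\omega}\pi(\omega)\,\mathbf{1}_{\{\delta_\omega\}}$, i.e.\ the measure on $\Delta(\Omega)$ putting mass $\pi(\omega)$ on the posterior $\delta_\omega$.

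Rewriting the hindsight term of \eqref{eq:scrutiny-reduced} as an integral against $\nu_\pi$ shows that a scrutiny representation $(u,\mu,K)$ is exactly the additive EU representation $(u,\eta)$ with
\[
\eta=(1+K)\mu-K\nu_\pi,\qquad \pi=\int p\,\mu(dp).
\]
This $\eta$ has total mass $1$ and barycenter $(1+K)\pi-K\pi=\pi$.

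Since $\succsim$ is nonconstant, \Cref{lem:lottery-nontriviality-equivalence} gives lottery nontriviality. \Cref{prop:additive-uniqueness} then says every representation of $\succsim$ shares the same signed measure $\eta$, and $u$ is pinned down up to $\approx$, which does not affect $(\mu,K)$. Consequently the prior $\pi$, being the barycenter of $\eta$, is common to all elements of $\mathcal P(\succsim)$. Solving for $\mu$ gives
\[
\mathcal P(\succsim)=\Bigl\{\bigl(\tfrac{\eta+K\nu_\pi}{1+K},K\bigr): K\ge 0,\ \eta+K\nu_\pi\ge 0\Bigr\}.
\]
Total mass and Bayes plausibility are automatic, so the only constraint is nonnegativity.

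By the sufficiency argument of \Cref{thm:characterization} (or directly, since $\mathcal P(\succsim)\neq\emptyset$ and $\nu_\pi$ is supported on $\{\delta_\omega\}_\omega$), the negative part $\eta^-$ is concentrated on the finitely many degenerate posteriors. Hence nonnegativity of $\eta+K\nu_\pi$ is equivalent to $K\ge \eta^-(\{\delta_\omega\})/\pi(\omega)$ for every $\omega$. I define
\[
\underline K:=\max_{\omega}\frac{\eta^-(\{\delta_\omega\})}{\pi(\omega)}\ge 0,\qquad \underline\mu:=\frac{\eta+\underline K\nu_\pi}{1+\underline K}.
\]
This uses $\pi(\omega)>0$, which holds because $\pi$ is interior. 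Part (a) is then immediate.

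For part (b), any $(\mu,K)\in\mathcal P(\succsim)$ has $K\ge\underline K$, and algebra gives the mixture
\[
\mu=\tfrac{1+\underline K}{1+K}\,\underline\mu+\tfrac{K-\underline K}{1+K}\,\nu_\pi.
\]
To get Blackwell dominance, fix a continuous convex $\varphi$. Convexity and $p=\sum_\omega p(\omega)\delta_\omega$ give $\varphi(p)\le\sum_\omega p(\omega)\varphi(\delta_\omega)$. Integrating against $\underline\mu$ yields $\int\varphi\,d\underline\mu\le\sum_\omega\pi(\omega)\varphi(\delta_\omega)=\int\varphi\,d\nu_\pi$. So $\int\varphi\,d\mu$, a convex combination of these two numbers, is at least $\int\varphi\,d\underline\mu$.

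For the final claim, first suppose $\underline K=0$. Then $(\underline\mu,0)\in\mathcal P(\succsim)$, so \Cref{lem:zero-scrutiny} gives preference for flexibility, i.e.\ triviality. Conversely, if $\succsim$ is trivial, \Cref{lem:zero-scrutiny} gives some $(\mu,0)\in\mathcal P(\succsim)$, so $\underline K\le 0$ and hence $\underline K=0$.

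The step needing the most care is the claim that $\eta^-$ lives only on degenerate posteriors. I would derive it from the displayed characterization of $\mathcal P(\succsim)$: nonemptiness gives some $K$ with $\eta+K\nu_\pi\ge0$, and $\nu_\pi$ vanishes off the atoms $\{\delta_\omega\}$. This avoids re-entering the proof of \Cref{thm:characterization}. I would also check that the atoms $\{\delta_\omega\}$ are Borel sets, so that $\eta^-(\{\delta_\omega\})$ is well defined.
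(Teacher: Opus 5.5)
Your proof is correct and follows the paper's argument essentially step for step: you get uniqueness of $\eta$ from \Cref{prop:additive-uniqueness}, use the same formulas for $\underline K$ and $\underline\mu$ and the same mixture identity for Blackwell dominance, and invoke \Cref{lem:zero-scrutiny} for the final claim. The differences are minor tightenings. You read off that $\eta$ is nonnegative away from the degenerate posteriors directly from $\eta=(1+K_0)\mu_0-K_0\nu_\pi$ for any one representation, rather than citing the proof of \Cref{thm:characterization}, and you spell out the Jensen step $\varphi(p)\le\sum_\omega p(\omega)\varphi(\delta_\omega)$ that the paper only asserts.
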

See \Cref{app:proof-minimum-pair} for a proof.

The lemma shows that, among all scrutiny representations of a nonconstant preference, there is a least informative information structure and a smallest scrutiny intensity. Any other representation must involve private information that is weakly more informative than $\underline\mu(\succsim)$ and scrutiny intensity weakly greater than $\underline K(\succsim)$. Intuitively, more informative private information raises the value of informed choice toward the perfect-hindsight benchmark, making greater scrutiny compatible with the same observed menu preference.

The minimum pair is unique. To see this, suppose both $(\mu,K)$ and $(\mu',K')$ satisfy the two properties in \Cref{lem:minimum-pair}. Applying the second property in both directions implies that $\mu$ and $\mu'$ are Blackwell equivalent\footnote{Because information is represented as distributions over posteriors with a common prior, Blackwell dominance coincides with convex order. Hence mutual Blackwell dominance implies $\mu=\mu'$.} and that $K=K'$. Hence, each nonconstant scrutiny preference $\succsim$ has a unique minimum pair, denoted by
$(\underline\mu(\succsim),\underline K(\succsim))$. We can therefore associate every nonconstant scrutiny preference with a canonical scrutiny representation.

\begin{definition}
A \textbf{minimum representation} for a nonconstant scrutiny preference $\succsim$ is a tuple $(u,\mu,K)$ such that $(u,\mu,K)$ represents $\succsim$ and $(\mu,K)=(\underline\mu(\succsim),\underline K(\succsim))$.
\end{definition}

The minimum representation is unique up to the positive affine normalization of utility.

\begin{theorem}
	\label{thm:menu-uniqueness}
	Suppose $(u,\mu,K)$ and $(u',\mu',K')$ are minimum representations. They represent the same nonconstant scrutiny preference if and only if $u\approx u'$, $\mu=\mu'$, and $K=K'$.
\end{theorem}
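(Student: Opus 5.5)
The plan is to handle the two directions separately; the ``only if'' direction carries essentially all of the content.

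\emph{If direction.} Suppose $u\approx u'$, $\mu=\mu'$ and $K=K'$, say $u'=au+b$ with $a>0$. Substituting into the reduced form \eqref{eq:scrutiny-reduced} gives $V'(A)=aV(A)+b$. Indeed, the constant $b$ contributes $(1+K)b-Kb=b$, because $\mu$ and $\pi$ are probability measures. Hence $V'$ is a positive affine transformation of $V$, and both tuples represent the same preference.

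\emph{Only if direction.} Suppose $(u,\mu,K)$ and $(u',\mu',K')$ are minimum representations of the same nonconstant scrutiny preference $\succsim$. By definition of a minimum representation,
\[
(\mu,K)=(\underline\mu(\succsim),\underline K(\succsim))=(\mu',K').
\]
This uses the uniqueness of the minimum pair argued after \Cref{lem:minimum-pair}: mutual Blackwell dominance among Bayes-plausible distributions with a common prior forces equality, since both are in convex order. So $\mu=\mu'$ and $K=K'$ follow immediately.

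It remains to show $u\approx u'$. Every scrutiny representation is an additive EU representation. From \eqref{eq:scrutiny-reduced}, the signed measure is
\[
\eta=(1+K)\mu-K\sum_{\omega\in\Omega}\pi(\omega)\delta_{\delta_\omega},
\]
where $\delta_\omega$ is the degenerate posterior at $\omega$. This $\eta$ has total mass $1$ and barycenter $\pi$, which lies in the interior of $\Delta(\Omega)$. So $(u,\eta)$ and $(u',\eta')$ are additive EU representations of $\succsim$. By \Cref{lem:lottery-nontriviality-equivalence}, nonconstancy implies lottery nontriviality. Then \Cref{prop:additive-uniqueness} yields $u\approx u'$, and also $\eta=\eta'$, which is consistent with the conclusion just obtained.

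\emph{Main obstacle.} There is no substantive obstacle: the hard work is already done in \Cref{lem:minimum-pair} and \Cref{prop:additive-uniqueness}. The only point needing care is to confirm that the induced $\eta$ meets the definition of an additive EU representation, namely unit total mass and interior barycenter. Both hold because $\mu$ is Bayes plausible for the full-support prior $\pi$.
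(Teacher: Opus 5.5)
Your proof is correct and follows essentially the same route as the paper: the ``if'' direction is the same positive-affine-transformation computation, and for ``only if'' both arguments obtain $u\approx u'$ from \Cref{prop:additive-uniqueness} after noting that nonconstancy gives lottery nontriviality. The only difference is that you read $(\mu,K)=(\mu',K')$ directly off the well-definedness of the minimum pair, whereas the paper recovers it from $\eta=\eta'$ via the explicit formulas for $\underline K$ and $\underline\mu$ in the proof of \Cref{lem:minimum-pair}; these rest on the same underlying fact.
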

See \Cref{app:proof-menu-uniqueness} for a proof.

\subsection{Comparative Scrutiny Aversion}
\label{subsec:comparative-statics}

A natural way to compare scrutiny aversion across DMs is to compare their willingness to commit. Greater scrutiny makes menus less attractive relative to singleton acts, so a more scrutiny-averse DM should be more willing to give up flexibility. Commitment behavior alone, however, does not isolate scrutiny intensity. A DM may also value a menu less because her private information is less useful for choosing among its acts. A behavioral comparison of scrutiny aversion must therefore both order willingness to commit and control for differences in private information.

We focus on two nontrivial scrutiny preferences $\succsim_1$ and $\succsim_2$. Recall the notion of statewise domination between menus in \Cref{def:statewisedomination}: A menu $A$ is dominated by $B$ under both $\succsim_1$ and $\succsim_2$ if for every $f\in A$ and every $\omega\in\Omega$, there exist $g_1,g_2\in B$ such that $g_1(\omega)\succsim_1f(\omega)$ and $g_2(\omega)\succsim_2f(\omega)$.

\begin{definition}[More Scrutiny Aversion]
\label{def:more-scrutiny}
Let $\succsim_1$, $\succsim_2$ be two nontrivial scrutiny preferences. We say that \(\succsim_1\)  \emph{is more scrutiny averse} than
\(\succsim_2\) if the following conditions hold:
\begin{enumerate}
\item[(a).] For every \(A\in\Menus\) and
every \(f\in\Acts\),
$
\{f\}\succsim_2 A\Longrightarrow
\{f\}\succsim_1 A.
$
\item[(b).] If $A$ and $B$ dominate each other under both $\succsim_1$ and $\succsim_2$, then $A\succsim_1B\iff A\succsim_2B$.
\end{enumerate}
\end{definition}

Condition (a) orders the two DMs by their willingness to commit: whenever DM 2 weakly prefers committing to an act rather than retaining a menu, DM 1 does as well. The singleton-versus-menu comparison used here is a commonly used comparative device in the menu-choice literature.\footnote{See, for example, \citet{Sarver2008}, \citet{DekelLipman2012}, \citet{dillenberger2014theory}, and \citet{deOliveiraDentiMihmOzbek2017}. \citet{DekelLipman2012} discuss the relation of this comparative criterion to \citet{Ahn2008} and \citet{HigashiHyogoTakeoka2009}.} In our setting, it captures the part of comparative scrutiny aversion revealed through willingness to give up flexibility.

Condition (b) serves a different purpose and is unique to our setting. It compares the two DMs only across pairs of menus with the same statewise maximal utility. Once their singleton preferences have been aligned, mutual domination of $A$ and $B$ implies that the two menus generate the same perfect-hindsight benchmark. Their ranking then depends only on the value generated by private information, up to a positive multiplicative factor involving scrutiny intensity. Requiring the two DMs to agree on all such comparisons controls for differences in private information.

\begin{theorem}[Comparative Scrutiny Aversion]
\label{thm:comparative-scrutiny}
Suppose that \(\succsim_i\)  has a minimum representation $(u_i, \mu_i, K_i)$ and $K_i>0$ for $i=1,2$. The following statements are equivalent:
\begin{enumerate}
\item[(1).] \(\succsim_1\) is more scrutiny averse than \(\succsim_2\).
\item[(2).] $u_1\approx u_2$, $\mu_1=\mu_2$, and $K_1\geq K_2$.
\end{enumerate}
\end{theorem}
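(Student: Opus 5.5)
The plan is to work throughout with the reduced form \eqref{eq:scrutiny-reduced}. Write $I_\mu(A):=\int\max_{f\in A}p\cdot u(f)\,\mu(dp)$ for the informed-choice value and $H_\pi(A):=\sum_\omega\pi(\omega)\max_{f\in A}u(f(\omega))$ for the hindsight value, so that $V_i=(1+K_i)I_{\mu_i}-K_iH_{\pi_i}$. For a singleton, $I_\mu(\{f\})=H_\pi(\{f\})=\pi\cdot u(f)$. By Jensen's inequality (perfect information is the most informative), $H_\pi(A)\geq I_\mu(A)$ for every $A$.

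\emph{(2)$\Rightarrow$(1).} Normalize $u_1=u_2=u$, so the two representations share $\mu$, $\pi$ and $u$. Then
\[
V_i(\{f\})-V_i(A)=\bigl[\pi\cdot u(f)-I_\mu(A)\bigr]+K_i\bigl[H_\pi(A)-I_\mu(A)\bigr].
\]
The second bracket is nonnegative and $K_1\geq K_2$, so this difference is weakly larger for $i=1$, which gives condition (a). If $A$ and $B$ dominate each other under the common $u$, then $H_\pi(A)=H_\pi(B)$. Both $V_i$ then rank $A$ and $B$ by $I_\mu$ scaled by $1+K_i>0$, which gives condition (b).

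\emph{(1)$\Rightarrow$(2).} I would proceed in three steps.

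\textbf{Step 1 (common $u$ and $\pi$).} Take $A$ to be a singleton in (a). Then weak preference between singletons under $\succsim_2$ implies weak preference under $\succsim_1$. On singletons each relation is subjective expected utility, $f\mapsto\sum_\omega\pi_i(\omega)u_i(f(\omega))$, which is a linear functional on the affine space of acts. Containment of one weak upper half-space in another forces the two functionals to be positive multiples of each other modulo constants; the multiple is positive because both relations are nonconstant (\Cref{lem:lottery-nontriviality-equivalence}). Restricting to constant acts gives $u_1\approx u_2$, and then comparing state-by-state variations gives $\pi_1=\pi_2$. Normalize so that $u(X)\supseteq[0,1]$.

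\textbf{Step 2 (the interior parts of $\mu_1,\mu_2$ agree up to scale).} Consider the class $\mathcal C$ of menus whose utility image lies in $[0,1]^\Omega$ and contains every vertex $e_\omega$. All members of $\mathcal C$ mutually dominate each other, and $\mathcal C$ is closed under mixtures. By (b), $I_{\mu_1}$ and $I_{\mu_2}$ induce the same ranking on $\mathcal C$. Both are mixture-linear, so by the usual vNM argument $I_{\mu_1}=aI_{\mu_2}+b$ on $\mathcal C$ with $a>0$. On $\mathcal C$ the support function equals $1$ at each degenerate posterior $\delta_\omega$, so vertex masses contribute only constants. The differences of support functions of menus in $\mathcal C$ are dense in the continuous functions vanishing at the vertices; this is the DLST-style density argument used for \Cref{prop:additive-uniqueness}. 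Hence $\mu_1=a\mu_2$ on $\Delta(\Omega)\setminus\{\delta_\omega\}_\omega$.

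\textbf{Step 3 (minimality pins down $a$ and gives $K_1\geq K_2$).} This is where I expect the main difficulty. By \Cref{prop:additive-uniqueness}, each $\eta_i=(1+K_i)\mu_i-K_i\sum_\omega\pi(\omega)\delta_{\delta_\omega}$ is identified. I would first extract from the proof of \Cref{lem:minimum-pair} the explicit form of the minimum pair. In that pair, $\underline K=\max_\omega\bigl(-\eta(\{\delta_\omega\})\bigr)/\pi(\omega)$, and $\underline\mu$ puts zero mass on at least one vertex. Nontriviality makes this maximum strictly positive.

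Next I would apply (a) to carefully chosen test pairs. One family is $A\in\mathcal C$ paired with a constant act $f$ whose utility sits at the indifference threshold of $\succsim_2$. Another family is two-act menus concentrated near a single vertex $\omega$; these isolate vertex masses and the interior total. Condition (a) then yields $(1+K_1)a\leq1+K_2$ or the reverse inequality, together with vertex-by-vertex inequalities. Combined with both $\mu_i$ having total mass $1$ and barycenter $\pi$, and with the zero-vertex-mass normalization from minimality, these inequalities should force $a=1$, identical vertex masses (so $\mu_1=\mu_2$), and $K_1\geq K_2$.

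The delicate point is choosing test menus for (a) that separate the interior scale $a$ from the vertex masses. I would try first with menus $A=\{f,g\}$ where $u(f)$ and $u(g)$ are close to $e_\omega$ and to $\one-e_\omega$.
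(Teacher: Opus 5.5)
Your direction (2)$\Rightarrow$(1) and Step 1 are correct and match the paper. The genuine gap is Step 3: you leave it as a hope, and the route you sketch cannot work. Condition (a) cannot separate the interior scale $a$ from the vertex masses. With common $u$ (normalized so that $u(X)=[0,1]$) and common $\pi$, singleton values coincide. Mixing menus with interior constants then shows that (a) is equivalent to $V_1\le V_2$ on all menus. Write $\Gamma_i:=H-I_{\mu_i}$, so that $V_i=H-(1+K_i)\Gamma_i$. Once $\Gamma_1=a\Gamma_2$, condition (a) reduces to the single scalar inequality $(1+K_1)a\geq 1+K_2$. No choice of test menus will give $a=1$ or any vertex-by-vertex constraint. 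Indeed, non-minimal pairs with $a<1$ and large $K_1$ satisfy (a) and (b).

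What you are missing is simpler, and minimality does the work, not (a):
\begin{itemize}
\item The barycenter constraint alone recovers the vertex masses from the interior part: $\mu_i(\{\delta_\omega\})=\pi(\omega)-\int_{\DeltaO\setminus D(\Omega)}p(\omega)\,\mu_i(dp)$. So your Step 2 already gives $\mu_1=a\mu_2+(1-a)\nu_\pi$, hence $\Gamma_1=a\Gamma_2$.
\item Because $K_i>0$, the maximum defining $\underline K$ in Lemma~\ref{lem:minimum-pair} binds, so each minimum $\mu_i$ has zero mass at some $\delta_\omega$.
\item If $a<1$, then $\mu_1(\{\delta_\omega\})\geq(1-a)\pi(\omega)>0$ at every vertex. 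If $a>1$, then $\mu_2=a^{-1}\mu_1+(1-a^{-1})\nu_\pi$ gives the same contradiction for $\mu_2$. Hence $a=1$ and $\mu_1=\mu_2$.
\end{itemize}
Only now is (a) used, and only once. With the common $\mu\neq\nu_\pi$, some menu $A$ has $\Gamma(A)>0$, so $K_1<K_2$ would give $V_1(A)>V_2(A)$. Take $A_\alpha:=\alpha A+(1-\alpha)\{x\}$ with $u(x)$ interior and $\alpha$ small, and choose a constant act $y$ strictly between the two values. This gives $\{y\}\succ_2 A_\alpha$ but $A_\alpha\succ_1\{y\}$, contradicting (a).

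Step 2 is a legitimate variant of the paper's route, but it has two loose ends.
\begin{itemize}
\item The vNM step needs $I_{\mu_2}$ to be nonconstant on $\mathcal C$. This holds because $K_2>0$ forces $\mu_2\neq\nu_\pi$.
\item Density of support-function differences \emph{within} $\mathcal C$ is not what the additive-uniqueness proof supplies. It is true, for example by perturbing the member $p\mapsto\frac12+\frac12\|p\|^2$ of $\mathcal C$ by small smooth functions vanishing near the vertices, but it needs its own argument.
\end{itemize}
The paper sidesteps both. It mixes arbitrary $A,B$ with singletons chosen to equalize statewise envelopes, so (b) shows that $\Gamma_1$ and $\Gamma_2$ order \emph{all} menus identically. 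Affine uniqueness then gives $\Gamma_1=a\Gamma_2$ directly, and DLST uniqueness gives $\mu_1=a\mu_2+(1-a)\nu_\pi$.
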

See \Cref{app:proof-comparative-scrutiny} for a proof.

The theorem gives a behavioral interpretation of the scrutiny intensity in the minimum representation. Once tastes and private information are held fixed, greater scrutiny aversion is equivalent to a larger $K$. Thus, menu preference provides a behavioral ordering of the minimum scrutiny intensity across DMs.

\section{Stochastic Choice and Identification}
\label{sec:stochastic-choice}

\Cref{sec:characterization} shows that menu preference does not, in general, separately identify the DM's private information and scrutiny intensity: different combinations of \(\mu\) and \(K\) can generate the same menu preference. We now add the DM's subsequent stochastic choice from menus. Because scrutiny does not affect posterior-optimal choice, stochastic choice isolates the information used at the choice stage. Once this information is identified, menu preference pins down the remaining scrutiny intensity.

We proceed in three steps. We first define joint scrutiny representations. We then show that stochastic choice identifies the utility index and posterior distribution and directly recovers the informed-choice value of each menu. Combining these objects with menu preference identifies scrutiny intensity. Finally, we characterize when separately representable menu preference and stochastic choice admit a joint representation. Two cross-stage behavioral conditions provide the characterization.

\subsection{Stochastic Choice and Joint Scrutiny Representations}
\label{subsec:joint-representation}

Let $\Menus^f\subseteq\Menus$ be the collection of finite menus. A \emph{random choice rule} is a family $\rho=(\rho_A)_{A\in\Menus^f}$ such that $\rho_A\in\Delta(A)$ for every $A\in\Menus^f$. Thus, $\rho_A(f)$ is the probability that act $f$ is selected from $A$. For $B\subseteq A$, write $\rho_A(B):=\sum_{f\in B}\rho_A(f).$

Fix a nonconstant affine utility function $u:X\to\R$. Conditional on posterior $p\in\Delta(\Omega)$, the expected utility of $f$ is $U(f,p):=\sum_{\omega\in\Omega}p(\omega)u(f(\omega))$, and the posterior-optimal acts in a finite menu $A$ form the nonempty set
\[
	C_A(p;u):=\argmax_{f\in A}U(f,p).
\]
Posterior indifferences are resolved by a tie-breaking kernel. Formally, for each finite menu $A$, a map $\tau_A:\Delta(\Omega)\to\Delta(A)$ is an \emph{admissible tie-breaking kernel for menu $A$} if every $p\mapsto\tau_A(f\mid p)$ is Borel measurable and
\[
	\tau_A(f\mid p)>0
	\quad\Longrightarrow\quad
	f\in C_A(p;u).
\]
The kernel may select a tied maximizer deterministically or randomize among tied maximizers. It may also vary with the menu. A \emph{tie-breaking rule} is a family $\tau:=(\tau_A)_{A\in\Menus^f}$ such that $\tau_A$ is an admissible tie-breaking kernel for each $A\in\Menus^f$.

\begin{definition}[Information Representation]
	\label{def:information-representation}
	A random choice rule $\rho$ has an \emph{information representation} if there exist a nonconstant affine $u:X\to\R$, a distribution over posteriors $\mu\in\Delta(\Delta(\Omega))$, and a tie-breaking rule $\tau$ such that
	\begin{equation}
		\rho_A(f)
		=
		\int_{\Delta(\Omega)}\tau_A(f\mid p)\,\mu(dp)
		\qquad
		\text{for every }A\in\Menus^f
		\text{ and every }f\in A.
		\label{eq:information-representation}
	\end{equation}
	We refer to $(u,\mu,\tau)$ as the primitives of the information representation and suppress the tie-breaking rule when no confusion can arise.
\end{definition}

This formulation is closely related to \citet{Lu2016}, but models posterior ties explicitly through menu-dependent kernels. This allows us to dispense with Lu's common-measurability convention and regularity restriction. In this respect, our treatment is closer to \citet{AhnSarver2013}, who take random choice from menus of lotteries as primitive.

We can now introduce the joint scrutiny representation.

\begin{definition}[Joint Scrutiny Representation]
	\label{def:joint-scrutiny-representation}
	A \emph{joint scrutiny representation} for a pair $(\succsim,\rho)$ is a tuple $(u,\mu,K,\tau)$ such that $(u,\mu,K)$ is a scrutiny representation for $\succsim$ and $(u,\mu,\tau)$ is an information representation for $\rho$.
\end{definition}

The definition requires the information anticipated when the menu is evaluated to be the information used when an act is selected. The scrutiny parameter does not enter posterior maximization. Hence, holding $(u,\mu,\tau)$ fixed, every $K\geq0$ generates exactly the same stochastic choice rule.

\subsection{Identification from Stochastic Choice}
\label{subsec:choice-identification}

We next establish the identification result needed for the uniqueness of the joint representation. The argument adapts the test-act insight of \citet{Lu2016}, while allowing the explicit tie-breaking kernels in \Cref{def:information-representation}.

\begin{proposition}
	\label{prop:binary-choice-identification}
	If two information representations $(u,\mu,\tau)$ and $(u',\mu',\tau')$ induce the same choice probabilities on every binary menu, then $u'\approx u$ and $\mu'=\mu$.
\end{proposition}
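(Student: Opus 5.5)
The plan is to use the test-act approach of \citet{Lu2016} in two stages. First, binary menus of constant acts identify $u$ up to a positive affine transformation. Second, binary menus pairing a nonconstant act with a constant act identify the distribution of every linear functional $p\mapsto p\cdot a$ under $\mu$. Throughout, the tie-breaking kernels only matter on the event of posterior ties. The whole argument is arranged so that this event can be avoided or shown to be $\mu$-null.

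\emph{Step 1 (utility).} For constant acts $x,y$ we have $U(x,p)=u(x)$ for every $p$. Hence $\rho_{\{x,y\}}(x)=1$ whenever $u(x)>u(y)$, regardless of $\tau$, and likewise for $u'$. If $u(x)>u(y)$ and $u'(y)>u'(x)$, the two representations would give $\rho_{\{x,y\}}(x)=1$ and $0$ respectively, a contradiction.

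Write $u(x)-u(y)=\ell(x-y)$ and $u'(x)-u'(y)=\ell'(x-y)$, where $\ell,\ell'$ are linear functionals on the tangent space $T=\{d\in\R^Z:\sum_z d_z=0\}$. Since $X$ is a full-dimensional simplex in the affine hull, the differences $x-y$ range over a neighborhood of $0$ in $T$. The preceding observation therefore says there is no $d\in T$ with $\ell(d)>0>\ell'(d)$. A standard half-space (Farkas-type) argument then gives $\ell'=c\ell$ for some $c\ge 0$: if $\ell'$ were not a nonnegative multiple of $\ell$, one could find $d$ with $\ell(d)>0$ and $\ell'(d)<0$. Nonconstancy of $u'$ forces $c>0$, so $u'\approx u$.

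\emph{Step 2 (information).} Since positive affine transformations of $u$ do not change $C_A(p;u)$, we may take $u'=u$ and normalize so that $u(\bar x)=1$ and $u(\underline x)=0$ for some lotteries $\bar x,\underline x$. For $a\in[0,1]^\Omega$, let $f_a$ be the act $f_a(\omega)=a(\omega)\bar x+(1-a(\omega))\underline x$, so that $U(f_a,p)=p\cdot a$. Let $x_c$ be a constant act with $u(x_c)=c\in(0,1)$. For the menu $\{f_a,x_c\}$, equation \eqref{eq:information-representation} gives
\[
\mu(\{p:p\cdot a>c\})\le\rho_{\{f_a,x_c\}}(f_a)\le\mu(\{p:p\cdot a\ge c\}),
\]
and the same bounds hold with $\mu'$ in place of $\mu$. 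At every $c$ that is an atom of neither pushforward $\mu\circ(p\cdot a)^{-1}$ nor $\mu'\circ(p\cdot a)^{-1}$, the bounds collapse, and we obtain $\mu(p\cdot a>c)=\mu'(p\cdot a>c)$. Atoms are at most countable, so the two distribution functions agree on a dense set. By right continuity they agree everywhere, including at the endpoints since $p\cdot a\in[0,1]$. Hence the laws of $p\cdot a$ under $\mu$ and $\mu'$ coincide for every $a\in[0,1]^\Omega$.

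For an arbitrary $b\in\R^\Omega$, write $b=s a-t\one$ with $s>0$ and $a\in[0,1]^\Omega$. Because $p\cdot\one=1$ on $\Delta(\Omega)$, we have $p\cdot b=s(p\cdot a)-t$, so the laws of $p\cdot b$ also coincide. All one-dimensional projections of $\mu$ and $\mu'$, viewed as measures on $\R^\Omega$ supported on $\Delta(\Omega)$, are therefore equal. The Cramér--Wold theorem (equality of characteristic functions) then yields $\mu=\mu'$.

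The main obstacle is the tie-breaking. In Step 2 it is handled by restricting to non-atomic thresholds and then using right continuity. In Step 1 it is the subtle point: a strict ranking under $u$ could in principle be a tie under $u'$, which is why Step 1 argues via the absence of strictly opposed pairs and linear algebra rather than by matching indifference sets directly.
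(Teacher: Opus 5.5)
Your proof is correct and follows the paper's argument. Binary choices between constant acts identify $u$. The sandwich bounds for $\{f_a,x_c\}$ pin down the law of each projection $p\cdot a$ at all but countably many thresholds. Then the decomposition $b=sa-t\one$ and Cram\'er--Wold give $\mu=\mu'$.

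The one variation is in the utility step. The paper uses a perturbation argument to show that $u'$ is constant on the level sets of $u$. You instead apply a Farkas-type argument to the linear parts $\ell,\ell'$. That is equally valid, provided you note that $\ell\neq 0$, which holds because $u$ is nonconstant.
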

See \Cref{app:proof-binary-choice-identification} for a proof.

The proof provides a procedure for identification as in \citet{Lu2016}. Choices between constant acts first identify $u$ up to a positive affine transformation. Normalize $u$ so that $u(X)=[0,1]$, choose $\bar x,\underline x\in X$ satisfying $u(\bar x)=1$ and $u(\underline x)=0$, and, identifying them with constant acts, define the \emph{test act} $x_t:=t\bar x+(1-t)\underline x$ for every $t\in[0,1]$. For each $v\in[0,1]^\Omega$, define $f_v(\omega):=x_{v_\omega}$, so that $U(f_v,p)=p\cdot v$. Choice from the binary menu $\{f_v,x_t\}$ reveals the distribution of the posterior projection $p\cdot v$: regardless of tie-breaking,
\[
	\mu\bigl(\{p:p\cdot v>t\}\bigr)
	\leq
	\rho_{\{f_v,x_t\}}(f_v)
	\leq
	\mu\bigl(\{p:p\cdot v\geq t\}\bigr).
\]
Thus, binary choice probabilities identify the distribution of $p\cdot v$ at all continuity points. Varying $v$ and applying the Cramér--Wold theorem then identifies $\mu$. However, note that even equality of the entire induced random choice rules need not identify $\tau$, not even $\mu$-almost surely. Nevertheless, identification of $u$ and $\mu$ is sufficient for our purposes. 

We can now present the identification result for the joint scrutiny representation. We call a pair $(\succsim,\rho)$ \emph{nontrivial} if its scrutiny preference $\succsim$ is nontrivial.

\begin{theorem}
	\label{thm:joint-uniqueness}
	If two joint scrutiny representations $(u,\mu,K,\tau)$ and $(u',\mu',K',\tau')$ represent the same nontrivial pair $(\succsim,\rho)$, then $u'\approx u$, $\mu'=\mu$, and $K'=K$.
\end{theorem}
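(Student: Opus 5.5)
The argument has two steps. Stochastic choice will pin down utility and information, and menu preference will then pin down the scrutiny intensity. First, since both $(u,\mu,\tau)$ and $(u',\mu',\tau')$ are information representations of the same $\rho$, they induce in particular the same choice probabilities on every binary menu. \Cref{prop:binary-choice-identification} therefore gives $u'\approx u$ and $\mu'=\mu$. Because a scrutiny representation is invariant to positive affine transformations of $u$ (both $V$ and the reduced form \eqref{eq:scrutiny-reduced} scale by $a>0$ and shift by a constant), we may replace $u'$ by $u$ without loss and assume both tuples are $(u,\mu,K)$ and $(u,\mu,K')$ representing the same $\succsim$.

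Second, we compare the additive EU representations induced by these two scrutiny representations. Write $\delta_\omega$ for the degenerate posterior at $\omega$, and note that $\sum_\omega \pi(\omega)\max_{f\in A}u(f(\omega))=\int\max_{f\in A}U(f,p)\,\nu(dp)$ with $\nu:=\sum_\omega\pi(\omega)\delta_{\delta_\omega}$. By \eqref{eq:scrutiny-reduced}, the triple $(u,\mu,K)$ yields the additive EU representation $(u,\eta_K)$ with
\[
\eta_K=(1+K)\mu-K\nu .
\]
This measure has total mass one and barycenter $\pi$, which lies in the interior. Likewise, $(u,\mu,K')$ yields $(u,\eta_{K'})$. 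A nontrivial scrutiny preference is nonconstant, so by \Cref{lem:lottery-nontriviality-equivalence} it satisfies Lottery Nontriviality. \Cref{prop:additive-uniqueness} then gives $\eta_K=\eta_{K'}$, that is, $(K-K')(\mu-\nu)=0$.

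It remains to show $\mu\neq\nu$, which is the main point. Suppose instead that $\mu=\nu$, i.e.\ the DM has full information. Then $\eta_K=\mu$ for every $K$, and $V(A)=\sum_\omega\pi(\omega)\max_{f\in A}u(f(\omega))$. This is monotone in set inclusion, so $\succsim$ satisfies preference for flexibility and is trivial, contradicting nontriviality. (Equivalently, $(u,\mu,0)$ represents $\succsim$, and \Cref{lem:zero-scrutiny} gives triviality.) Hence $\mu\neq\nu$. Evaluating the signed measure $(K-K')(\mu-\nu)$ on a Borel set where $\mu$ and $\nu$ differ forces $K=K'$.

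The only delicate step is the last one: nontriviality is exactly what excludes the perfectly informed case, which is the unique situation in which $K$ drops out of $\eta_K$. The remaining steps are direct applications of \Cref{prop:binary-choice-identification} and \Cref{prop:additive-uniqueness}. Some routine care is needed to check that the affine normalization of $u$ is harmless in both the scrutiny and the information representations.
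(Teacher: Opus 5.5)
Your proof is correct. Its first step, invoking \Cref{prop:binary-choice-identification} to fix $u$ and $\mu$, is exactly the paper's; the second step uses a different tool.

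The paper stays at the level of value functions. It writes $V=I-K\Gamma$ and $V'=I-K'\Gamma$ with $\Gamma:=H-I\geq 0$, and uses affine uniqueness to get $V'=aV+b$. It pins $a=1$, $b=0$ from the constant test acts, $V(\{x_t\})=V'(\{x_t\})=t$. It then concludes from $(K'-K)\Gamma\equiv 0$ together with $\Gamma(B)>0$ for some $B$, which is what flexibility violation delivers.

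You instead pass to the induced signed measures and invoke \Cref{prop:additive-uniqueness} to get $(K-K')(\mu-\nu_\pi)=0$, and then rule out $\mu=\nu_\pi$. The degenerate cases the two arguments exclude, $\mu=\nu_\pi$ and $\Gamma\equiv 0$, are the same condition seen at the measure level and the functional level.

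The paper's route is lighter. It needs only vNM-style uniqueness of an affine index with a normalization on constant acts, not the DLRS-based uniqueness of the signed measure underlying \Cref{prop:additive-uniqueness}. Your route matches the measure-level parametrization $\eta=(1+K)\mu-K\nu_\pi$ already used in \Cref{lem:minimum-pair} and \Cref{thm:menu-uniqueness}. It also gives an explicit recovery formula, $K=[\eta(E)-\mu(E)]/[\mu(E)-\nu_\pi(E)]$ for any Borel $E$ with $\mu(E)\neq\nu_\pi(E)$.
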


By \Cref{prop:binary-choice-identification}, any two joint scrutiny representations share the same utility index, up to a positive affine transformation, and the same distribution of posterior beliefs. Once these objects are fixed, a nontrivial menu preference also pins down $K$. Thus, unlike menu preference alone, which identifies only the minimum intensity $\underline K$, the joint data identify the actual scrutiny intensity $K$. The tie-breaking rule $\tau$ remains unidentified. The complete proof is in \Cref{app:proof-joint-uniqueness}.

\subsection{Characterization of the Joint Scrutiny Representation}

This subsection characterizes when menu preference and stochastic choice are generated by the same utility index and private information. Throughout, suppose that the menu preference $\succsim$ is a nonconstant scrutiny preference and that the random choice rule $\rho$ has an information representation.

Note that the characterization requires us to compare the value of menu based on the information representation. Following \citet{Lu2016}, the same test-act construction from the previous subsection also recovers the informed-choice value of every menu. Continue to use the normalization $u(X)=[0,1]$ and the test acts $x_t$ defined before. For any menu $A$ and posterior $p$, define the \emph{posterior value} of $A$ by
\[
	\sigma_A^u(p)
	:=
	\max_{f\in A}U(f,p)
	=
	\max_{f\in A}
	\sum_{\omega\in\Omega}p(\omega)u(f(\omega)).
\]
Thus, $\sigma_A^u(p)$ is the utility obtained by choosing optimally from $A$ under posterior $p$.

For a finite menu $A$, define
\begin{equation}
	T_A^\rho(t)
	:=
	\rho_{A\cup\{x_t\}}\bigl(A\setminus\{x_t\}\bigr),
	\qquad t\in[0,1].
	\label{eq:test-curve}
\end{equation}
The set difference matters only when $x_t\in A$, which can occur for at most finitely many $t$ and therefore does not affect the integral below. For all other $t$, $T_A^\rho(t)$ is the probability that the DM chooses from $A$ rather than the constant act of utility $t$.

Conditional on posterior $p$, the DM necessarily chooses from $A$ if $\sigma_A^u(p)>t$ and necessarily chooses $x_t$ if $\sigma_A^u(p)<t$. Tie-breaking can matter only when $\sigma_A^u(p)=t$. Consequently,
\[
	\mu\bigl(\{p:\sigma_A^u(p)>t\}\bigr)
	\leq
	T_A^\rho(t)
	\leq
	\mu\bigl(\{p:\sigma_A^u(p)\geq t\}\bigr).
\]
These bounds imply that $T_A^\rho$ is weakly decreasing and hence Borel measurable.

The following result adapts the test-function integration argument of \citet[Theorem~2]{Lu2016} to our explicit treatment of posterior indifferences.

\begin{lemma}
	\label{lem:recover-choice-value}
	If $(u,\mu,\tau)$ represents $\rho$ and $u$ is normalized to $u(X)=[0,1]$, then, for every finite menu $A$,
	\[
		\int_0^1T_A^\rho(t)\,dt
		=
		\int_{\Delta(\Omega)}\sigma_A^u(p)\,\mu(dp),
	\]
	independently of the tie-breaking kernels.
\end{lemma}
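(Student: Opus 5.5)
The plan is the layer-cake formula combined with Fubini, using the sandwich bounds displayed just before the statement. Since $u(X)=[0,1]$, every act $f$ has $U(f,p)\in[0,1]$, so $\sigma_A^u(p)\in[0,1]$ for every finite menu $A$ and every posterior $p$. For any $[0,1]$-valued Borel function $s$ on $\Delta(\Omega)$ we have the identity
\[
s(p)=\int_0^1\one\{s(p)>t\}\,dt=\int_0^1\one\{s(p)\geq t\}\,dt .
\]
The two integrands differ only at the single point $t=s(p)$, which is Lebesgue-null. The map $(p,t)\mapsto \one\{\sigma_A^u(p)>t\}$ is jointly Borel, because $\sigma_A^u$ is continuous in $p$ as a maximum of finitely many linear functions. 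Integrating against $\mu$ and applying Tonelli therefore gives
\[
\int_{\Delta(\Omega)}\sigma_A^u\,d\mu=\int_0^1\mu\bigl(\{p:\sigma_A^u(p)>t\}\bigr)\,dt=\int_0^1\mu\bigl(\{p:\sigma_A^u(p)\geq t\}\bigr)\,dt.
\]

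Next, I would check that the sandwich bound really holds for every $t$ with $x_t\notin A$, since these are all but finitely many $t$. Fix such a $t$ and apply \eqref{eq:information-representation} to the finite menu $A\cup\{x_t\}$. Conditional on $p$, admissibility of $\tau_{A\cup\{x_t\}}$ forces the following:
\begin{itemize}
\item if $\sigma_A^u(p)>t$, all mass falls on $A$, because $x_t$ is not a maximizer;
\item if $\sigma_A^u(p)<t$, all mass falls on $x_t$.
\end{itemize}
Hence $\one\{\sigma_A^u(p)>t\}\le \tau_{A\cup\{x_t\}}(A\mid p)\le \one\{\sigma_A^u(p)\geq t\}$. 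Integrating over $\mu$ yields the displayed bounds on $T_A^\rho(t)$, whatever the kernel does on the tie set.

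To conclude, I would integrate the bounds over $t\in[0,1]$. Lebesgue integrability is fine: $T_A^\rho$ is monotone, as noted in the text, hence Borel. The finitely many exceptional $t$ are null, so changing $T_A^\rho$ there is harmless. The lower and upper integrals coincide by the first step, both equal to $\int\sigma_A^u\,d\mu$, so $\int_0^1T_A^\rho(t)\,dt$ is squeezed to that value, independently of $\tau$.

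The only delicate point is bookkeeping rather than substance. The kernel $\tau_{A\cup\{x_t\}}$ changes with $t$, so I must not try to apply Fubini to $(p,t)\mapsto\tau_{A\cup\{x_t\}}(A\mid p)$, whose joint measurability is not given. The argument above avoids this by bounding each $T_A^\rho(t)$ separately and integrating only the monotone function $t\mapsto T_A^\rho(t)$. Fubini is applied only to the explicit indicator functions.
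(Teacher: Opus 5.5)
Your proposal is correct and follows essentially the same route as the paper. You obtain pointwise sandwich bounds on $T_A^\rho(t)$ from admissibility of the tie-breaking kernel, apply Tonelli only to the two explicit indicators $\mathbf 1\{t<\sigma_A^u(p)\}$ and $\mathbf 1\{t\le\sigma_A^u(p)\}$ to show that both bounds integrate to $\int\sigma_A^u\,d\mu$, and then squeeze; your extra care about the finitely many $t$ with $x_t\in A$, and about avoiding joint measurability of $(p,t)\mapsto\tau_{A\cup\{x_t\}}(A\mid p)$, is sound and matches what the paper does implicitly.
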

See \Cref{app:proof-choice-value} for a proof.

Accordingly, define the informed-choice value recovered from $\rho$ by
\begin{equation}
	I_\rho(A):=\int_0^1T_A^\rho(t)\,dt
	\label{eq:choice-value}
\end{equation}
for every finite menu $A$. The lemma shows that $I_\rho(A)$ is identified from $\rho$ and equals the menu's expected value under posterior-optimal choice.

Note that the Information representation is defined on finite menus, $\Menus^f$, while the scrutiny presentation is defined on compact menus, i.e. $\Menus$. Nonetheless, since finite menus are dense in $\Menus$ and the mapping $A\mapsto\int_{\Delta(\Omega)}\sigma_A^u(p)\,\mu(dp)$ is continuous in the Hausdorff metric, $I_\rho$ has a unique continuous extension from $\Menus^f$ to $\Menus$. We use the same notation $I_\rho$ for this extension and use this extension for our characterization. Note that the informed-choice value induces an auxiliary preference over menus: Write $A\succsim_\rho B$ if $I_\rho(A)\geq I_\rho(B)$. 

\begin{axiom}[Singleton Alignment (SA)]
	\label{ax:singleton-alignment}
	For any $f,g\in\Acts$, $\{f\}\succsim\{g\}\iff\{f\}\succsim_\rho\{g\}.$
\end{axiom}

Singleton menus generate no scrutiny gap. SA therefore requires the material ranking of acts revealed by menu preference to coincide with the ranking recovered from stochastic choice. It imposes no corresponding agreement over nonsingleton menus.

\begin{axiom}[Joint Scrutiny Consistency (JSC)]
	\label{ax:joint-scrutiny-consistency}
	For all $A,B\in\Menus$, if $A\succsim_\rho B$ and $A$ is dominated by $B$, then $A\succsim B$.
\end{axiom}

The relation $\succsim_\rho$ ranks menus by informed-choice value. Thus, $A\succsim_\rho B$ means that $A$ offers weakly higher material value. If $A$ is also dominated by $B$, then it generates a weakly less demanding hindsight benchmark. When both comparisons favor $A$, the scrutiny preference must rank $A$ weakly above $B$. The following theorem shows that JSC is both necessary and sufficient for a joint representation.

\begin{theorem}
	\label{thm:joint-characterization}
	Suppose that $\succsim$ is a nonconstant scrutiny preference, $\rho$ has an information representation, and the pair $(\succsim,\rho)$ satisfies SA. Then $(\succsim,\rho)$ has a joint scrutiny representation if and only if it satisfies JSC.
\end{theorem}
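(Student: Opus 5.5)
The plan is to reduce everything to linear functionals on support functions. For a menu $A$, write $\sigma_A(p):=\max_{f\in A}\sum_\omega p(\omega)u(f(\omega))$. Let $\nu:=\sum_\omega \pi(\omega)\delta_{\delta_\omega}$ be the measure on degenerate posteriors, so that the hindsight value is $H(A)=\int\sigma_A\,d\nu$. Equation~\eqref{eq:scrutiny-reduced} says that a scrutiny representation $(u,\mu,K)$ is exactly the additive EU representation with signed measure $\eta=(1+K)\mu-K\nu$.

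\textbf{Necessity.} Assume $(u,\mu,K,\tau)$ is a joint scrutiny representation, with $u$ normalized to $u(X)=[0,1]$. Then \Cref{lem:recover-choice-value}, together with the continuous extension to compact menus, gives $I_\rho(A)=\int\sigma_A\,d\mu$. Hence
\[
V(A)=(1+K)I_\rho(A)-K\,H(A).
\]
The relation $\succsim$ on constant acts is represented by the same $u$. So "$A$ is dominated by $B$" means that $\max_{f\in A}u(f(\omega))\le\max_{g\in B}u(g(\omega))$ for every $\omega$, which gives $H(A)\le H(B)$. Combining this with $I_\rho(A)\ge I_\rho(B)$ and $K\ge 0$ yields $V(A)\ge V(B)$, which is JSC.

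\textbf{Sufficiency, step 1: common utility and prior.} Fix any scrutiny representation $(u,\mu_0,K_0)$ of $\succsim$, with additive measure $\eta=(1+K_0)\mu_0-K_0\nu$, which is unique by \Cref{prop:additive-uniqueness}. Let $(u',\mu',\tau')$ represent $\rho$. On singletons, $\succsim$ is subjective expected utility with $(u,\pi)$, and $\succsim_\rho$ is subjective expected utility with $(u',\pi')$, where $\pi'$ is the barycenter of $\mu'$. Apply SA to constant acts: $\succsim$ is nonconstant, so by \Cref{lem:lottery-nontriviality-equivalence} lottery nontriviality holds, and SA then gives $u\approx u'$. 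Normalize $u=u'$. Next apply SA to binary bets between test acts. This gives $\pi'=\pi$ by the standard uniqueness of the Anscombe--Aumann prior, so $\mu'$ is Bayes plausible for the full-support prior $\pi$. In particular, domination between menus is now the same notion under both primitives.

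\textbf{Sufficiency, step 2 (main obstacle): $\eta$ lies in the span of $\mu'$ and $\nu$.} The goal is to show $\eta=a\mu'+b\nu$ with $a\ge 0$ and $b\le 0$. The key consequence of JSC is this. If $A$ and $B$ dominate each other, then $H(A)=H(B)$, and JSC applied in both directions gives
\[
I_\rho(A)\ge I_\rho(B)\ \Rightarrow\ V(A)\ge V(B).
\]
\begin{enumerate}
\item First I would show that $H(A)=H(B)$ and $I_\rho(A)=I_\rho(B)$ together imply $V(A)=V(B)$. Mutual domination holds whenever the statewise maxima agree, and for menus of acts in the utility range this is the same as $H(A)=H(B)$ state by state. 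So the first task is to show that equality of the statewise maxima is the relevant constraint.
\item To pass to linear algebra, I would use mixtures, translation by constant acts, and the fact that differences of support functions are dense in $C(\Delta(\Omega))$, as in DLR/DLST. The aim is to show that the linear functional $\eta$ vanishes on the common kernel of $\mu'$ and of all the state-evaluation functionals $\varphi\mapsto\varphi(\delta_\omega)$.
\item A finite-dimensional annihilator argument then gives $\eta=a\mu'+\sum_\omega c_\omega\delta_{\delta_\omega}$.
\item Next I would show that $c_\omega$ is proportional to $\pi(\omega)$. For this I would compare singletons, where $V$, $I_\rho$ and $H$ all coincide with expected utility under $\pi$, and use menus that raise a single state's maximum.
\end{enumerate}
This step is where I expect the real difficulty: constructing enough mutually dominating menu pairs that separate functionals, and handling the boundary of the utility range.

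\textbf{Sufficiency, step 3: signs and conclusion.} Compare $\eta$ and $\mu'$ on singletons: both give expected utility with the same prior $\pi$, and $\eta(\Delta(\Omega))=1$. This forces $a+b=1$. For the sign of $a$, take mutually dominating $A,B$ with $I_\rho(A)>I_\rho(B)$, which exist because $\succsim$ is nonconstant and $\mu'$ has nondegenerate variation; JSC then gives $a\ge 0$. For the sign of $b$, take $A$ dominated by $B$ with $I_\rho(A)=I_\rho(B)$ but $H(A)<H(B)$; JSC gives $V(A)\ge V(B)$, which forces $b\le 0$. Setting $K:=-b\ge 0$ gives $a=1+K$ and
\[
\eta=(1+K)\mu'-K\nu.
\]
So $(u,\mu',K)$ represents $\succsim$. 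Since $(u,\mu',\tau')$ represents $\rho$, the tuple $(u,\mu',K,\tau')$ is a joint scrutiny representation.
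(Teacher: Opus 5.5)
\textbf{Gap: the central Step 2 of sufficiency is not carried out, and item (1) rests on a false identification.}

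Step 2 carries the whole content of sufficiency, but it is only a list of intentions. Item (1) does not follow from JSC in the way you indicate. $H(A)=H(B)$ is a single $\pi$-weighted equation. It is not equality of the statewise maxima $m_A(\omega):=\max_{f\in A}u(f(\omega))$. So menus with equal $H$ need not dominate each other, and JSC says nothing about them directly.

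The statement in (1) is true, but it needs the device on which the paper's proof rests. Given $I_\rho(A)\ge I_\rho(B)$ and $H(A)\le H(B)$, choose singleton acts $f,g$ with $u(g(\omega))-u(f(\omega))=\frac{\alpha}{1-\alpha}[m_A(\omega)-m_B(\omega)]+c$, where the constant $c\ge 0$ makes this difference have zero $\pi$-mean. Then:
\begin{itemize}
\item $\alpha B+(1-\alpha)\{g\}$ dominates $\alpha A+(1-\alpha)\{f\}$;
\item the $I_\rho$-gap between these mixtures is $\alpha[I_\rho(A)-I_\rho(B)]\ge 0$;
\item $V(\{f\})=V(\{g\})$.
\end{itemize}
JSC and affinity of $V$ then give $V(A)\ge V(B)$. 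With this claim the paper never works with measures. $V$ factors through the convex set $\{(I_\rho(A),H(A)):A\in\Menus\}\subseteq\R^2$ and is affine there. The signs obtained from JSC, plus normalization on constant acts, give $V=(1+K)I_\rho-KH$.

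\textbf{How to close your annihilator route instead.} It can be completed without (1), but you must supply the density step you defer.
\begin{itemize}
\item Let $S$ be the linear span of the functions $\sigma_A$, $A\in\Menus$. $S$ contains the constants and, by the rescaling in \Cref{lem:support-difference}, every continuous piecewise-affine function. Hence $S$ is dense in $C(\Delta(\Omega))$.
\item Every $\varphi\in S$ can be written as $c(\sigma_A-\sigma_B)$ with $c>0$: absorb unequal weights by mixing with a zero-utility constant act.
\item Suppose such a $\varphi$ vanishes at every $\delta_\omega$ and satisfies $\int\varphi\,d\mu'=0$. Then $A$ and $B$ have identical statewise maxima, so they dominate each other. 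JSC in both directions gives $\int\varphi\,d\eta=0$.
\item The constraint functionals are finitely many and continuous. So $S$ intersected with their common kernel $N$ is dense in $N$: correct an approximating sequence $y_n\in S$ by subtracting $\sum_k\ell_k(y_n)s_k$, with $s_k\in S$ dual to a maximal independent subfamily of the $\ell_k$.
\item Hence $\eta$ annihilates $N$, and your item (3) follows. Item (4) is then immediate from barycenters: $\int p\,d\eta=\pi=\int p\,d\mu'$ forces $c_\omega=(1-a)\pi(\omega)$.
\end{itemize}

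\textbf{Edge case in Step 3.} You assert that a mutually dominating pair with $I_\rho(A)>I_\rho(B)$ exists because $\succsim$ is nonconstant. This is false when $\mu'=\nu$ (full revelation). Then $I_\rho=H$ identically, and neither of your test pairs exists. That case must be split off, as in the paper's Case 1: there $\eta=\nu=\mu'$, and $K=0$ works. When $\mu'\neq\nu$, take the state-indicator menu $B$ and a constant act with utility $I_\rho(B)<1$. This supplies the pair for $b\le 0$, which already gives $a=1-b\ge 1$.
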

See \Cref{app:proof-joint-characterization} for a proof.

Necessity follows because common primitives align singleton values and make the informed-choice ranking recovered from \(\rho\) compatible with the scrutiny preference. Conversely, SA aligns the utility indices underlying the two data sets, while JSC ensures that the informed-choice functional recovered from \(\rho\) can serve as the information component of a scrutiny representation of \(\succsim\). The formal argument is provided in \Cref{app:proof-joint-characterization}.

\section{Applications}
\label{sec:accountability-applications}

In our applications, we demonstrate how scrutiny can distort behavior in a principal-agent setting even when the two parties' material interests are perfectly aligned. A patient relies on an informed physician to consider a range of potential treatments and select one; a venture capitalist (VC) relies on an informed manager to generate and evaluate potential innovative projects before implementing one. Even when the parties share the same material utility, distortions can arise because the agent also anticipates hindsight scrutiny---whether externally imposed or internally perceived. We use this mechanism to provide a complementary explanation for the evidence on defensive diagnostic testing in \citet{FrakesGruber2019} and the association between failure tolerance and innovation in \citet{TianWang2014}, and then use the corresponding extensions of the model to analyze guideline-based malpractice safe harbors and mandatory disclosure of experimentation.


\subsection{From Blanket Immunity to Targeted Safe Harbors}
\label{subsec:medicine-safe-harbors}

\citet{FrakesGruber2019} exploit variation in malpractice liability within the U.S. Military Health System. Active-duty patients treated at military facilities cannot sue for negligent care, whereas military dependents and patients treated at civilian facilities retain malpractice protections. They find that immunity reduces diagnostic testing by roughly 22 percent, without measurable deterioration in mortality, unplanned readmission, or patient-safety measures. We interpret immunity as a reduction from $K_1$ to $K_0<K_1$ that applies across treatment menus. Because professional, institutional, and reputational review may remain, $K_0$ need not be zero. This interpretation does not claim that hindsight scrutiny is the unique explanation for the evidence; it asks whether the observed diagnostic response is consistent with our mechanism.

The patient delegates diagnostic acquisition, consideration of treatment alternatives, and subsequent treatment choice to the physician. We abstract from conflicting treatment preferences by assigning the physician and patient a common material utility $u$. The relevant margin is therefore information acquisition rather than overtreatment: scrutiny can affect which diagnostic technology is selected while leaving treatment posterior optimal conditional on the resulting information.

Let $M$ be a finite menu of treatments and $\pi$ the prior over patient subtypes. A diagnostic technology is a Bayes-plausible distribution $\mu\in\Delta(\DeltaO)$ over posteriors with barycenter $\pi$. The physician chooses $\mu$ from a finite feasible set $\mathcal E$. Define
\begin{align}
	I_\mu(M)
	&:=
	\int_{\DeltaO}
	\max_{f\in M}
	\sum_{\omega\in\Omega}p(\omega)u(f(\omega))\,\mu(dp),
	\ \ \text{ and }\ \ H(M)
	:=
	\sum_{\omega\in\Omega}\pi(\omega)
	\max_{f\in M}u(f(\omega)).
	\label{eq:med-values}
\end{align}
If $c(\mu)$ is the resource cost of diagnostic technology $\mu$, the physician's optimized value is
\begin{equation}
	\mathcal V_K(M)
	:=
	\max_{\mu\in\mathcal E}
	\left\{
	(1+K)I_\mu(M)-K H(M)-c(\mu)
	\right\}.
	\label{eq:med-diagnostic-objective}
\end{equation}
This extension endogenizes information acquisition while retaining the baseline assumption of posterior-optimal treatment.

To see the diagnostic incentive, consider $\mu^H,\mu^L\in\mathcal E$ with $\mu^H\succeq_B\mu^L$, where $\succeq_B$ denotes Blackwell dominance. The patient's medical outcome is weakly better under the more informative diagnosis since $\Delta I:=I_{\mu^H}(M)-I_{\mu^L}(M)\geq0$. Naturally, the more accurate diagnostic usually comes with a higher cost, that is, $\Delta c:=c(\mu^H)-c(\mu^L)\geq0.$

The net change for the physician to switch from $\mu^L$ to $\mu^H$ is thus $(1+K)\Delta I-\Delta c$, which is increasing in $K$. Thus, greater anticipated scrutiny strengthens the incentive to acquire more information. In particular, if $K_1>K_0$ and
\[
	(1+K_0)\Delta I
	<
	\Delta c
	<
	(1+K_1)\Delta I,
\]
the physician selects $\mu^H$ under $K_1$ but $\mu^L$ under $K_0$. Because $\Delta c>\Delta I$, the additional information does not cover its resource cost under the material criterion $I_\mu(M)-c(\mu)$. The reduction in testing following immunity is therefore consistent with the removal of the additional return that scrutiny assigns to diagnostic information.

We next compare blanket immunity with targeted protection. A clinical guideline alone is advice; a \emph{guideline-based malpractice safe harbor} is an ex-ante rule that legally protects qualifying guideline-concordant care \citep{BlumsteinEtAl2020,BlumsteinEtAl2023}. Let $B$ be the protected treatment menu and let $A\supseteq B$ also contain a potentially useful unprotected treatment. Choosing $B$ results in residual scrutiny $K_s$, whereas choosing $A$ leaves the physician at $K$, where $0<K_s<K$. This menu-dependent scrutiny level extends the baseline representation, so the comparison is across institutional regimes rather than between menus under a single fixed-$K$ preference.

Because $H(B)\geq I_\mu(B)$ for every diagnostic technology, lowering scrutiny weakly raises the optimized value of $B$: $\mathcal V_{K_s}(B)\geq\mathcal V_K(B)$. Consequently,
\begin{equation}
	\mathcal V_K(A)-\mathcal V_{K_s}(B)
	\leq
	\mathcal V_K(A)-\mathcal V_K(B).
	\label{eq:med-optimized-menu-effect}
\end{equation}
Targeted protection therefore weakly lowers the relative value of the larger, unprotected menu. The resulting trade-off is between defensive diagnosis and treatment flexibility: lower scrutiny within $B$ can reduce low-value diagnostic testing, but it also makes $B$ more attractive relative to a menu containing patient-specific alternatives. When an excluded alternative would improve material welfare, the policy may therefore replace excessive diagnosis with excessive standardization. This does not imply that safe harbors are undesirable; it suggests combining clear protection for guideline-concordant care with an exception for documented, clinically justified departures, rather than treating the guideline as an exhaustive menu.

\subsection{Failure Tolerance and Confidential Experimentation}
\label{subsec:finance-confidential-experimentation}

\citet{TianWang2014} find that, among VC-backed IPO firms, backing by a more failure-tolerant venture capitalist is associated with greater innovation, especially for ventures facing high failure risk. They provide evidence that selection of firms with greater ex-ante innovative potential does not fully explain the relationship and show that career concerns reduce failure tolerance. \citet{Manso2011} and \citet{EdererManso2013} emphasize a dynamic moral-hazard channel: rewarding long-run success while tolerating early failure encourages exploration. Our channel is complementary but distinct: it operates even under aligned material interests. An investor delegates project generation, screening, and implementation to a manager who shares the investor's material utility; anticipated hindsight scrutiny can nevertheless discourage the generation of novel projects.

The extension retains posterior-optimal implementation but allows the hindsight benchmark to depend on disclosure. Let $A$ be the finite project menu actually generated and screened by the manager, and let $D\subseteq A$ contain the projects whose identities and realized payoffs enter the official record. Assume that the posterior-optimal project is unique and denote it by
\[
	f_A(p)
	\in
	\argmax_{f\in A}
	\sum_{\omega\in\Omega}p(\omega)u(f(\omega)).
\]
The implemented project is always observable, whether or not it was previously disclosed. The manager's ex-ante choice is therefore a pair $(A,D)$: the actual project menu and its disclosure set. The corresponding \emph{visible scrutiny gap} is
\begin{align}
	G(A,D)
	:={}&
	\int_{\DeltaO}
	\sum_{\omega\in\Omega}p(\omega)
	\left[
	\max_{f\in D\cup\{f_A(p)\}}u(f(\omega))
	-u(f_A(p)(\omega))
	\right]\mu(dp).
	\label{eq:fin-visible-gap}
\end{align}
Thus, $D=A$ corresponds to full disclosure, whereas $D=\emptyset$ corresponds to the case where only the implemented project is disclosed. Let $I(A)$ denote the baseline informed-choice value of $A$. Let $b(A,D)$ collect the coordination, monitoring, and organizational-learning benefits of disclosure net of confidentiality costs. The manager evaluates $(A,D)$ according to
\begin{equation}
	V_K(A,D)
	=
	I(A)-K G(A,D)+b(A,D).
	\label{eq:fin-disclosure-objective}
\end{equation}

A simple example illustrates how disclosure can affect the actual project menu. There are two equally likely market states. A legacy project $f$ yields $u(f)=(4,4)$, whereas a novel project $g$ yields $u(g)=(8,0)$. The manager receives one of two equally likely signals, which induce posterior probabilities $3/4$ and $1/4$, respectively, on the first state. She implements $g$ after the favorable signal and $f$ after the unfavorable signal. Normalizing the common value of $b$ across the relevant regimes to zero gives
\[
	V_K(\{f,g\},\{f,g\})=5-K,
	\qquad
	V_K(\{f,g\},\{f\})=5-\frac{K}{2},
	\qquad
	V_K(\{f\},\{f\})=4.
\]

Under mandatory full disclosure, the broader menu is strictly preferred if and only if $5-K>4$, or equivalently, $K<1$. For $K>1$, the manager therefore excludes $g$ from consideration. Mandatory disclosure changes the actual project menu, rather than merely the disclosure record, when $1<K<2$. In this range, the manager would strictly prefer to consider $g$ while keeping it off the formal record unless it is implemented---that is, to choose $A=\{f,g\}$ and $D=\{f\}$---if confidentiality were permitted. For $K>2$, even confidential consideration of $g$ is unattractive.

More generally, fix $A$ and $D\subsetneq A$, and take $e\in A\setminus D$. Define $\Delta G_e:=G(A,D\cup\{e\})-G(A,D)$ and $\Delta b_e:=b(A,D\cup\{e\})-b(A,D)$. Suppose that additional disclosure yields weakly greater organizational benefits, so that $\Delta b_e\geq0$. Because adding $e$ weakly raises the statewise maximum among visible projects, $\Delta G_e\geq0$. The manager strictly prefers to keep $e$ outside
the disclosure set if and only if
\[
	K\Delta G_e>\Delta b_e.
\]
If $\Delta G_e>0$, confidentiality is therefore strictly preferred when $K>\frac{\Delta b_e}{\Delta G_e}$. If adding $e$ never raises the statewise maximum among visible projects, then $\Delta G_e=0$. Anticipated scrutiny then provides no reason to conceal $e$.

Hence, scrutiny makes disclosure selective rather than uniformly undesirable. Because an implemented project is already visible, $\Delta G_e$ arises only when $e$ is rejected after private screening but would have appeared attractive in the realized state. Confidentiality can therefore protect a rejected experiment from becoming a hindsight benchmark; it cannot shield an implemented innovation from an unfavorable outcome. In this sense, confidential screening complements rather than replaces genuine tolerance for failure.

The model adds a disclosure margin to the findings of \citet{TianWang2014}. A lower $K$ can increase both the willingness to generate novel projects and the willingness to place them on the formal record. When scrutiny is high but confidential screening remains feasible, formal records may understate actual experimentation. When confidentiality is prohibited or too costly, the same pressure can contract the actual project menu. In the example, mandatory disclosure for $1<K<2$ changes behavior rather than merely revealing it: the manager switches from $\{f,g\}$ to $\{f\}$. The prediction is not that secrecy is generally optimal; the benefits of transparency are included in $b(A,D)$. The relevant design question is whether those benefits require every rejected experiment to become an individualized hindsight benchmark.

\section{Conclusion}
\label{sec:conclusion}

This paper develops a model of menu choice under hindsight scrutiny. Conservatism isolates the menu expansions for which more options are unambiguously beneficial: those that add flexibility without raising the state-by-state hindsight benchmark. Together with standard menu-preference axioms, conservatism characterizes a representation in which the material value of informed choice is weighed against hindsight scrutiny.

Menu-preference data identify a unique minimum information--intensity pair but generally do not recover the agent's actual information and scrutiny intensity separately. Combining menu-preference data with stochastic-choice data closes that gap: For every nontrivial jointly represented pair, the posterior distribution $\mu$ and scrutiny intensity $K$ are uniquely identified.

The applications show that accountability can reshape more than the final action. It can alter diagnostic information acquisition, the set of treatments or projects made available, and which experiments enter the formal record. The broader lesson is that accountability systems should be evaluated by the counterfactual benchmarks they create as well as by the conduct they deter. Extending the model to endogenous evaluation procedures, institution-dependent scrutiny, and dynamic reputational consequences offers a natural direction for future work.


\appendix

\section{Preliminaries}
\label{app:preliminaries}
In this section, we introduce auxiliary notation and preliminary results that are useful for our proofs.

For any set $Y$ and $y\in Y$, let $\delta_y$ denote the Dirac (probability) measure concentrated at $y$. For a measure $\xi$, $\supp(\xi)$ denotes its support, and we write $\R_+:=[0,\infty)$.

For an affine utility function $u:X\to\R$ and a menu $A\in\Menus$, define the support function $\sigma_A^u:\Delta(\Omega)\to\R$ by
\begin{equation}
	\label{eq:supportfunction}
	\sigma_A^u(p):=\max_{f\in A}\sum_{\omega\in\Omega}p(\omega)u(f(\omega)).
\end{equation} 
For a prior $\pi\in\Delta(\Omega)$, define the hindsight function $H_{u,\pi}:\Menus\to\R$ by
\begin{equation}
	\label{eq:hindsight}
	H_{u,\pi}(A):=\sum_{\omega\in\Omega}\pi(\omega)\sigma_A^u(\delta_\omega).
\end{equation}

For a scrutiny representation $(u,\mu,K)$, define the informed-choice function $I_{u,\mu}:\Menus\to\R$ by
\begin{equation}
	\label{eq:informed-choice}
	I_{u,\mu}(A):=\int_{\Delta(\Omega)}\sigma_A^u(p)\mu(dp),
\end{equation}
and the anticipated-scrutiny function $s_{u,\mu,K}:\Menus\to\R$ by
\begin{equation}
	\label{eq:anticipated-scrutiny}
	s_{u,\mu,K}(A):=K[H_{u,\pi}(A)-I_{u,\mu}(A)].
\end{equation}

When the representation is fixed, we abbreviate these functions as $I$, $H$, and $s$, respectively; when several representations are compared, a common subscript indexes all three. Thus, $I(A)$ is the informed-choice component of the value of menu $A$, $H(A)$ is its perfect-hindsight benchmark, and $s(A)$ is its ex-ante anticipated scrutiny. The scrutiny representation $(u,\mu,K)$ can be equivalently expressed as
\begin{equation}
	V(A)=(1+K)I(A)-KH(A)=I(A)-s(A).
\end{equation}

The following lemma lists some standard properties of support functions.
\begin{lemma}
	\label{lem:supportfunctionproperties}
	For any $A,B\in\Menus$ and any affine utility $u:X\to\R$:
	\begin{enumerate}
		\item If $A\subseteq B$, then $\sigma_A^u(p)\leq\sigma_B^u(p)$ for all $p\in\Delta(\Omega)$.
		\item $\sigma_{\alpha A+(1-\alpha)B}^u(p)=\alpha\sigma_A^u(p)+(1-\alpha)\sigma_B^u(p)$ for any $\alpha\in[0,1]$ and $p\in\Delta(\Omega)$.
	\end{enumerate}
\end{lemma}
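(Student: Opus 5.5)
Both properties follow directly from the definition of the support function in \eqref{eq:supportfunction}, viewed pointwise in $p$. Throughout, $\sigma_A^u(p)$ is the maximum of the continuous linear functional $f\mapsto U(f,p)=\sum_{\omega}p(\omega)u(f(\omega))$ over the compact set $A$. This functional is continuous because $u$ is affine on the finite-dimensional simplex $X$, hence continuous, and $\Omega$ is finite. So the maximum is attained and $\sigma_A^u$ is well defined.

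For (a), I fix $p$ and pick a maximizer $f^*\in A$. Since $A\subseteq B$, we have $f^*\in B$. Hence
\[
\sigma_B^u(p)\ge U(f^*,p)=\sigma_A^u(p).
\]
This is simply monotonicity of a maximum in the feasible set.

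For (b), the key fact is that $U(\cdot,p)$ is affine in the act. For acts $f,g$ and $\alpha\in[0,1]$, the mixture satisfies $(\alpha f+(1-\alpha)g)(\omega)=\alpha f(\omega)+(1-\alpha)g(\omega)$. Affinity of $u$ then gives
\[
U(\alpha f+(1-\alpha)g,p)=\alpha U(f,p)+(1-\alpha)U(g,p).
\]
By the definition of mixture menus,
\[
\sigma^u_{\alpha A+(1-\alpha)B}(p)=\max_{f\in A,\,g\in B}\bigl[\alpha U(f,p)+(1-\alpha)U(g,p)\bigr].
\]
Because $\alpha$ and $1-\alpha$ are nonnegative and the two summands depend on independent choice variables, this maximum separates into $\alpha\max_{f\in A}U(f,p)+(1-\alpha)\max_{g\in B}U(g,p)$. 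I will check the two inequalities explicitly:
\begin{itemize}
\item For "$\le$", each term is bounded by its own maximum.
\item For "$\ge$", evaluate at the pair $(f^*,g^*)$ of individual maximizers, which lies in the product domain.
\end{itemize}

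One bookkeeping point must be handled: the mixture $\alpha A+(1-\alpha)B$ has to be a menu, that is, nonempty and compact, for the expression to be meaningful. It is the image of the compact set $A\times B$ under the continuous map $(f,g)\mapsto\alpha f+(1-\alpha)g$, so it is compact. Even without that, the max over the product is attained directly.

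There is no real obstacle here. The only step requiring care is noting that the product structure of the mixture menu lets the maximization decouple. This is where nonnegativity of the weights is used: with a negative weight, the maximum of one term would have to be replaced by a minimum.
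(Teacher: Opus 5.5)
Your proof is correct. Part (a) is monotonicity of a maximum in the feasible set. Part (b) follows from affinity of $f\mapsto\sum_{\omega}p(\omega)u(f(\omega))$ together with decoupling the maximum over $A\times B$, which uses $\alpha,1-\alpha\geq 0$. You call this map ``linear'' at one point, but affine is what you actually use and all that holds. The paper states these as standard properties of support functions without giving a proof, and your argument is the standard one it implicitly relies on.
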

Finally, for any $\pi\in\Delta(\Omega)$, define $\nu_\pi:=\sum_{\omega\in\Omega}\pi(\omega)\delta_{\delta_\omega}.$

\section{Proof of Theorem 1}
\label{app:proof-characterization}

We first record that a scrutiny representation belongs to the additive EU class.

\begin{lemma}
	\label{lem:scrutiny-additive}
	If $\succsim$ has a scrutiny representation $(u,\mu,K)$, then $\succsim$ has an additive EU representation $(u,\eta)$.
\end{lemma}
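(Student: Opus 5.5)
The plan is to read off the signed measure directly from the reduced form \eqref{eq:scrutiny-reduced}. Given a scrutiny representation $(u,\mu,K)$ with barycenter $\pi=\int p\,\mu(dp)$, I will set
\[
\eta:=(1+K)\mu-K\nu_\pi,
\]
where $\nu_\pi=\sum_{\omega}\pi(\omega)\delta_{\delta_\omega}$ is the measure introduced in the preliminaries. Since $\mu$ is a probability measure and $\nu_\pi$ is a finite measure, $\eta$ is a finite, countably additive signed Borel measure on $\Delta(\Omega)$.

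First, I will check the value identity. Using the notation of \Cref{app:preliminaries}, the reduced form gives $V(A)=(1+K)I(A)-KH(A)$. Moreover,
\[
H(A)=\sum_\omega \pi(\omega)\sigma_A^u(\delta_\omega)=\int \sigma_A^u\,d\nu_\pi .
\]
Hence
\[
V(A)=\int\sigma_A^u(p)\,\eta(dp)=W(A),
\]
so $W$ represents $\succsim$. The integrals are well defined because $\sigma_A^u$ is continuous on $\Delta(\Omega)$, being a maximum of finitely many linear functions for finite $A$, and a support function of a compact set in general.

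Second, I will verify the normalizations. The total mass is $\eta(\Delta(\Omega))=(1+K)\cdot 1-K\cdot 1=1$. For the barycenter,
\[
\int p\,\eta(dp)=(1+K)\pi-K\sum_\omega\pi(\omega)\delta_\omega=(1+K)\pi-K\pi=\pi,
\]
which lies in the interior of $\Delta(\Omega)$ by the full-support assumption in \Cref{def:scrutiny}. The utility $u$ is affine by assumption. Therefore $(u,\eta)$ is an additive EU representation.

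The only step requiring care is the derivation of the reduced form itself. Since $\max_{g\in A}u(g(\omega))$ does not depend on the act being maximized, it can be pulled out of the inner maximum. Integrating $p(\omega)$ against $\mu$ then yields $\pi(\omega)$ by Bayes plausibility. The paper already states this step, so I expect no real obstacle.
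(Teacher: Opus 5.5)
Your proposal is correct and matches the paper's proof essentially verbatim. The paper also sets $\eta:=(1+K)\mu-K\nu_\pi$, expands $\int\sigma_A^u\,d\eta$ into $(1+K)I(A)-KH(A)=V(A)$, and checks that $\eta$ has total mass one and barycenter $\pi$ in the interior of $\Delta(\Omega)$.
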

\begin{proof}
	Let $\pi:=\int_{\Delta(\Omega)}p\,\mu(dp)$ be the Bayesian prior corresponding to $\mu$, and recall that $\nu_\pi:=\sum_{\omega\in\Omega}\pi(\omega)\delta_{\delta_\omega}.$ Let $\eta:=(1+K)\mu-K\nu_\pi$. Then,
	\begin{align}
		&\ \int_{\Delta(\Omega)}\max_{f\in A}\left(\sum_{\omega\in\Omega}p(\omega)u(f(\omega))\right)\eta(dp) \\
		=&\ \int_{\Delta(\Omega)}\sigma_A^u(p)\big[(1+K)\mu-K\nu_\pi\big](dp) \\
		=&\ (1+K)\int_{\Delta(\Omega)}\sigma_A^u(p)\mu(dp)-K\int_{\Delta(\Omega)}\sigma_A^u(p)\nu_\pi(dp)\\
		=&\ (1+K)\int_{\Delta(\Omega)}\sigma_A^u(p)\mu(dp)-K\sum_{\omega\in\Omega}\pi(\omega)\max_{f\in A}u(f(\omega)) \\
		=&\ V(A).
	\end{align}
	Moreover, \(\eta(\Delta(\Omega))=(1+K)-K=1\), and its barycenter is
	$$\int_{\Delta(\Omega)}p\,\eta(dp)=(1+K)\pi-K\pi=\pi\in\text{int}(\Delta(\Omega)).$$
	Thus, \((u,\eta)\) is an additive EU representation for $\succsim$.
\end{proof}

The next lemma establishes the necessity of conservatism for the scrutiny representation.
\begin{lemma}
	\label{lem:scrutiny-defensible}
	If $\succsim$ has a scrutiny representation, then $\succsim$ satisfies conservatism.
\end{lemma}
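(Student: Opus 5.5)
We will work directly with the reduced form $V(A)=(1+K)I(A)-KH(A)$ from the Appendix preliminaries, where $I=I_{u,\mu}$ and $H=H_{u,\pi}$. Fix a menu $A$ and an act $g$ dominated by $A$, and set $A':=A\cup\{g\}$. The plan is to show two things: the hindsight term is unchanged, $H(A')=H(A)$, and the informed-choice term weakly increases, $I(A')\geq I(A)$. Since $1+K>0$, these two facts give
\[
V(A')-V(A)=(1+K)\bigl[I(A')-I(A)\bigr]-K\bigl[H(A')-H(A)\bigr]\geq 0,
\]
which is conservatism. We also note that $A'$ is compact, as a union of a compact set and a point, so $A'\in\Menus$.

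For the informed-choice term, \Cref{lem:supportfunctionproperties}(a) applied to $A\subseteq A'$ gives $\sigma^u_{A}(p)\leq\sigma^u_{A'}(p)$ for every $p$. Integrating this inequality against the probability measure $\mu$ yields $I(A')\geq I(A)$.

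For the hindsight term, the one step needing care is translating the domination hypothesis into utility terms. Domination is stated through the preference: $f(\omega)\succsim g(\omega)$ compares constant singleton menus. Under the representation, a constant singleton $\{x\}$ has value $(1+K)u(x)-Ku(x)=u(x)$. Hence $f(\omega)\succsim g(\omega)$ holds if and only if $u(f(\omega))\geq u(g(\omega))$.

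With this translation, for each $\omega$ there is some $f\in A$ with $u(f(\omega))\geq u(g(\omega))$. Therefore $\sigma^u_{A'}(\delta_\omega)=\max\{\sigma^u_A(\delta_\omega),u(g(\omega))\}=\sigma^u_A(\delta_\omega)$. Summing over $\omega$ with weights $\pi(\omega)$ gives $H(A')=H(A)$, which completes the argument.
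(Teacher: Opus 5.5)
Your proof is correct and follows the paper's own argument: rewrite $V$ in reduced form, use domination to show the hindsight term $H$ is unchanged, and use monotonicity of support functions (\Cref{lem:supportfunctionproperties}(a)) to show the informed-choice term weakly increases. The paper leaves implicit the translation of $f(\omega)\succsim g(\omega)$ into $u(f(\omega))\geq u(g(\omega))$; you make it explicit via $V(\{x\})=u(x)$, and that step is handled correctly.
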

\begin{proof}[Proof of \Cref{lem:scrutiny-defensible}]
	Suppose the scrutiny representation is $(u,\mu,K)$. Fix a menu $A$ and an act $g$ dominated by $A$. Let $\pi$ be the barycenter of $\mu$. The scrutiny representation can be rewritten as
	\begin{equation}
		V(A)=(1+K)\int_{\Delta(\Omega)}\sigma_A^u(p)\mu(dp)-K\sum_{\omega\in\Omega}\pi(\omega)\sigma_A^u(\delta_\omega).
	\end{equation}
	Since $g$ is dominated by $A$, 
    $$\sigma_A^u(\delta_\omega)=\max_{f\in A}u(f(\omega))=\max_{f\in A\cup\{g\}}u(f(\omega))=\sigma_{A\cup\{g\}}^u(\delta_\omega)\text{ for any }\omega\in\Omega.$$
	Therefore,
	$V(A\cup\{g\})-V(A) 
		=(1+K)\int_{\Delta(\Omega)}[\sigma_{A\cup\{g\}}^u(p)-\sigma_A^u(p)]\mu(dp)
		\geq 0$, 
	where the final inequality follows from part (a) of
	\Cref{lem:supportfunctionproperties}.
\end{proof}

The next lemma states that the axioms are sufficient if the preference is lottery-trivial (i.e., $x\sim y$ for any $x,y\in X$).

\begin{lemma}
	\label{lem:trivial-scrutiny}
	If $\succsim$ is lottery-trivial (i.e., $x\sim y$ for any $x,y\in X$) and satisfies \cref{ax:weak-order,ax:strongcontinuity,ax:independence,ax:domination,ax:PDF}, then $\succsim$ has a scrutiny representation with $u(x)=0$ for all $x\in X$.
\end{lemma}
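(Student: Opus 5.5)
The plan is to show that a lottery-trivial preference satisfying the axioms must be constant on all of $\Menus$. Once that is established, any tuple $(u,\mu,K)$ with $u\equiv 0$ represents it, since the value function in \eqref{eq:scrutiny-primitive} is then identically zero. Concretely, I would take $\mu=\delta_\pi$ for some full-support $\pi$ and $K=0$. This $\mu$ is Bayes plausible with full-support barycenter, so the tuple is an admissible scrutiny representation.

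To prove constancy I would invoke \Cref{prop:additive-foundation}. Axioms 1--4 give an additive EU representation $(u,\eta)$ with $W(A)=\int\sigma_A^u\,d\eta$. Because $\eta(\Delta(\Omega))=1$, for a constant act $x$ we get $W(\{x\})=u(x)$. Lottery triviality gives $x\sim y$ for all $x,y\in X$, so $u$ is constant on $X$, say $u\equiv c$. Then for every act $f$ and posterior $p$ we have $\sum_\omega p(\omega)u(f(\omega))=c$, so $\sigma_A^u\equiv c$ for every menu $A$, and hence $W(A)=c$ for all $A$. Thus $A\sim B$ for all menus, and $V\equiv 0$ represents $\succsim$.

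The only subtle point is whether \Cref{prop:additive-foundation} applies in the lottery-trivial case, since it is stated without a nontriviality hypothesis. If I did not want to rely on it, I would argue directly from the axioms.

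First, every pair of acts is ranked indifferently. Since $f(\omega)\sim g(\omega)$ for every $\omega$, the act $g$ is dominated by $f$ and $f$ is dominated by $g$. Domination (a) applied both ways then gives $f\sim g$.

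Second, every menu is indifferent to a singleton. Take $A$ and any $f\in A$. Every act $g\in A$ is dominated by $f$ (all outcomes are indifferent), so Domination (b) gives $\{f\}\sim\{f\}\cup\{g\}$. For finite $A$, repeating this one act at a time yields $A\sim\{f\}$. For compact $A$, I would pass to the limit through finite approximations; this is the main obstacle. I would handle it with strong continuity: vNM continuity rules out a strict gap between $A$ and its finite approximants, since such a gap would have to persist along mixtures that converge in $d_h$.

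Combined with the first step, these two facts give $A\sim B$ for all menus, and the representation follows as above. I expect simply citing \Cref{prop:additive-foundation} to be the cleanest route.
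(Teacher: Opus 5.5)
Your main route is exactly the paper's proof: invoke \Cref{prop:additive-foundation}, which carries no nontriviality hypothesis (the supplementary appendix handles the lottery-trivial case separately). From there, note that $W(\{x\})=u(x)$ forces $u$ to be constant, so every menu gets the same value, and pair $u\equiv 0$ with any admissible $(\mu,K)$; your choice $\mu=\delta_\pi$, $K=0$ simply makes the paper's ``suitable $\mu$ and $K$'' explicit.

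Your fallback is essentially the supplement's own argument for the trivial case, with one correction. The passage from finite to arbitrary compact menus needs the Hausdorff continuity that L-Continuity delivers, not vNM continuity alone, because a compact menu and its finite approximants do not lie on a common mixture segment.
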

\begin{proof}[Proof of \Cref{lem:trivial-scrutiny}]
	By \Cref{prop:additive-foundation}, \(\succsim\) has an additive EU representation. Because its restriction to \(X\) is lottery-trivial, the affine utility index in that representation is constant; normalize it to \(u(x)=0\) for every \(x\in X\). Every menu then has value zero. Hence \(\succsim\) is constant on \(\Menus\) and is represented by suitable \(\mu\) and $K$ together with this \(u\).
\end{proof}

\begin{lemma}
	\label{lem:support-difference}
	Suppose $u:X\to\R$ is nonconstant and affine. For every nonnegative continuous piecewise-affine function $h:\Delta(\Omega)\to\R_+$, there exist finite menus $A\supseteq B$ and a scalar $a>0$ such that
		$$\sigma_A^u(p)-\sigma_B^u(p)=ah(p)\quad\text{ for every }p\in\Delta(\Omega).$$
\end{lemma}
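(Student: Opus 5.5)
The plan is to write $h$ as a difference of two convex piecewise-affine functions. Each of these will then be realized as a support function $\sigma^u$ of a finite menu, and the containment $A\supseteq B$ is obtained by a union trick.

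\emph{Step 1 (DC decomposition).} I would invoke the standard fact that every continuous piecewise-affine function on a polytope is a difference of two convex piecewise-affine functions, each a maximum of finitely many affine functions. This follows, for instance, from the max--min representation of Kripfganz--Schulze/Ovchinnikov, $h=\max_i\min_{j\in S_i}\ell_j$. Writing $\min_{j\in S_i}\ell_j=\sum_{j\in S_i}\ell_j-\max_{j\in S_i}\sum_{k\in S_i\setminus\{j\}}\ell_k$ expresses each inner term as a difference of convex functions. A maximum of DC functions of this form is again DC, via $\max_i(\phi_i-\psi_i)=\max_i(\phi_i+\sum_{k\neq i}\psi_k)-\sum_k\psi_k$.

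An alternative route avoids the citation. Take a convex piecewise-affine $\psi$ whose kinks lie along every hyperplane on which $h$ breaks, with slopes large enough that $\phi:=h+\psi$ is convex. Such a $\psi$ can be built as a sum of terms $c\max(\ell,0)$, one for each breaking hyperplane $\{\ell=0\}$.

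Either way we obtain $h=\phi-\psi$ with $\phi=\max_{i}\ell_i$ and $\psi=\max_j m_j$ for finitely many affine $\ell_i,m_j$. I expect this step to be the main obstacle, and I would rely on the cited representation theorem rather than reprove it.

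\emph{Step 2 (linear form and containment).} On $\Delta(\Omega)$ every affine function can be written as $p\mapsto p\cdot v$ for some $v\in\R^\Omega$, because the constant can be absorbed using $\sum_\omega p(\omega)=1$. So $\phi(p)=\max_i p\cdot v_i$ and $\psi(p)=\max_j p\cdot w_j$. Since $h\ge0$ we have $\phi\ge\psi$, hence $\max\bigl(\max_i p\cdot v_i,\max_j p\cdot w_j\bigr)=\phi(p)$. Enlarging the first family to $\{v_i\}\cup\{w_j\}$ therefore leaves $\phi$ unchanged and makes the $\psi$-family a subfamily of the $\phi$-family.

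\emph{Step 3 (realization by acts).} Because $u$ is nonconstant and affine and $X$ is convex, $u(X)$ contains a nondegenerate interval $[c,d]$. All the vectors $v_i,w_j$ are transformed by the same map $v\mapsto \lambda v+\kappa\one$, with $\lambda>0$ small and $\kappa$ chosen so that every coordinate lands in $[c,d]$. This changes $\phi$ and $\psi$ to $\lambda\phi+\kappa$ and $\lambda\psi+\kappa$, so the difference becomes $\lambda h$. For each transformed vector $v$, pick an act $f_v$ with $u(f_v(\omega))=v_\omega$ for all $\omega$, mixing the lotteries attaining $c$ and $d$. Then set $A:=\{f_v\}\cup\{f_w\}$ and $B:=\{f_w\}$. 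With these menus, $B\subseteq A$ and $\sigma^u_A-\sigma^u_B=\lambda h$, so the claim holds with $a:=\lambda>0$.
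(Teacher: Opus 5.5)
Your proposal is correct and follows essentially the same route as the paper: a DC decomposition $h=\phi-\psi$ into maxima of finitely many affine functions, writing each affine piece as $p\mapsto p\cdot v$, and rescaling all vectors by a common map $v\mapsto av+b\mathbf{1}$ into the range of $u$. The paper cites Kripfganz--Schulze and Rockafellar--Wets for the decomposition, whereas your max--min sketch produces the finite-max form directly. Like the paper, you take the larger menu to be the union of the two act families, so that $\sigma_A^u=\max\{\sigma_B^u,\sigma_C^u\}=a\phi+b$ because $\phi\ge\psi$; the rescaling and the containment $B\subseteq A$ are handled exactly as in the paper.
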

\begin{proof}[Proof of \Cref{lem:support-difference}]
	We use three standard facts about continuous piecewise-affine functions on a finite-dimensional polytope (e.g., $\Delta(\Omega)$):
	\begin{itemize}
		\item Fact 1. Every such function can be written as the difference of two convex continuous piecewise-affine functions \citep{KripfganzSchulze1987}.
		\item Fact 2. Every convex continuous piecewise-affine function is the maximum of finitely many affine functions \citep[Theorem~2.49]{RockafellarWets1998}.
		\item Fact 3. Every affine function on $\Delta(\Omega)$ can be written as $p\mapsto p\cdot v$ for some vector $v\in\R^\Omega$. (This follows from the fact that $\sum_\omega p(\omega)=1$ for any $p\in\Delta(\Omega)$.)
	\end{itemize}
	By Fact 1, we can write $h$ as $h=\phi-\psi$ where $\phi,\psi:\Delta(\Omega)\to\R$ are convex and piecewise-affine. Since $h\geq 0$, $\phi(p)\geq\psi(p)$ for every $p\in\Delta(\Omega)$. By Fact 2, there are finite families of affine functions $\{\ell_i\}_{i=1}^I$ and $\{m_j\}_{j=1}^J$ such that
	$$\phi(p)=\max_{1\leq i\leq I}\ell_i(p)\text{ and }\psi(p)=\max_{1\leq j\leq J}m_j(p)\text{ for every }p\in\Delta(\Omega).$$
	By Fact 3, there exist vectors $\{v_i\}_{i=1}^I$ and $\{w_j\}_{j=1}^J$ from $\R^\Omega$ such that
	$$\ell_i(p)=p\cdot v_i\text{ for each $i=1,\dots, I$ and }m_j(p)=p\cdot w_j\text{ for each $j=1,\dots, J$}.$$
	We next rescale these vectors so that they can be generated by actual acts. Define
	$\underline{u}:=\min_{x\in X}u(x)$ and $\bar u:=\max_{x\in X}u(x).$
	Because $u$ is nonconstant, $\underline u<\bar u$. Moreover, every number in $[\underline u,\bar u]$ is the utility of some lottery. Let
	$$r:=\min_{i,j,\omega}\{v_i(\omega),w_j(\omega)\}\text{ and } s:=\max_{i,j,\omega}\{v_i(\omega),w_j(\omega)\}.$$
	Because there are only finitely many vectors, $r$ and $s$ are finite. We can thus choose some $a>0$ such that $a(s-r)\leq\bar u-\underline u.$ By setting $b=\underline u-ar$, we can guarantee that every coordinate of every vector $av_i+b\mathbf{1}$ and $aw_j+b\mathbf{1}$ belongs to $[\underline u,\bar u]$.
	
	As a result, for every $i\in\{1,\dots, I\}$, there is an act $f_i$ such that $u(f_i(\omega))=av_i(\omega)+b\text{ for every }\omega\in\Omega.$ Similarly, for every $j\in\{1,\dots, J\}$, there is an act $g_j$ satisfying $u(g_j(\omega))=aw_j(\omega)+b\text{ for every }\omega\in\Omega.$ Define two finite menus by $B=\{g_1,\dots, g_J\}$ and $C=\{f_1,\dots, f_I\}.$ For any $p\in\Delta(\Omega)$,
		\begin{align*}
			\max_{g\in B}\sum_{\omega\in\Omega}p(\omega)u(g(\omega))&=\max_{1\leq j\leq J}\sum_{\omega\in\Omega}
			p(\omega)[aw_j(\omega)+b]=\max_{1\leq j\leq J}\{am_j(p)+b\}
			=a\psi(p)+b.
		\end{align*}
	Thus, $\sigma_B^u=a\psi+b$. Similarly, $\sigma_C^u=a\phi+b$. Let $A=B\cup C$. Then $B\subseteq A$ and
	$$\sigma_A^u(p)=\max\{\sigma_B^u(p),\sigma_C^u(p)\}=\max\{a\psi(p)+b,a\phi(p)+b\}=a\phi(p)+b,$$
	where the last equality follows from $a>0$ and $\phi\geq\psi$. By construction,
	$$\sigma_A^u(p)-\sigma_B^u(p)=a[\phi(p)-\psi(p)]=ah(p)\text{ for every }p\in\Delta(\Omega).$$
	This completes the proof of \Cref{lem:support-difference}.
\end{proof}
For any nonempty \(S\subseteq\Omega\), let \(D(S):=\{\delta_\omega\mid\omega\in S\}\) denote the set of degenerate beliefs concentrated on states in $S$. Let \(\mathcal H_S\) be the collection of nonnegative continuous piecewise-affine functions \(h:\Delta(\Omega)\to\R_+\) satisfying $h(\delta_\omega)=0$ for every $\omega\in S$.

\Cref{lem:positive-off-degenerate,lem:null-state-support} present intermediate results that are more general than needed to deal with cases where the prior $\pi$ does not have full support. For our result under the full-support-prior assumption, it suffices to consider $S=\Omega$.
\begin{lemma}
	\label{lem:positive-off-degenerate}
	Let \(S\) be a nonempty subset of \(\Omega\), and let $\eta$ be a finite signed Borel measure on $\Delta(\Omega)$. Suppose
	\begin{equation}
		\int_{\Delta(\Omega)}h(p)\eta(dp)\geq 0
		\text{ for every }h\in\mathcal H_S,
		\label{eq:positive-HS}
	\end{equation}
	then the restriction of $\eta$ to $\Delta(\Omega)\setminus D(S)$ is a positive measure.
\end{lemma}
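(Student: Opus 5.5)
We argue by contradiction through the Jordan decomposition and approximation by piecewise-affine test functions.

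\emph{Setup.} Write $\eta=\eta^+-\eta^-$ for the Jordan decomposition, with $\eta^+$ and $\eta^-$ mutually singular. Let $U:=\Delta(\Omega)\setminus D(S)$, which is relatively open because $D(S)$ is finite. It suffices to show that $\eta^-(C)=0$ for every compact $C\subseteq U$; inner regularity of finite Borel measures on the compact metric space $\Delta(\Omega)$ then gives $\eta^-(U)=0$. So suppose, toward a contradiction, that $\eta^-(C)=\varepsilon>0$ for some compact $C\subseteq U$.

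\emph{Isolating the negative part.} By mutual singularity and regularity, we can find a compact $C'\subseteq C$ and an open set $O$ with $C'\subseteq O\subseteq U$ such that
\[
\eta^-(C')>\tfrac{2}{3}\varepsilon \quad\text{and}\quad \eta^+(O)<\tfrac{\varepsilon}{6}.
\]
To obtain these, take a Hahn set $N$ with $\eta^+(N)=0$ and $\eta^-(N^c)=0$. Choose the compact $C'\subseteq C\cap N$ by inner regularity of $\eta^-$, and then choose the open $O\supseteq C'$ by outer regularity of $\eta^+$ at $C'$, using $\eta^+(C')=0$. Intersecting with $U$ keeps $O$ inside $U$. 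Now use Urysohn's lemma to take a continuous $k:\Delta(\Omega)\to[0,1]$ with $k=1$ on $C'$ and $k=0$ off $O$. In particular, $k$ vanishes on $D(S)$.

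\emph{Approximating by admissible test functions.} We need $h\in\mathcal H_S$ that approximates $k$ uniformly. Continuous piecewise-affine functions on the polytope $\Delta(\Omega)$ are uniformly dense in $C(\Delta(\Omega))$; for instance, take piecewise-linear interpolation on fine triangulations that include the vertices $\delta_\omega$. Interpolation at the vertices preserves the values $k(\delta_\omega)=0$ for $\omega\in S$, and it preserves nonnegativity, since each interpolant is a convex combination of nonnegative nodal values. Choose such an $h\in\mathcal H_S$ with $\|h-k\|_\infty<\delta$, where $\delta$ satisfies $\delta\|\eta\|<\varepsilon/6$ and $\|\eta\|$ is the total variation. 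Then
\[
\int h\,d\eta\le \int k\,d\eta+\delta\|\eta\|\le \eta^+(O)-\eta^-(C')+\tfrac{\varepsilon}{6}<\tfrac{\varepsilon}{6}-\tfrac{2\varepsilon}{3}+\tfrac{\varepsilon}{6}<0,
\]
which contradicts \eqref{eq:positive-HS}. Here $\int k\,d\eta\le\eta^+(O)-\eta^-(C')$ holds because $0\le k\le \mathbf 1_O$ and $k=1$ on $C'$. Hence the restriction of $\eta$ to $U$ is positive.

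\emph{Main obstacle.} The delicate step is the approximation: the test functions must vanish exactly on $D(S)$ and stay nonnegative while still separating the negative mass. Vertex interpolation handles both constraints, because the points $\delta_\omega$ are vertices of the simplex and so are nodes of every triangulation. The isolation step is routine measure theory.
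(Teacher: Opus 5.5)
Your proof is correct. Its approximation step matches the paper's: both interpolate on triangulations of $\Delta(\Omega)$ whose nodes include every $\delta_\omega$ (the paper uses iterated barycentric subdivisions explicitly), so the interpolants are nonnegative, lie in $\mathcal H_S$, and converge uniformly.

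You diverge in passing from test functions to sets. The paper first extends the inequality to every nonnegative continuous $g$ vanishing on $D(S)$. It then specializes to compactly supported $\varphi$ on $\Delta(\Omega)\setminus D(S)$, extended by zero. Finally, it invokes the Riesz representation theorem on the locally compact space $\Delta(\Omega)\setminus D(S)$, together with uniqueness of the representing regular measure. You instead argue by contradiction with a single test function. The Hahn decomposition and regularity isolate a compact $C'$ carrying most of the negative mass inside an open set in your $U$ of small positive mass. An Urysohn function for that pair, approximated within $\mathcal H_S$, then has negative integral.

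Your route is more elementary and self-contained. It avoids the hypotheses left implicit in the paper's appeal to Riesz uniqueness: regularity of the restricted signed measure, finiteness of the positive representing measure, and the extension from compactly supported functions to $C_0$. The paper's route is shorter given the standard theorem. It is also more modular, since it records the intermediate fact that $\int g\,d\eta\geq 0$ for all nonnegative continuous $g$ vanishing on $D(S)$, which mirrors the density argument used later in the proof of the zero-scrutiny lemma.
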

\begin{proof}[Proof of \Cref{lem:positive-off-degenerate}]
	We want to show: $\eta(A)\geq 0$ for every Borel set $A\subseteq\Delta(\Omega)\setminus D(S)$.
	
	We extend the hypothesis from $h\in\mathcal H_S$ to arbitrary nonnegative continuous functions vanishing on $D(S)$. Fix a continuous function $g:\Delta(\Omega)\to\R_+$ satisfying \(g(\delta_\omega)=0\) for every \(\omega\in S\). We want to show from equation~\eqref{eq:positive-HS} that $\int_{\Delta(\Omega)}g(p)\eta(dp)\geq 0.$
	
	We do this by approximating $g$ using a sequence of functions $g_n\in\mathcal H_S$. Let $\mathcal T_n$ be the $n$th iterate of barycentric subdivisions of $\Delta(\Omega)$. Iterated barycentric subdivision is a triangulation of $\Delta(\Omega)$ that guarantees: (i) every $\delta_\omega$ is a vertex of $\mathcal T_n$ for every $n$; and (ii) the largest diameter of a simplex in $\mathcal T_n$ converges to zero.
	
	To construct $g_n$, let $m:=|\Omega|-1$ and define $g_n$ to agree with $g$ at every vertex of $\mathcal T_n$ and to be affine on each simplex of $\mathcal T_n$. Equivalently, if a simplex has vertices $q_{n,0},q_{n,1},\dots,q_{n,m}$ and $p$ has barycentric coordinates $\{\lambda_{n,k}(p)\}_{k=0}^m$ in that simplex, then
	\begin{equation}
			g_n(p):=\sum_{k=0}^m\lambda_{n,k}(p)g(q_{n,k}).
	\end{equation}
	The affine interpolants agree on shared faces because they take the same values at every shared vertex. Hence $g_n$ is globally well defined and continuous. By construction, $g_n(p)\geq 0$ for every $p\in\Delta(\Omega)$. Moreover, because each $\delta_\omega$ is a vertex in $\mathcal T_n$, $g_n(\delta_\omega)=g(\delta_\omega)=0$ for every \(\omega\in S\). In sum, $g_n\in\mathcal H_S$ for each $n$.
	
	We now show that $g_n\to g$ uniformly. For any $p\in\Delta(\Omega)$,
	\begin{equation}
			|g_n(p)-g(p)|=\left|\sum_{k=0}^m\lambda_{n,k}(p)[g(q_{n,k})-g(p)]\right|\leq\sum_{k=0}^m\lambda_{n,k}(p)|g(q_{n,k})-g(p)|.
	\end{equation}
	Let $\delta_n$ denote the largest diameter of a simplex in $\mathcal T_n$ (such a max exists because there are always only finitely many simplices in $\mathcal T_n$). By construction, $\delta_n\to 0$. Each vertex $q_{n,k}$ is within distance $\delta_n$ of $p$. Since $g$ is continuous on the compact set $\Delta(\Omega)$, $g$ must also be uniformly continuous (Heine--Cantor theorem). Thus, for any $\varepsilon>0$, there exists $\delta>0$ such that ``$\lVert p-q\rVert<\delta\Longrightarrow |g(p)-g(q)|<\varepsilon\text{ for all }p,q\in\Delta(\Omega).$'' Since $\delta_n\to 0$, there exists $N$ such that $n\geq N\Longrightarrow \delta_n<\delta$, which implies that
	$$|g_n(p)-g(p)|\leq \sum_{k=0}^m\lambda_{n,k}(p)|g(q_{n,k})-g(p)|<\varepsilon\text{ for every }p\in\Delta(\Omega).$$
	That is, $g_n$ converges to $g$ uniformly.
	
	Our final step is to extend the hypothesis to any nonnegative continuous $g$ that vanishes on $D(S)$.
	
	Because $\eta$ is a finite signed Borel measure,
	\begin{equation}
			\left|\int_{\Delta(\Omega)}[g_n(p)-g(p)]\eta(dp)\right|\leq\Vert g_n-g\Vert_{\infty}|\eta|(\Delta(\Omega)),
	\end{equation}
	where $|\eta|$ is the total-variation measure. $|\eta|(\Delta(\Omega))$ is finite by assumption and we have shown $g_n\to g$ uniformly. Thus, $\Vert g_n-g\Vert_{\infty}|\eta|(\Delta(\Omega))\to 0$, which implies that
	$$\int_{\Delta(\Omega)}g_n(p)\eta(dp)\to\int_{\Delta(\Omega)}g(p)\eta(dp).$$
	By the hypothesis of the Lemma, $\int_{\Delta(\Omega)}g_n(p)\eta(dp)\geq 0\text{ for every $n$.}$ Taking the limit preserves the inequality and thus $\int_{\Delta(\Omega)}g(p)\eta(dp)\geq 0.$
	
	Fix any $\varphi:\Delta(\Omega)\setminus D(S)\to\R_+$ that is continuous with compact support contained in $\Delta(\Omega)\setminus D(S)$ where the support of $\varphi$ is defined by $\supp(\varphi):=\overline{\{p\in\Delta(\Omega)\setminus D(S)\mid \varphi(p)\neq 0\}}$. Define an extension of $\varphi$ to $\Delta(\Omega)$ by
	\begin{equation}
		g(p)=\begin{cases}\varphi(p),&\text{ if }p\in\Delta(\Omega)\setminus D(S), \\ 0, & \text{ if }p\in D(S)\end{cases}.
	\end{equation}
	Since $\supp(\varphi)$ is compact and disjoint from $D(S)$, there exists an open neighborhood of $D(S)$ that does not intersect $\supp(\varphi)$. On this entire neighborhood, $\varphi(p)=0$. Thus, $g$ is identically zero in a neighborhood of every element of \(D(S)\). This makes it continuous at each $\delta_\omega\in D(S)$, and thus continuous on $\Delta(\Omega)$.
	
	Therefore, the constructed $g$ is nonnegative, continuous and vanishes on $D(S)$. Thus,
	$$\int_{\Delta(\Omega)\setminus D(S)}\varphi(p)\eta(dp)=\int_{\Delta(\Omega)}g(p)\eta(dp)\geq 0.$$
	
	Notice that our choice of $\varphi$ is arbitrary. Therefore, we have shown that ``the restriction of $\eta$ to $\Delta(\Omega)\setminus D(S)$ integrates every nonnegative continuous and compactly supported function to a nonnegative number.'' By the Riesz representation theorem for the dual of continuous functions on a locally compact Hausdorff space \citep[Theorem~14.12]{AliprantisBorder2006}, together with uniqueness of the representing regular signed measure, the quoted result implies that $\eta(A)\geq 0$ for every Borel set $A\subseteq\Delta(\Omega)\setminus D(S)$.
\end{proof}

\begin{lemma}
	\label{lem:null-state-support}
	Let \(\eta\) be a finite signed Borel measure on \(\Delta(\Omega)\) whose barycenter \(\pi:=\int_{\Delta(\Omega)}p\,\eta(dp)\) belongs to \(\Delta(\Omega)\), and let \(S=\supp(\pi)\). If the restriction of \(\eta\) to \(\Delta(\Omega)\setminus D(S)\) is positive, then \(\eta\) is supported on \(\Delta(S)\).
\end{lemma}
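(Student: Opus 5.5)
We must show $|\eta|(\Delta(\Omega)\setminus\Delta(S))=0$. Write $N:=\Delta(\Omega)\setminus\Delta(S)=\{p\in\Delta(\Omega): p(\omega)>0\text{ for some }\omega\notin S\}$. The plan is to exploit the barycenter condition coordinatewise. For each $\omega\notin S$ we have $\pi(\omega)=0$, so
\[
0=\int_{\Delta(\Omega)}p(\omega)\,\eta(dp).
\]
The approach is to split this integral into the part on $D(S)$ and the part off $D(S)$, and show that positivity off $D(S)$ forces $\eta$ to vanish on $N$.

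The first step is to observe that the integrand $p\mapsto p(\omega)$ vanishes on $D(S)$ when $\omega\notin S$. Indeed, for $\omega'\in S$ we have $\delta_{\omega'}(\omega)=0$ since $\omega'\neq\omega$. Hence
\[
0=\int_{\Delta(\Omega)\setminus D(S)}p(\omega)\,\eta(dp).
\]
By hypothesis the restriction of $\eta$ to $\Delta(\Omega)\setminus D(S)$ is a positive measure, and the integrand is nonnegative. A nonnegative function integrating to zero against a positive measure vanishes almost everywhere. Consequently
\[
\eta\bigl(\{p\in\Delta(\Omega)\setminus D(S): p(\omega)>0\}\bigr)=0,
\]
and since the restriction is positive, every Borel subset of this set also has $\eta$-measure (equivalently $|\eta|$-measure) zero.

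The second step takes the union over the finitely many $\omega\notin S$. Since $N\cap D(S)=\emptyset$, we have $N=\bigcup_{\omega\notin S}\{p\notin D(S): p(\omega)>0\}$, so $|\eta|(N)=0$. Therefore $\eta$ is concentrated on $\Delta(S)$, which is closed, so $\supp(\eta)\subseteq\Delta(S)$.

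There is little genuine obstacle in this argument. The one point needing care is to notice that $D(S)$, the only place where $\eta$ may be negative, lies inside $\Delta(S)$, and that the coordinate functions for null states vanish there. This is what lets the barycenter identity be applied purely against the positive part. If $S=\Omega$ the statement is trivial.
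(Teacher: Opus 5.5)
Your proof is correct and follows essentially the same argument as the paper. For each null state $\omega$, you drop the $D(S)$ part of the barycenter identity $0=\int p(\omega)\,\eta(dp)$, use positivity off $D(S)$ to conclude that $\{p: p(\omega)>0\}$ is $\eta$-null there, and take the finite union over $\omega\notin S$; your explicit remarks that $|\eta|$ coincides with $\eta$ off $D(S)$ and that $\Delta(S)$ is closed make the final support step slightly more careful than the paper's version.
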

\begin{proof}[Proof of \Cref{lem:null-state-support}]
	Fix \(\omega\notin S\). Since \(\pi(\omega)=0\), $0=\int_{\Delta(\Omega)}p(\omega)\eta(dp).$ Every \(\delta_s\in D(S)\) assigns probability zero to \(\omega\). Consequently, $0=\int_{\Delta(\Omega)\setminus D(S)}p(\omega)\eta(dp).$
	
	The restriction of \(\eta\) in this integral is a positive measure and the integrand is nonnegative. It follows that this restriction assigns zero mass to \(\{p:p(\omega)>0\}\).
	
	Since $\Delta(S)=\bigcap_{\omega\notin S}\{p\mid p(\omega)=0\}$, repeating the above argument for every \(\omega\notin S\) shows that \(\eta\) assigns zero mass outside $\Delta(S)$. The possible atoms in \(D(S)\) already lie in this face. Hence \(\eta\) is supported on \(\Delta(S)\).
\end{proof}

We are now ready to prove \Cref{thm:characterization}.

\begin{proof}[Proof of \Cref{thm:characterization}]
	The necessity of Axioms 1 to 4 follows from \Cref{lem:scrutiny-additive} and \Cref{prop:additive-foundation}; necessity of Axiom 5 follows from \Cref{lem:scrutiny-defensible}. 
	
	The sufficiency of the axioms when \(\succsim\) is lottery-trivial follows from \Cref{lem:trivial-scrutiny}. Suppose henceforth that \(\succsim\) is lottery-nontrivial. By \Cref{prop:additive-foundation}, \(\succsim\) has an additive EU representation \((u,\eta)\), where \(u\) is nonconstant, $\eta(\Delta(\Omega))=1$ with its barycenter $\pi:=\int_{\Delta(\Omega)}p\,\eta(dp)$ in the interior of $\Delta(\Omega)$ such that $\succsim$ is represented by $W(A)=\int_{\Delta(\Omega)}\sigma_A^u(p)\eta(dp).$ 
	
	Let \(S:=\supp(\pi)=\Omega\). The rest of the proof only relies on $S=\supp(\pi)$. 
	
	We first use conservatism to locate the negative part of \(\eta\). Fix \(h\in\mathcal H_S\). By \Cref{lem:support-difference}, there are finite menus \(A\supseteq B\) and \(a>0\) such that $\sigma_A^u(p)-\sigma_B^u(p)=ah(p)$ for every $p\in\Delta(\Omega)$. Because \(h(\delta_\omega)=0\) for every \(\omega\in S\),
	\[
	\begin{aligned}
		H_{u,\pi}(A)-H_{u,\pi}(B)
		&=\sum_{\omega\in\Omega}\pi(\omega)
		[\sigma_A^u(\delta_\omega)-\sigma_B^u(\delta_\omega)]
		=a\sum_{\omega\in S}\pi(\omega)h(\delta_\omega)=0.
	\end{aligned}
	\]
	Because $B\subseteq A$, each difference $\sigma_A^u(\delta_\omega)-\sigma_B^u(\delta_\omega)$ is nonnegative. Since $\pi$ has full support, the preceding equality implies $\sigma_A^u(\delta_\omega)=\sigma_B^u(\delta_\omega)$ for every $\omega\in\Omega$. Hence every $g\in A\setminus B$ is dominated by $B$: for each state $\omega$, an act in the finite menu $B$ that attains $\sigma_B^u(\delta_\omega)$ is weakly preferred to $g(\omega)$. If $A=B$, then $A\sim B$. Otherwise, list $A\setminus B$ as $\{g_1,g_2,\dots,g_n\}$. Applying conservatism and transitivity repeatedly gives
	$$A=B\cup\{g_1,g_2,\dots,g_n\}\succsim B\cup\{g_2,\dots, g_n\}\succsim\cdots\succsim B\cup\{g_n\}\succsim B.$$
	Thus, $A\succsim B$. Thus, $a\int_{\Delta(\Omega)}h(p)\eta(dp)=W(A)-W(B)\geq 0$. Since our choice of \(h\in\mathcal H_S\) was arbitrary, \Cref{lem:positive-off-degenerate} implies that the restriction of \(\eta\) to \(\Delta(\Omega)\setminus D(S)\) is positive.
	
	This positivity also eliminates any hidden dependence on null states. By \Cref{lem:null-state-support}, the barycenter condition implies that \(\eta\) is supported on \(\Delta(S)\). Thus the only possible negative mass consists of the finitely many atoms \(\eta(\{\delta_\omega\})\) with \(\omega\in S\). In particular, $\eta(\{\delta_\omega\})=0$ for every $\omega\notin S$.
	
	We now obtain the scrutiny representation through a reparametrization. 
	
	Because \(\pi(\omega)>0\) for every \(\omega\in S\), we can choose \(K\geq0\) sufficiently large that $\eta(\{\delta_\omega\})+K\pi(\omega)\geq0$ for every $\omega\in S$. Because $S=\supp(\pi)=\Omega$, the full-revelation distribution $\nu_\pi$ is supported on $D(S)$. Define $\mu:=\frac{1}{1+K}\eta+\frac{K}{1+K}\nu_\pi.$ Then for every Borel set \(A\subseteq\Delta(\Omega)\), $(1+K)\mu(A)=\eta(A\setminus D(S))+\sum_{\omega\in S:\,\delta_\omega\in A}[\eta(\{\delta_\omega\})+K\pi(\omega)]\geq0.$ Hence \(\mu\) is positive. Its total mass is one, and $\int_{\Delta(\Omega)}p\,\mu(dp)=\frac{1}{1+K}\pi+\frac{K}{1+K}\pi =\pi.$ Finally, substituting $\eta=(1+K)\mu-K\nu_\pi$ into the additive EU representation gives
	\[
	\begin{aligned}
		W(A)
		&=(1+K)\int_{\Delta(\Omega)}\sigma_A^u(p)\mu(dp)
		-K\int_{\Delta(\Omega)}\sigma_A^u(p)\nu_\pi(dp)\\
		&=(1+K)\int_{\Delta(\Omega)}\sigma_A^u(p)\mu(dp)
		-K\sum_{\omega\in\Omega}\pi(\omega)
		\max_{f\in A}u(f(\omega)),
	\end{aligned}
	\]
	which is the reduced form of a scrutiny representation.
\end{proof}

\section{Other Omitted Proofs}

\subsection{Proof of \Cref{lem:lottery-nontriviality-equivalence}}
\label{app:proof-lottery-nontriviality-equivalence}
\begin{proof}
	Let $(u,\eta)$ be an additive EU representation, so that $W(A)=\int_{\Delta(\Omega)}\sigma_A^u(p)\,\eta(dp)$ and $\eta(\Delta(\Omega))=1.$ Lottery nontriviality plainly implies that $\succsim$ is nonconstant.
	
	Conversely, suppose lottery nontriviality fails. Since $W(\{x\})=\int_{\Delta(\Omega)}u(x)\,\eta(dp)=u(x)$ for every $x\in X$, failure of lottery nontriviality implies that $u$ is constant on $X$, say $u\equiv c$. Consequently, $\sigma_A^u(p)=c$ for every $A\in\Menus$ and every $p\in\Delta(\Omega)$, and hence $W(A)=c$ for every $A\in\Menus$. Thus $\succsim$ is constant. Taking the contrapositive completes the proof.
\end{proof}
\subsection{Proof of \Cref{lem:zero-scrutiny}}
\label{app:proof-zero-scrutiny}
\begin{proof}
	Suppose first that $(u,\mu,0)$ represents $\succsim$. If $A\supseteq B$, then $\sigma_A^u(p)\geq\sigma_B^u(p)$ for every $p\in\Delta(\Omega)$. Therefore, $V(A)-V(B)=\int_{\Delta(\Omega)}\bigl[\sigma_A^u(p)-\sigma_B^u(p)\bigr]\,\mu(dp)\geq0,$ so $\succsim$ satisfies preference for flexibility.
	
	Conversely, suppose that $\succsim$ satisfies preference for flexibility, and fix a scrutiny representation $(u,\mu,K)$. Let $\pi$ be the barycenter of $\mu$, let $\nu_\pi:=\sum_{\omega\in\Omega}\pi(\omega)\delta_{\delta_\omega}$ and $\eta:=(1+K)\mu-K\nu_\pi.$ Then $V(A)=\int_{\Delta(\Omega)}\sigma_A^u(p)\,\eta(dp),$ while $\eta$ has total mass one and barycenter $\pi$. If $u$ is constant, $(u,\mu,0)$ already represents the same constant preference. Suppose, therefore, that $u$ is nonconstant. By Lemma \ref{lem:support-difference}, for every nonnegative continuous piecewise-affine function $h:\Delta(\Omega)\to\R_+$, there exist finite menus $A\supseteq B$ and $a>0$ such that $\sigma_A^u-\sigma_B^u=ah.$ Preference for flexibility therefore gives $\int_{\Delta(\Omega)}h(p)\,\eta(dp)=a^{-1}[V(A)-V(B)]\geq 0$.  Nonnegative continuous piecewise-affine functions are uniformly dense in the set of nonnegative continuous functions on $\Delta(\Omega)$. Since $\eta$ is finite, it follows that $\int_{\Delta(\Omega)}g(p)\,\eta(dp)\geq0$ for every continuous $g\geq0$. Hence $\eta$ is a positive measure. Because it has total mass one and barycenter $\pi$,  $(u,\eta,0)$ is a valid scrutiny representation of $\succsim$.
\end{proof}

\subsection{Proof of \Cref{lem:minimum-pair}}
\label{app:proof-minimum-pair}
\begin{proof}
	For any prior $\pi$, recall that $\nu_\pi$ denotes the posterior distribution induced by full revelation. Every scrutiny representation $(u,\mu,K)$ induces the additive signed measure $\eta=(1+K)\mu-K\nu_\pi,$ with $\pi=\int_{\Delta(\Omega)}p\,\eta(dp).$ Because $\succsim$ is nonconstant, \Cref{prop:additive-uniqueness} implies that $\eta$, and therefore $\pi$, is common to all scrutiny representations of $\succsim$.
	
	The proof of \Cref{thm:characterization} shows that $\eta$ is nonnegative away from the degenerate posteriors. Moreover, if $\pi(\omega)=0$, Bayes plausibility implies $\mu(\{\delta_\omega\})=0$, and hence $\eta(\{\delta_\omega\})=0$. Define $\underline K:=\max_{\omega:\,\pi(\omega)>0}\left(-\dfrac{\eta(\{\delta_\omega\})}{\pi(\omega)}\right)_+$, where $r_+:=\max\{r,0\}$. Define $\underline\mu:=\dfrac{\eta+\underline K\nu_\pi}{1+\underline K}$. By construction, $\underline\mu$ is a nonnegative measure. It has total mass one and barycenter $\pi$. Thus $\underline\mu$ is Bayes-plausible with respect to $\pi$, and $\eta=(1+\underline K)\underline\mu-\underline K\nu_\pi.$ Consequently, $(\underline\mu,\underline K)\in\mathcal P(\succsim)$.
	
	Now fix $(\mu,K)\in\mathcal P(\succsim)$. Uniqueness of $\eta$ gives $\mu=\frac{1}{1+K}\eta+\frac{K}{1+K}\nu_\pi.$ Nonnegativity of $\mu$ at each $\delta_\omega$ with $\pi(\omega)>0$ implies $\eta(\{\delta_\omega\})+K\pi(\omega)\geq0$ for each $\omega\in\Omega$. Therefore $K\geq\underline K$. Substituting the definition of
	$\underline\mu$ yields
	\[
	\mu=\frac{1+\underline K}{1+K}\,\underline\mu+\frac{K-\underline K}{1+K}\,\nu_\pi.
	\]
	Because full revelation Blackwell dominates every Bayes-plausible posterior distribution, this mixture is Blackwell more informative than $\underline\mu$.
	
	Finally, if $\underline K=0$, then $(u,\underline\mu,0)$ is a scrutiny representation, \Cref{lem:zero-scrutiny} then implies $\succsim$ satisfies preference for flexibility. Conversely, if $\succsim$ satisfies preference for flexibility, \Cref{lem:zero-scrutiny} implies that it has a scrutiny representation with $K=0$. Since $\underline K\geq0$ by definition, this forces $\underline K=0$. 
\end{proof}

\subsection{Proof of \Cref{thm:menu-uniqueness}}
\label{app:proof-menu-uniqueness}
\begin{proof}
	``Only if.'' Suppose two minimum representations induce the same lottery-nontrivial preference. They induce an additive signed measure $\eta=(1+K)\mu-K\nu_\pi$, where $\pi$ is the barycenter of $\mu$. Since $\succsim$ is lottery-nontrivial, \Cref{prop:additive-uniqueness} implies that $u\approx u'$ and $\eta=\eta'$.  In particular, both measures have the same barycenter $\pi$.
	
	The construction in \Cref{lem:minimum-pair} determines the minimum pair uniquely from $\eta$ and $\pi$:
	\[
	\underline K=\max_{\omega:\,\pi(\omega)>0}\left(-\frac{\eta(\{\delta_\omega\})}{\pi(\omega)}
	\right)_+,
	\qquad
	\underline\mu=\frac{\eta+\underline K\nu_\pi}{1+\underline K}.
	\]
	Because both representations are minimum representations, it follows that $K=K'=\underline K$ and $\mu=\mu'=\underline\mu$.
	
	``If.'' Conversely, suppose that $u\approx u'$, $\mu=\mu'$, and $K=K'$. Write $u'=au+b$ for some $a>0$ and $b\in\R$. The common posterior distribution implies that the functions generated by the two utilities satisfy $I'(A)=aI(A)+b$ and $H'(A)=aH(A)+b.$ Consequently, $V'(A)=(1+K)I'(A)-KH'(A)=aV(A)+b.$ Thus \(V\) and \(V'\) induce the same preference.
\end{proof}

\subsection{Proof of \Cref{thm:comparative-scrutiny}}
\label{app:proof-comparative-scrutiny}
\begin{proof}
	First suppose (2) holds. Without loss, normalize $u_1=u_2=:u$. $\mu_1=\mu_2$ then implies that $\pi_1=\pi_2:=\pi$, and hence the functions $I_1=I_2:=I$ and $H_1=H_2:=H$. Define $\Gamma(A):=H(A)-I(A)$ to be the common hindsight gap. $\Gamma\geq 0$.  Since $V_i(A)=I(A)-K_i\Gamma(A)$ and $V_i(\{f\})=\sum_{\omega\in\Omega}\pi(\omega)u(f(\omega))$ does not depend on $K_i$, $K_1\geq K_2$ implies condition (a). If $A$ and $B$ dominate each other under both preferences, their common utility index implies that $\max_{f\in A}u(f(\omega))=\max_{g\in B}u(g(\omega))$ for every $\omega\in\Omega$. Hence $H(A)=H(B)$ and $V_i(A)-V_i(B)=(1+K_i)[I(A)-I(B)]$, so condition (b) also holds.
	
	Conversely, suppose (1) holds (i.e., conditions (a) and (b) hold).  For an act $f$, let
	\[
	v_i(f):=V_i(\{f\})=\sum_{\omega\in\Omega}\pi_i(\omega)u_i(f(\omega)),\ i=1,2.
	\]
	Applying condition (a) to singleton menus shows that $v_2(f)\geq v_2(g)$ implies $v_1(f)\geq v_1(g)$.  Equality under $v_2$ implies equality under $v_1$ by applying this implication in both directions. Because both $v_1,v_2:\Acts\ \to\R$ are nonconstant and affine, it follows that $v_1=av_2+b$ for some $a>0$ and $b\in\R$.  Restriction to constant acts gives $u_1=au_2+b$.  After normalizing the utilities to be equal, variation of an act in one state at a time gives $\pi_1=\pi_2=: \pi$.
	
	We now show that $\mu_1=\mu_2$. 
	
	Since $\pi_1=\pi_2=\pi$, it is without loss to normalize $u_1=u_2:=u$ with $u(X)=[0,1]$. Then $H_1(A)=H_2(A)$ for any $A\in\Menus$. Let the common hindsight value be $H$. Let $\Gamma_i(A):=H(A)-I_i(A)$ denote the expected hindsight gap under $\mu_i$. Notice that $\Gamma_i$ is affine with respect to menu mixtures and that $\Gamma_i(\{f\})=0$ for every act $f$.
	
	We first show that $\Gamma_1$ and $\Gamma_2$ induce the same ordering over all menus. Fix arbitrary $A,B\in\Menus$, and write $m_C(\omega):=\max_{h\in C}u(h(\omega))$ for any menu $C$. Choose $\alpha>0$ sufficiently small that $r(\omega):=\frac{\alpha}{1-\alpha}[m_B(\omega)-m_A(\omega)]\in[-1,1]$ for every $\omega$. Because $u(X)=[0,1]$, there exist acts $f,g$ such that $u(f(\omega))-u(g(\omega))=r(\omega)$ for every $\omega$. Define $\widehat A:=\alpha A+(1-\alpha)\{f\}$ and $\widehat B:=\alpha B+(1-\alpha)\{g\}$.

	The two new menus have the same statewise envelope: $m_{\widehat A}(\omega)=m_{\widehat B}(\omega)$ for every $\omega$. They therefore dominate each other under both $\succsim_1$ and $\succsim_2$. Since $\Gamma_i$ vanishes on singleton menus, $\Gamma_i(\widehat A)=\alpha \Gamma_i(A)$ and $\Gamma_i(\widehat B)=\alpha \Gamma_i(B)$ for $i=1,2$. Thus,
	\[
	V_i(\widehat A)-V_i(\widehat B)
	=(1+K_i)[I_i(\widehat A)-I_i(\widehat B)]=
	-(1+K_i)
	[\Gamma_i(\widehat A)-\Gamma_i(\widehat B)].
	\]
	Applying condition (b) to $\widehat A$ and $\widehat B$ gives: $\Gamma_1(A)\leq \Gamma_1(B)$ if and only if $\Gamma_2(A)\leq \Gamma_2(B).$
	
	Thus, $\Gamma_1,\Gamma_2:\Menus\to\R$ are nonconstant affine representations of the same ordering over menus. Affine-utility uniqueness implies that there is some $a>0$ and $b\in\R$ such that $\Gamma_1=a\Gamma_2+b.$ Since both $\Gamma_1$ and $\Gamma_2$ vanish on every singleton menu, $b=0$. Hence $\Gamma_1=a\Gamma_2$ for some $a>0$. Because $I_i=H-\Gamma_i$, $\Gamma_1=a\Gamma_2$ implies that $I_1=aI_2+(1-a)H.$
	
	Recall that $\nu_\pi$ is the full-revelation distribution. Then $H(A)=\int_{\Delta(\Omega)}\sigma_A^u(p)\nu_\pi(dp)$. 
	
	Suppose, toward a contradiction, that $a<1$. Then $I_1=aI_2+(1-a)H$ implies
	\[
	\int_{\Delta(\Omega)}\sigma_A^u(p)\mu_1(dp)=\int_{\Delta(\Omega)}\sigma_A^u(p)\bigl[a\mu_2+(1-a)\nu_\pi\bigr](dp).
	\]
	Both $\mu_1$ and $a\mu_2+(1-a)\nu_\pi$ are probability measures giving subjective-learning representations of the preference induced by $I_1$. The uniqueness result of Dillenberger et al.\ (2014, Theorem~1) therefore implies $\mu_1=a\mu_2+(1-a)\nu_\pi.$ Because $(u,\mu_1,K_1)$ is a minimum representation for a flexibility-violating $\succsim_1$, the maximum defining $K_1$ in \Cref{lem:minimum-pair} binds at some $\omega\in\supp(\pi)$. Consequently, $\mu_1(\{\delta_\omega\})=0$ for at least one such $\omega$. But  $\mu_1=a\mu_2+(1-a)\nu_\pi$ would imply $\mu_1(\{\delta_\omega\})\geq(1-a)\pi(\omega)>0$ for every $\omega\in\supp(\pi)$, a contradiction. Therefore $a<1$ is impossible.
	
	Suppose, toward a contradiction, that $a>1$. Then $\Gamma_2=a^{-1}\Gamma_1$ and $I_2=a^{-1}I_1+(1-a^{-1})H.$ Applying the same identification result gives $\mu_2=a^{-1}\mu_1+(1-a^{-1})\nu_\pi,$ which contradicts the fact that the minimum distribution $\mu_2$ must have zero mass at some $\delta_\omega$ with $\omega\in\supp(\pi)$. Therefore $a>1$ is also impossible.
	
	Thus $a=1$, so $I_1=I_2$. Applying the uniqueness result of DLST again yields $\mu_1=\mu_2$.
	
	It remains to order the scrutiny parameters.  
	
	Since the preferences represented by the minimum representations are flexibility-violating, $\mu\neq\nu_\pi$. Thus $\mu$ assigns positive probability to nondegenerate posteriors; the menu consisting of the state-indicator acts then has $\Gamma(A):=H(A)-I(A)>0$.  If $K_1<K_2$, then $V_1(A)>V_2(A)$.  Choose a constant act $x$ with utility in the interior of $u(X)$ and, for small $\alpha>0$, set $A_\alpha:=\alpha A+(1-\alpha)\{x\}$.  Both $V_i(A_\alpha)$ then lie in the interior of $u(X)$ and $V_1(A_\alpha)>V_2(A_\alpha)$.  A constant act $y$ can be chosen with utility strictly between these two values.  Thus $\{y\}\succ_2 A_\alpha$ but $A_\alpha\succ_1\{y\}$, contradicting condition (a).  Hence $K_1\geq K_2$, completing the proof.
\end{proof}

\subsection{Proof of \Cref{prop:binary-choice-identification}}
\label{app:proof-binary-choice-identification}
\begin{proof}
	We first identify utility from choices between constant acts. Suppose $x,y\in X$ and $u(x)>u(y)$. Then $x$ is the unique maximizer from $\{x,y\}$ at every posterior, so $\rho_{\{x,y\}}(x)=1$ under the first representation. Under the second representation this rules out $u'(y)>u'(x)$; hence $u'(x)\geq u'(y)$. Interchanging the two representations gives
	\begin{equation}
		u(x)>u(y)\Longrightarrow u'(x)\geq u'(y)\quad\text{ and }\quad u'(x)>u'(y)\Longrightarrow u(x)\geq u(y).
		\label{eq:weak-order-preservation}
	\end{equation}
	We claim that $u'$ is constant on every level set of $u$. Suppose $u(x)=u(y)=u_0$. Suppose by contradiction (and without loss of generality) that $u'(x)>u'(y)$. If $u_0\neq \min u(X)$, choose $z$ with $u(z)<u(x)$ and set $x_\varepsilon=(1-\varepsilon)x+\varepsilon z$. For all sufficiently small $\varepsilon>0$, we have $u'(x_\varepsilon)>u'(y)$ but $u(x_\varepsilon)<u(y)$, contradicting \eqref{eq:weak-order-preservation}. If $u_0=\min u(X)$, choose $z$ with $u(z)>u(y)$ and instead perturb $y$ toward $z$; the same argument gives a contradiction. Thus $u'$ is constant on the level sets of $u$. Since both functions are affine and nonconstant on the convex set $X$, it follows that $u'\approx u$.
	
	Normalize $u'$ and $u$ so that $u(X)=[0,1]$ and $u'=u$. For every vector $v=(v_\omega)_{\omega\in\Omega}\in[0,1]^\Omega$, define the act $f_v$ by $f_v(\omega):=x_{v_\omega}.$ From the binary menu $\{f_v,x_t\}$, the support restriction on tie-breaking gives
	\begin{align}
		&\ \mu\{p\mid p\cdot v>t\}\leq \rho_{\{f_v,x_t\}}(f_v)\leq \mu\{p\mid p\cdot v\geq t\}; \\
		&\ \mu'\{p\mid p\cdot v>t\}\leq \rho_{\{f_v,x_t\}}(f_v)\leq \mu'\{p\mid p\cdot v\geq t\}.
		\label{eq:projection-sandwich}
	\end{align}
	At every $t$ that is a continuity point of the distributions of $p\cdot v$ under both $\mu$ and $\mu'$, both inequalities collapse to equality. The common binary choice probability therefore implies that $p\cdot v$ has the same distribution under $\mu$ and $\mu'$: the common continuity points are dense, and a distribution function is determined by its values on any dense set of continuity points. This holds for every $v\in[0,1]^\Omega$.
	
	We now use the following finite-dimensional identification result. The \emph{Cram\'er--Wold theorem} states that if $\nu_1$ and $\nu_2$ are Borel probability measures on $\R^m$ and the scalar projection $x\mapsto a\cdot x$ has the same distribution under $\nu_1$ and $\nu_2$ for every $a\in\R^m$, then $\nu_1=\nu_2$. Equivalently, the distribution of a random vector is uniquely determined by the distributions of all its linear combinations \citep[pp.~290--294]{CramerWold1936}.
	
	To apply the theorem, regard each posterior $p$ as a vector in $\R^{|\Omega|}$. Every $a\in\R^\Omega$ can be written as $a=c\mathbf 1+r v$, where $v\in[0,1]^\Omega$, $c\in\R$ and $r\geq 0$. Indeed, if $a$ is nonconstant, take $c=\min_{\omega}a(\omega)$, $r=\max_{\omega}a(\omega)-\min_{\omega}a(\omega)$ and $v=\frac{a-c\mathbf 1}{r}$. If $a$ is constant, take $r=0$. Because $p\cdot\mathbf 1=1$ for every posterior $p$, $p\cdot a=c+r(p\cdot v).$ We have already shown that $p\cdot v$ has the same distribution under $\mu$ and $\mu'$ for every $v\in[0,1]^\Omega$. Hence $p\cdot a$ has the same distribution under the two measures for every $a\in\R^\Omega$. The Cram\'er--Wold theorem therefore implies $\mu=\mu'$.
\end{proof}

\subsection{Proof of \Cref{lem:recover-choice-value}}
\label{app:proof-choice-value}
\begin{proof}
	Conditional on $p$, every posterior maximizer belongs to $A$ if
	$\sigma_A^u(p)>t$, while $x_t$ is the unique maximizer if $\sigma_A^u(p)<t$. If $\sigma_A^u(p)=t$, the tie-breaking kernel may assign any probability to choosing from $A$. Consequently,
	\[
	\mu\{p\mid\sigma_A^u(p)>t\}\leq T_A^\rho(t)\leq \mu\{p\mid\sigma_A^u(p)\geq t\}.
	\]
	For completeness, Tonelli's theorem states that if $(S,\alpha)$ and $(T,\beta)$ are $\sigma$-finite measure spaces and $h:S\times T\to[0,\infty]$ is measurable, then the order of integration may be reversed:
	\begin{equation}
		\int_S\!\left(\int_T h(s,t)\,\beta(dt)\right)\alpha(ds)
		=
		\int_T\!\left(\int_S h(s,t)\,\alpha(ds)\right)\beta(dt),
		\label{eq:tonelli-statement}
	\end{equation}
	where the common value may be $+\infty$ \citep[Theorem~2.37]{Folland1999}. In our application, $\alpha=\mu$, $\beta$ is Lebesgue measure on $[0,1]$, and
	\[
	h_{>}(p,t):=\mathbf 1\{t<\sigma_A^u(p)\}, \qquad h_{\geq}(p,t):=\mathbf 1\{t\leq\sigma_A^u(p)\}.
	\]
	These functions satisfy Tonelli's hypotheses. They are nonnegative. Moreover, because $A$ is finite, $\sigma_A^u$ is the maximum of finitely many continuous affine functions of $p$ and is therefore continuous. Thus the sets $\{(p,t)\mid t<\sigma_A^u(p)\}$ and $\{(p,t)\mid t\leq\sigma_A^u(p)\}$ are Borel measurable, so their indicators are measurable. Both measures are finite and hence $\sigma$-finite.
	
	Since $u(X)=[0,1]$, we also have $0\leq\sigma_A^u(p)\leq1$. Applying Tonelli to $h_{>}$ gives
	\begin{align*}
		\int_0^1\mu\{p\mid\sigma_A^u(p)>t\}\,dt
		&=\int_0^1\int_{\Delta(\Omega)}
		\mathbf 1\{t<\sigma_A^u(p)\}\,\mu(dp)\,dt\\
		&=\int_{\Delta(\Omega)}\int_0^1
		\mathbf 1\{t<\sigma_A^u(p)\}\,dt\,\mu(dp)=\int_{\Delta(\Omega)}\sigma_A^u(p)\,\mu(dp).
	\end{align*}
	The last equality holds because, for fixed $p$, the inner integrand is the indicator of $[0,\sigma_A^u(p))$, whose Lebesgue measure is $\sigma_A^u(p)$. Applying the same argument to $h_{\geq}$ gives
	\[
	\int_0^1\mu\{p\mid\sigma_A^u(p)\geq t\}\,dt=\int_{\Delta(\Omega)}\sigma_A^u(p)\,\mu(dp),
	\]
	because $[0,\sigma_A^u(p)]$ has the same Lebesgue measure as $[0,\sigma_A^u(p))$. The integral of $T_A^\rho$ is sandwiched between these two equal quantities, which proves \eqref{eq:choice-value} and also shows why tie-breaking at $t=\sigma_A^u(p)$ does not affect the result.
\end{proof}

\subsection{Proof of \Cref{thm:joint-uniqueness}}
\label{app:proof-joint-uniqueness}
\begin{proof}
	\Cref{prop:binary-choice-identification} gives $u'\approx u$ and $\mu'=\mu$, and the common posterior distribution has a common Bayesian prior $\pi$. Normalize the common utility so that $u(X)=[0,1]$, and write as before
	\[
	I(A)=\int_{\Delta(\Omega)}\sigma_A^u(p)\,\mu(dp), \qquad H(A)=\sum_{\omega}\pi(\omega)\max_{f\in A}u(f(\omega)), \qquad \Gamma(A)=H(A)-I(A).
	\]
	The two normalized menu indices are $V(A)=I(A)-K\Gamma(A)$ and $V'(A)=I(A)-K'\Gamma(A)$. Both $V$ and $V'$ are affine representations of the same flexibility-violating preference on the mixture space $\Menus$. Hence $V'=aV+b$ for some $a>0$ and $b\in\R$. For every constant test act $x_t$, we have $V(\{x_t\})=V'(\{x_t\})=t.$ Therefore $t=at+b$ for every $t\in[0,1]$, which implies $a=1$ and $b=0$. It follows that $(K'-K)\Gamma(A)=0$ for every $A\in\Menus$. By construction, $\Gamma(A)\geq 0$ for every $A$. Moreover, flexibility violation by $\succsim$ implies that $\Gamma(B)>0$ for some $B\in\Menus$. Putting these together, it must be $K'=K$.
\end{proof}

\subsection{Proof of \Cref{thm:joint-characterization}}
\label{app:proof-joint-characterization}
\begin{proof}
	By \Cref{prop:binary-choice-identification}, the normalized utility index and the distribution of posterior beliefs are uniquely identified by $\rho$. Fix an information representation $(u,\mu,\tau)$ of $\rho$ and normalize $u$ so that $u(X)=[0,1]$. Let $\pi$ denote the barycenter of $\mu$, so that $\pi(\omega):=\int_{\Delta(\Omega)}p(\omega)\,\mu(dp)$ for every $\omega\in\Omega$. For every menu $A\in\Menus$, define its \emph{hindsight benchmark} by
	\[
		H_\rho(A)
		:=
		\sum_{\omega\in\Omega}\pi(\omega)
		\max_{f\in A}u(f(\omega)).
	\]
	This definition is independent of the particular information representation chosen because the normalized $u$ and $\mu$, and hence their barycenter $\pi$, are identified by $\rho$.

	``Only if.'' Suppose $(\succsim,\rho)$ has a joint scrutiny representation, with scrutiny intensity $K$. Under the normalization fixed above, its menu-value functional can be written as
	\begin{equation}
		V(A)
		=
		(1+K)I_\rho(A)-K H_\rho(A).
		\label{eq:joint-value}
	\end{equation}
	If the premises of JSC hold, then $I_\rho(A)-I_\rho(B)\geq 0$. Because $B$ dominates $A$, $H_\rho(B)\geq H_\rho(A)$. Equation~\eqref{eq:joint-value} therefore gives $V(A)\geq V(B)$, which implies $A\succsim B$.
	
	``If.'' Suppose $(\succsim,\rho)$ satisfies SA and JSC, and continue to use the information representation $(u,\mu)$ of $\rho$ fixed above. We want to show that there exists $K\geq 0$ such that $(u,\mu,K)$ is a scrutiny representation of $\succsim$.
	
	Recall that $\pi$ is the barycenter of $\mu$. Let $W':\Menus\to\R$ be a scrutiny representation of $\succsim$ with parameters $(u',\mu',K')$. $I_\rho(\{f\})=\sum_{\omega\in\Omega}\pi(\omega)u(f(\omega))$ for any $f\in\Acts$. SA dictates that $W'$ restricted to singleton menus induces the same EU preference over acts as the one induced by $I_\rho$ restricted to singleton menus. Thus, the standard Anscombe--Aumann result guarantees that $W:\Menus\to\R$ with parameters $(u,\mu',K')$ is also a scrutiny representation of $\succsim$ and the barycenter of $\mu'$ must also be $\pi$.
	
	\noindent\textbf{Claim}: If $I_\rho(A)\geq I_\rho(B)$ and $H_\rho(A)\leq H_\rho(B)$, then $W(A)\geq W(B)$.
	
	Proof of the claim: Recall that $u(X)=[0,1]$, and write $m_C(\omega):=\max_{h\in C}u(h(\omega))$ for any menu $C$. Fix menus $A,B$ satisfying the claim's premises. Choose $\alpha\in(0,1)$ sufficiently small, and let $r(\omega):=\frac{\alpha}{1-\alpha}[m_A(\omega)-m_B(\omega)]$, $c:=-\sum_{\omega}\pi(\omega)r(\omega)$, and $d(\omega):=r(\omega)+c$. Since $H_\rho(A)\leq H_\rho(B)$, we have $c\geq0$. Moreover, $\sum_{\omega}\pi(\omega)d(\omega)=0$, and $\alpha$ can be chosen small enough that $d(\omega)\in[-1,1]$ for every $\omega$.

	Because $u(X)=[0,1]$, there exist acts $f,g$ such that $u(g(\omega))-u(f(\omega))=d(\omega)$ for every $\omega$. Set $A_\alpha:=\alpha A+(1-\alpha)\{f\}$ and $B_\alpha:=\alpha B+(1-\alpha)\{g\}$. The statewise envelope of $B_\alpha$ exceeds that of $A_\alpha$ by $(1-\alpha)c$ in every state, so $B_\alpha$ dominates $A_\alpha$. Also, the zero prior mean of $d$ implies $I_\rho(\{f\})=I_\rho(\{g\})$, and hence $I_\rho(A_\alpha)-I_\rho(B_\alpha)=\alpha[I_\rho(A)-I_\rho(B)]\geq0$. JSC therefore gives $A_\alpha\succsim B_\alpha$. Finally, $W$ is affine and $W(\{f\})=W(\{g\})$, so $W(A_\alpha)\geq W(B_\alpha)$ implies $W(A)\geq W(B)$, proving the claim.

	In particular, if $I_\rho(A)=I_\rho(B)$ and $H_\rho(A)=H_\rho(B)$, applying the claim in both directions gives $W(A)=W(B)$. Thus $W(A)$ depends on $A$ only through $I_\rho(A)$ and $H_\rho(A)$.
	
	Returning to the main proof, write $I=I_\rho$ and $H=H_\rho$. We want to show that there exists $K\geq 0$ such that $W(A)=(1+K)I(A)-KH(A)$ is a scrutiny representation with parameters $(u,\mu,K)$, where $I$ and $H$ are the corresponding informed-choice value and hindsight benchmark.
	
	Let $T:\Menus\to\R^2$ be the two-dimensional map $T(A):=(I(A),H(A))$, and let $\mathcal R:=\{(I(A),H(A))\mid A\in\Menus\}=T(\Menus)$. The set $\mathcal R$ is convex: if $(i_1,h_1),(i_2,h_2)\in\mathcal R$, then there exist $A_1,A_2\in\Menus$ with $(i_1,h_1)=T(A_1)$ and $(i_2,h_2)=T(A_2)$, and the linearity of $I$ and $H$ guarantees that $\alpha(i_1,h_1)+(1-\alpha)(i_2,h_2)=T(\alpha A_1+(1-\alpha) A_2)\in\mathcal R$ for every $\alpha\in[0,1]$.
	
	By the claim, we can define $\widetilde W:\mathcal R\to\R$ by $\widetilde W(i,h):=W(A)$ for any $A$ satisfying $T(A)=(i,h)$; the claim guarantees that the definition does not depend on which such menu $A$ is chosen. Moreover, arguments similar to those above show that $\widetilde W$ is affine on the convex set $\mathcal R$; that is,
	\[
		\widetilde W(\lambda r_1+(1-\lambda)r_2)
		=
		\lambda \widetilde W(r_1)
		+
		(1-\lambda)\widetilde W(r_2),
	\]
	for any $r_1,r_2\in\mathcal R$ and $\lambda\in[0,1]$.
	
	\noindent\textbf{Case 1:} $I(A)=H(A)$ for every $A\in\Menus$. 
		
	Fix any $A$, let $t:=I(A)=H(A)$. For every constant act $x$, $W(\{x\})=I(\{x\})=H(\{x\})=u(x)$. Since $\succsim$ is lottery-nontrivial, $u(X)$ is not a singleton and $t\in u(X)$. There exists $x_t\in X$ such that $u(x_t)=t$, $I(A)=I(\{x_t\})$, and $H(A)=H(\{x_t\})$. The claim applied in both directions gives $A\sim\{x_t\}$. Therefore, $W(A)=W(\{x_t\})=I(A)$. Setting $K=0$ guarantees that $(u,\mu,0)$ represents $\succsim$.
	\\[4pt]
	\noindent\textbf{Case 2:} $I(A)\neq H(A)$ for some $A\in\Menus$. 
	
	Note that $I(\{f\})=H(\{f\})$ for every singleton menu, and the existence of $A$ with $I(A)\neq H(A)$ implies that the affine hull of $\mathcal R$ is the entire $\R^2$. Then the affinity of $\widetilde W$ implies that there exists $a,b,c\in\R$ such that 
	\[
		\widetilde W(i,h)=ai+bh+c,
	\]
	which is equivalent to saying that 
	\[
		W(A)=aI(A)+bH(A)+c
		\quad\text{for every menu }A\in\Menus.
	\]
	We then show that JSC implies $a\geq 0$ and $b\leq 0$.
		
	Because $\mathcal R$ is full-dimensional, it contains an interior point $(i,h)$. Hence, for sufficiently small $\varepsilon>0$, the points $(i+\varepsilon,h)$ and $(i,h+\varepsilon)$ also belong to $\mathcal R$.
		
	To see $a\geq 0$, choose menus $A,B$ such that
	\[
		(I(A),H(A))=(i+\varepsilon,h)
		\quad\text{and}\quad
		(I(B),H(B))=(i,h).
	\]
	Then $I(A)>I(B)$ and $H(A)=H(B)$. The claim implies $A\succsim B$, so $0\leq W(A)-W(B)=a\varepsilon$.
		
	To see $b\leq 0$, choose menus $A,B$ with $(I(A),H(A))=(i,h)$ and $(I(B),H(B))=(i,h+\varepsilon)$. The claim gives $A\succsim B$, so $0\leq W(A)-W(B)=-b\varepsilon$.
		
	For every constant act $x$, $W(\{x\})=I(\{x\})=H(\{x\})=u(x)$. It then follows from the above that $c=0$ and $a+b=1$. Let $K:=-b\geq0$. Then $a=1+K$, so
	\[
		W(A)
		=
		(1+K)I(A)-K H(A)
		=
		(1+K)I_\rho(A)-KH_\rho(A),
	\]
	is the desired scrutiny representation of $\succsim$ with parameters $(u,\mu,K)$.
		
	The tie-breaking kernels from the information representation of $\rho$ can be retained unchanged, completing the joint representation.

	This completes the proof.
\end{proof}



\BeginSupplementaryAppendix

	\begin{center}
		{\LARGE Supplementary Appendix}
	\end{center}
	
    \begin{center}
    {\small\textbf{Abstract}}
    \end{center}
    \begin{quote}
		{\small In this appendix, we provide a detailed proof of the additive expected-utility (EU) representation for preferences over menus of Anscombe--Aumann acts, building on the additive-representation results of \citet[henceforth DLR]{DLR2001} and \citet[henceforth DLRS]{DLRS2007}. The proof first establishes a slightly stronger result. The construction follows the geometric translation in \citet{DLST2014}: Utility acts are embedded into a simplex of lotteries, the corrected additive theorem of DLR and DLRS is applied there, and the result is translated back by downward closure and a weighted pushforward. We also prove the uniqueness of the additive representation.}
	\end{quote}
	
	\section{Environment}
	
	Let $\Omega$ and $Z$ be finite nonempty sets, with $m:=|\Omega|$.  Let $X:=\Delta(Z)$ be the lottery space and let $\Acts:=X^\Omega$ be the set of Anscombe--Aumann acts, endowed with its Euclidean metric.  Let $\Menus$ be the collection of nonempty compact subsets of $\Acts$, endowed with the Hausdorff metric $d_h$.  Mixtures of acts are defined statewise and mixtures of menus by 
	\[
	\alpha A+(1-\alpha)B:=\{\alpha f+(1-\alpha)g:f\in A,\ g\in B\}.
	\]
	Our primitive is a binary relation $\succsim$ over $\Menus$.  We employ the common abuse of notation as in the main text.
	
	\begin{axiom}[Weak Order]\label{ax:wo}
		The relation $\succsim$ is complete and transitive.
	\end{axiom}
	
	\begin{axiom}[Strong Continuity]\label{ax:sc}\
		\begin{enumerate}[label=(\roman*)]
		\item vNM Continuity: If $A\succ B\succ C$, then there are $\alpha,\bar\alpha\in(0,1)$ such that
		\[
		\alpha A+(1-\alpha)C\succ B\succ \bar\alpha A+(1-\bar\alpha)C.
		\]
		\item L Continuity: There exist menus $A^*,A_*$ and $\gamma>0$ such that, whenever $d_h(A,B)\leq\alpha/\gamma$ for some $\alpha\in(0,1)$,
		\[
		(1-\alpha)A+\alpha A^*\succsim(1-\alpha)B+\alpha A_*.
		\]
		\end{enumerate}
	\end{axiom}
	
	\begin{axiom}[Independence]\label{ax:ind}
		If $A\succ B$, then, for every menu $C$ and $\alpha\in(0,1]$,
		\[
		\alpha A+(1-\alpha)C\succ\alpha B+(1-\alpha)C.
		\]
	\end{axiom}

    Axioms S1 to S3 are exactly the same as axioms 1 to 3 from the main text. In the following, we decompose Axiom 4 in the main text into three separate axioms to isolate the effect of each one of them.
    
	\begin{axiom}[Domination]\label{ax:dom}
		If $f(\omega)\succsim g(\omega)$ for every $\omega\in\Omega$ and $f\in A$, then $A\sim A\cup\{g\}.$
	\end{axiom}
    This corresponds to clause (b) of Axiom 4 in the main text.
    
	\begin{axiom}[Act Monotonicity]\label{ax:monotonicity}
		If $f(\omega)\succsim g(\omega)$ for every $\omega\in\Omega$, then $f\succsim g$.
	\end{axiom}
    This corresponds to clause (a) of Axiom 4 in the main text.
    
    \begin{axiom}[Strict Act Monotonicity]\label{ax:strict-monotonicity}
	If $f(\omega)\succsim g(\omega)$ for every $\omega\in\Omega$ and there exists $\omega^*$ such that $f(\omega^*)\succ g(\omega^*)$, then $f\succ g$.
	\end{axiom}
    This corresponds to clause (c) of Axiom 4 in the main text.
	
	\begin{definition}\label{def:normalized-additive}
	A pair $(u,\eta)$ is a \emph{normalized additive representation} if $u:X\to\R$ is affine, $\eta$ is a finite countably additive signed Borel measure on $\DeltaO$ such that
	\[
	W(A)=\int_{\DeltaO}\left[\max_{f\in A}\sum_{\omega\in\Omega}p(\omega)u(f(\omega))\right]\,\eta(dp) \tag{NA}\label{eq:NA}
	\]
	represents $\succsim$ with $\eta(\DeltaO)=1$.
	\end{definition}
	Note that the affine utility over lotteries is allowed to be constant, and that no sign restriction is placed on the barycenter $\int_{\DeltaO}p\,\eta(dp)$. In particular, the barycenter can put negative weights on some states in $\Omega$. 
	
	\begin{definition}[Additive EU representation]\label{def:additive-EU}
	A normalized additive representation $(u,\eta)$ is an \emph{additive EU representation} if, in addition, $\int_{\DeltaO}p\,\eta(dp)\in\DeltaO.$ 
	\end{definition}
	That is, an additive EU representation is a normalized additive representation whose measure $\eta$ has a barycenter that is a probability measure over $\Omega$.
	
	\section{Representation Theorem and Proof}
	\begin{proposition}\label{prop:normalized-foundation}
	A binary relation $\succsim$ over $\Menus$ has a normalized additive representation if and only if it satisfies Axioms S1-S4.
	\end{proposition}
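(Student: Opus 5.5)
We prove necessity first, then sufficiency by transferring the problem to menus of lotteries, following the abstract's route (DLST geometric translation plus the DLR--DLRS additive theorem).

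\emph{Necessity.} Given (NA), weak order is immediate. Independence follows from the mixture-linearity of support functions (\Cref{lem:supportfunctionproperties}(b)), which makes $W$ affine in menu mixtures. vNM continuity follows from that affinity together with continuity of $W$ in $d_h$. L continuity follows because $A\mapsto\sigma_A^u$ is Lipschitz from $d_h$ into the sup norm, and $|\eta|$ is finite; take $A^*,A_*$ to be the best and worst constant acts, with $\gamma$ proportional to $|\eta|(\DeltaO)$ times the Lipschitz constant of $u$. For the degenerate case of constant $u$ we use any two menus and all four axioms hold trivially. Domination (S4) holds because $f(\omega)\succsim g(\omega)$ for all $\omega$ gives $u(f(\omega))\ge u(g(\omega))$ statewise. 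Hence $\sum_\omega p(\omega)u(g(\omega))\le\sum_\omega p(\omega)u(f(\omega))$ for every $p\in\DeltaO$, so $\sigma^u_{A\cup\{g\}}=\sigma^u_A$ whenever $f\in A$.

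\emph{Sufficiency: setup.} First obtain an affine $u$ on $X$ from the restriction of $\succsim$ to constant singletons, using vNM on lotteries via S1--S3. If $u$ is constant, use S4 to show that every menu is indifferent to a singleton. Then apply the additive theorem to show all menus are indifferent, and take $\eta=\delta_\pi$ for any $\pi$.

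\emph{Sufficiency: translation to lotteries.} Otherwise, normalize $u(X)=[0,1]$. By S4, each act can be replaced, without changing the menu's rank, by any act with the same statewise utilities. So the ranking of $A$ depends only on the compact utility set $U(A)=\{(u(f(\omega)))_\omega : f\in A\}\subseteq[0,1]^\Omega$. Moreover, applying S4 repeatedly together with continuity, it depends only on the downward closure of $U(A)$ within $[0,1]^\Omega$.

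Embed $[0,1]^\Omega$ affinely into a simplex of lotteries $\Delta(\Omega\cup\{0\})$ via $v\mapsto\frac1m\sum_\omega v_\omega\delta_\omega+(1-\frac1m\sum_\omega v_\omega)\delta_0$. This induces a relation on menus of lotteries that are contained in the image of the cube. We must check that this relation satisfies the DLR--DLRS axioms (weak order, strong continuity, independence) on this restricted domain. The DLRS corrigendum's version works on such convex subdomains, and any menu of lotteries can be mixed into the image cube using independence. Applying the corrected DLR theorem then gives a signed measure $\tilde\eta$ on the sphere of normalized vNM utilities $w$ over $\Omega\cup\{0\}$ such that $\succsim$ is represented by $\int\max_{v\in U(A)} w\cdot\phi(v)\,\tilde\eta(dw)$.

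\emph{Main obstacle: pulling back to posteriors.} The key step is to push $\tilde\eta$ forward to $\DeltaO$, and we expect this to be the hardest part. Downward-closure invariance (from S4) should force the relevant utility directions to satisfy $w_\omega\ge w_0$ for each $\omega$. We plan to establish this by testing, as in DLST: if $\tilde\eta$ charged directions violating this inequality, then adding a dominated act would change the integral. We would rule this out using uniqueness of the DLR measure, since the representation restricted to downward-closed sets must agree with the representation of those sets themselves.

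Each remaining direction $w$ is then an affine transform of $p\mapsto p\cdot v$ with $p_\omega\propto w_\omega-w_0\ge0$. So we set $p=p(w)$ and $\eta:=$ the pushforward of $\tilde\eta$ weighted by $c(w)=\sum_\omega(w_\omega-w_0)$. The additive constant $w_0$ contributes a term proportional to $\tilde\eta$ mass, which is constant across menus; directions with $c(w)=0$ contribute constants as well. Finally, rescale and add a constant so that $W(\{x\})=u(x)$, which forces $\eta(\DeltaO)=1$. No positivity of the barycenter is needed here, because S5 and S6 are not assumed.
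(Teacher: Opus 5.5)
Your architecture is the paper's: S4 reduces acts to utility vectors, the DLST embedding $v\mapsto\frac1m\sum_\omega v_\omega\delta_\omega+(1-\frac1m\sum_\omega v_\omega)\delta_0$ is applied, menus are extended to all lottery menus by mixing toward an interior point, DLR--DLRS is invoked, and a weighted pushforward follows. However, the step you identify as the crux is not established, and the mechanism you offer for it fails as stated.

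Write $r_\omega(w):=w_\omega-w_0$ and $h_C(r):=\max_{v\in C}r\cdot v$. The DLR index of a utility menu $C$ is then $\int w_0\,\tilde\eta(dw)+\frac1m\int h_C(r(w))\,\tilde\eta(dw)$. Your DLST-style test says that if $\tilde\eta$ charged directions with some $r_\omega(w)<0$, adding a dominated act would change the integral. That works in DLST only because their measure is positive. Here $\tilde\eta$ is signed. The increment $h_{C\cup\{v'\}}(r(w))-h_C(r(w))\geq0$ produced by a dominated $v'$ can integrate to zero even when it is strictly positive on a set that $\tilde\eta$ charges. So no single test, and no family of tests without a separate density argument, yields the support claim.

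Your fallback to uniqueness is also incomplete. Support functions of downward-closed menus depend on $w$ only through $w_0$ and $r(w)^+$. Agreement on such menus therefore says nothing about $\tilde\eta$ off the cone $\{r(w)\geq0\}$. To use DLR--DLRS uniqueness you must exhibit a competitor measure, concentrated on that cone, that represents the transported preference on \emph{all} lottery menus. The proposal never constructs one.

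The missing idea, used in Step 4 of the paper, is that no support restriction is needed. Two facts hold: $C$ is indifferent to its downward closure $C^\downarrow$ (S4 plus Hausdorff continuity), and on $[0,1]^\Omega$ one has $h_{C^\downarrow}(r)=h_C(r^+)$. Hence, up to the menu-independent constant $\int w_0\,\tilde\eta(dw)$, the index can be rewritten directly as $\frac1m\int h_C(r(w)^+)\,\tilde\eta(dw)$. Let $s(w):=\sum_\omega r_\omega(w)^+$ and push $\frac1m s(w)\,\tilde\eta(dw)$ forward under $w\mapsto r(w)^+/s(w)$. This gives $\eta$, and directions with $s(w)=0$ contribute $h_C(0)=0$.

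Your support claim is in fact true, but only a posteriori. The argument in the proof of Proposition S2 shows that the lift of this $\eta$ back to the sphere, which lives on the cone, represents the lottery-menu preference. Uniqueness then forces $\tilde\eta$ to coincide with it up to the normalizing scalar. So the competitor you need is exactly this pushforward, and once you have it the support claim is superfluous.

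A smaller point: extending finitely many S4 replacements to arbitrary compact menus requires Hausdorff continuity. You need this both in the constant-$u$ case and in ``the rank depends only on $U(A)$.'' The paper derives it from S1--S3 via DLRS Lemmas S6--S8, and ``apply the additive theorem'' should be replaced by that.
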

	Proposition S1 is slightly stronger than Proposition 1 in the main text.
	\medskip
	
	\noindent\textbf{Corollary.} \textit{Under Definition \ref{def:additive-EU}, Axioms S1--S5 characterize additive EU representations. Adding Axiom S6 yields the full-support version used in the main text. Proposition 1 therefore follows from Axioms S1--S6.}
    \begin{proof}[Proof of Corollary]
By \Cref{prop:normalized-foundation}, Axioms S1--S4 are equivalent to the existence of a normalized additive representation $(u,\eta)$. Let $\pi=\int_{\DeltaO}p\,\eta(dp)$. Since $\eta(\DeltaO)=1$, $\sum_{\omega\in\Omega}\pi(\omega)=1.$ Moreover, for acts $f$ and $g$, $W(f)-W(g)=\sum_{\omega\in\Omega}\pi(\omega)\bigl[u(f(\omega))-u(g(\omega))\bigr]$.

Suppose first that $u$ is nonconstant. Fix $\omega\in\Omega$, choose lotteries $x\succ y$, and consider two acts that coincide outside $\omega$ and yield $x$ and $y$, respectively, at $\omega$. Act Monotonicity implies that the first act is weakly preferred, so $\pi(\omega)\bigl[u(x)-u(y)\bigr]\geq0$. Hence $\pi(\omega)\geq0$. Because $\omega$ was arbitrary, $\pi\in\DeltaO$, and $(u,\eta)$ is an additive EU representation. Under Strict Act Monotonicity, the same comparison is strict, implying $\pi(\omega)>0$ for every $\omega$.

Conversely, if $\pi\in\DeltaO$, $\pi(\omega)\bigl[u(x)-u(y)\bigr]\geq0$ immediately implies Act Monotonicity; if $\pi$ has full support, it also implies Strict Act Monotonicity. Finally, if $u$ is constant, the preference is trivial and satisfies both axioms; it admits the required full-support representation by taking $\eta=\delta_{p_0}$ for any full-support $p_0\in\DeltaO$.
\end{proof}

	To facilitate the proof of \Cref{prop:normalized-foundation}, we introduce the following notion of continuity:
	\begin{axiom}[Hausdorff Continuity]\label{ax:hausdorff-continuity}
	For any $A\in\Menus$, the sets $\{B\in\Menus\mid A\succsim B\}$ and $\{B\in\Menus\mid B\succsim A\}$ are closed (in the Hausdorff topology).
	\end{axiom}
	
	We first isolate a standard implication of \Cref{ax:wo,ax:sc,ax:ind}.
	
	\begin{lemma}\label{lem:affine-lipschitz}
	If $\succsim$ satisfies Weak Order, Strong Continuity, and Independence, then $\succsim$ satisfies Hausdorff Continuity.
	\end{lemma}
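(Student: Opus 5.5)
The plan is to show that the upper and lower contour sets are closed under the Hausdorff metric. The argument uses L Continuity to control small Hausdorff perturbations, and vNM Continuity together with Independence to absorb the resulting mixture error.

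First I would dispose of the degenerate case. If $A^*\precsim A_*$, then L Continuity applied with $\alpha$ near one, together with Independence, forces every pair of menus to be ranked in both directions. So $\succsim$ is constant and both contour sets are all of $\Menus$, hence closed. I therefore assume $A^*\succ A_*$.

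Take $B_n\to B$ in $d_h$ with $B_n\succsim A$ for every $n$; I must show $B\succsim A$. Suppose instead that $A\succ B$. The idea is to find a strict gap that survives small perturbations. I would use the mixtures $(1-\alpha)B+\alpha A^*$ and $(1-\alpha)B_n+\alpha A_*$. Because $d_h\bigl((1-\alpha)B_n,(1-\alpha)B\bigr)\leq(1-\alpha)d_h(B_n,B)\leq d_h(B_n,B)$, for each fixed $\alpha$ and all $n$ large with $d_h(B_n,B)\leq\alpha/\gamma$, L Continuity gives
\[
(1-\alpha)B+\alpha A^*\succsim(1-\alpha)B_n+\alpha A_*.
\]
Since $B_n\succsim A$, Independence (in its weak form, which follows from the strict form together with completeness) gives $(1-\alpha)B_n+\alpha A_*\succsim(1-\alpha)A+\alpha A_*$. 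Combining the two yields $(1-\alpha)B+\alpha A^*\succsim(1-\alpha)A+\alpha A_*$ for every $\alpha\in(0,1)$.

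The main obstacle is producing a contradiction with $A\succ B$ from this family of inequalities as $\alpha\to0$. My first attempt would pass to a representation. Weak Order, vNM Continuity and Independence on the mixture space $\Menus$ give an affine representation $W$ by the mixture-space theorem; menu mixtures satisfy the Herstein--Milnor axioms up to the usual caveats, which are handled as in DLR. The displayed inequality then reads
\[
(1-\alpha)W(B)+\alpha W(A^*)\geq(1-\alpha)W(A)+\alpha W(A_*).
\]
Letting $\alpha\to0$ gives $W(B)\geq W(A)$, contradicting $A\succ B$. The lower contour set is handled symmetrically, swapping the roles of the limit and the sequence in L Continuity.

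If I want to avoid invoking the mixture-space theorem, I would instead work directly with the axioms. Mix $A$ and $B$ with $A^*$ and $A_*$ and use vNM Continuity to find an $\alpha$ small enough that $(1-\alpha)A+\alpha A_*\succ(1-\alpha)B+\alpha A^*$. That strict ranking contradicts the weak ranking derived above.
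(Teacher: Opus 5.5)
Your main route is correct and essentially the paper's. Both arguments rest on the affine representation obtained from $A\sim\co A$ plus the mixture-space theorem on convex menus (DLRS Lemmas S6--S7), and both then use L-Continuity to get continuity; the only difference is that you apply L-Continuity directly along the convergent sequence and let $\alpha\to0$, whereas the paper cites DLRS Lemma S8 for Lipschitz continuity and then uses that convexification is nonexpansive in $d_h$. Some minor points:
\begin{itemize}
\item The degenerate-case step is unnecessary, since the $\alpha\to0$ limit works whether or not $A^*\succ A_*$. As sketched, it also only reaches pairs with $d_h<1/\gamma$.
\item The weak form of Independence you use follows from the affinity of $W$, not from completeness alone.
\item The representation-free alternative is not a complete argument, because $\Menus$ is not a mixture space until you have $A\sim\co A$.
\end{itemize}
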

	
	\begin{proof}The proof builds on Lemmas S6 to S8 in the online supplement to \citet[henceforth DLRS]{DLRS2007}. Although DLRS state these results for menus of lotteries, their arguments use only the finite-dimensional convex mixture structure and therefore apply to the present Anscombe--Aumann act space.
		
	Lemma S6 in the online supplement to DLRS gives $A\sim\co A$ for every $A\in\Menus$ where $\co A$ is the convex hull of $A$. On the collection of convex menus, Lemma S7 in that supplement gives an affine representation $V$ under Weak Order, vNM Continuity, and Independence.  Because Strong Continuity includes L-Continuity, Lemma S8 in the supplement (equivalently, Lemma 1 in the main corrigendum) implies that $V$ is Hausdorff-Lipschitz on that domain.  
		
    Extend $V$ to all menus by setting $V(A):=V(\co A)$.  The preceding indifference ensures that the extension still represents $\succsim$.  As noted in Section S2.2 of the supplement, convexification is nonexpansive in the Hausdorff metric:
	\[
	d_h(\co A,\co B)\leq d_h(A,B).
	\]
	Consequently, the extension is also Hausdorff-Lipschitz and hence continuous. Its upper and lower contour sets are therefore closed, which is Hausdorff Continuity.
	\end{proof}
	
	Now we prove Proposition S1 in the steps described in the abstract.
	
	\begin{proof}[Proof of Proposition S1.]
	\ \\
	\emph{``If.''} Suppose first that the restriction of $\succsim$ to constant acts (lotteries) is trivial.  Then, for any acts $f,g$, $f(\omega)\sim g(\omega)$ for every $\omega$. Domination implies that any finite menu is indifferent to each of its singleton submenus.  Approximate an arbitrary menu by finite submenus and use the Hausdorff continuity supplied by \Cref{lem:affine-lipschitz}.  It follows that all menus are indifferent. The preference can be represented by $u\equiv0$ and, for example, $\eta=\delta_{p_0}$ for any $p_0\in\DeltaO$.
		
	Henceforth suppose that the restriction of $\succsim$ to constant acts is nontrivial.
		
	\noindent
	\emph{\bf Step 1: From lottery consequences to utility acts.}
		
	The restriction of $\succsim$ to constant singleton acts satisfies the vNM axioms. Hence there is a nonconstant affine $u:X\to\R$ representing this restriction. It is without loss to normalize $u$ so that $u(X)=[0,1]$. Choose $x^0,x^1\in X$ with $u(x^0)=0$ and $u(x^1)=1$, and define $x^t:=(1-t)x^0+tx^1$. Then $u(x^t)=t$.
		
	Let $\Omega=\{\omega_1,\omega_2,\dots,\omega_m\}$.  For an act $f$, define its utility vector as
		\[
		T(f):=\bigl(u(f(\omega_1)),u(f(\omega_2)),\dots,u(f(\omega_m))\bigr)\in[0,1]^m.
		\]
	Conversely, for $v\in[0,1]^m$, write $v=(v_1,v_2,\dots,v_m)$ and define its canonical \emph{lift} $\widehat v\in\Acts$ by
		\[
		\widehat v(\omega_i):=x^{v_i}.
		\]
	Thus $T(\widehat v)=v$. It is straightforward to verify that the lift $v\mapsto\widehat v$ is affine, continuous, and Lipschitz. Let $\Menus^u$ denote the collection of nonempty compact subsets of $[0,1]^m$. For a utility menu $C$ in $\Menus^u$, write $\widehat C:=\{\widehat v:v\in C\}$ and define a binary relation $\succsim^u$ over $\Menus^u$ by $C\succsim^u D$ if $\widehat C\succsim\widehat D$. This binary relation is well defined because the lift $C\mapsto\widehat C$ is fixed.
		
	\begin{lemma}
		\label{lem:lift-equivalence}
		For a menu $A\in\Menus$, let $T(A):=\{T(f)\mid f\in A\}$. Then $A\sim\widehat{T(A)}.$
	\end{lemma}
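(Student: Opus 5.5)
We will show that $A$ and its utility-lift $\widehat{T(A)}$ can be connected through a finite chain of applications of Domination (Axiom S4), together with a limiting argument based on Hausdorff Continuity (\Cref{lem:affine-lipschitz}). The key observation is that, for every act $f$, the lift $\widehat{T(f)}$ satisfies $u(\widehat{T(f)}(\omega))=u(f(\omega))$ for every $\omega$. Because $u$ represents $\succsim$ on constant acts, this gives $f(\omega)\sim\widehat{T(f)}(\omega)$ statewise. Hence $f$ and $\widehat{T(f)}$ each statewise dominate the other, in the sense of Axiom S4.

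\textbf{Finite menus.} First suppose $A=\{f_1,\dots,f_n\}$ is finite, and write $g_i:=\widehat{T(f_i)}$. We use Domination to add and then remove acts one at a time. Since $f_i\in A$ and $f_i$ dominates $g_i$, Axiom S4 gives $A\sim A\cup\{g_i\}$. Iterating over $i$ and using transitivity, we obtain $A\sim A\cup\{g_1,\dots,g_n\}$.

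Symmetrically, $g_i$ dominates $f_i$, and $g_i$ belongs to $\widehat{T(A)}=\{g_1,\dots,g_n\}$. So Axiom S4 yields $\widehat{T(A)}\sim\widehat{T(A)}\cup\{f_i\}$. Adding the $f_i$ one at a time, we reach $\widehat{T(A)}\sim A\cup\widehat{T(A)}$. Combining the two chains gives $A\sim\widehat{T(A)}$.

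\textbf{General compact menus.} For an arbitrary $A\in\Menus$, choose finite subsets $A_n\subseteq A$ with $d_h(A_n,A)\to0$, which is possible by compactness. We then need $d_h(\widehat{T(A_n)},\widehat{T(A)})\to0$. This follows because $f\mapsto\widehat{T(f)}$ is Lipschitz on $\Acts$, being the composition of the affine map $T$ with the affine lift. A Lipschitz map on acts induces a map on menus that is Lipschitz in the Hausdorff metric with the same constant.

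It also follows that $\widehat{T(A)}$ is compact, since it is a continuous image of a compact set, so it is a legitimate menu. By the finite case, $A_n\sim\widehat{T(A_n)}$ for every $n$.

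The passage to the limit is the main obstacle, since indifference must be transferred along two convergent sequences simultaneously. We handle it through the continuous representation $V$ constructed in the proof of \Cref{lem:affine-lipschitz}. That function represents $\succsim$ and is Hausdorff-continuous, so $V(A_n)=V(\widehat{T(A_n)})$ for every $n$. Letting $n\to\infty$ on both sides gives $V(A)=V(\widehat{T(A)})$, that is, $A\sim\widehat{T(A)}$.

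If one prefers to use only closed contour sets, the same conclusion follows by contradiction. Suppose, say, $A\succ\widehat{T(A)}$. Closedness of contour sets, combined with weak order, then gives a neighborhood on which the strict ranking persists for both sequences along the approximating pairs. This contradicts $A_n\sim\widehat{T(A_n)}$.
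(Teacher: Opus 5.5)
Your proof is correct and uses the same two ingredients as the paper's: Domination for finitely many additions, and Hausdorff continuity for passing to the closure. The difference is in how the limit is organized.

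\textbf{How the paper organizes the limit.} The paper never isolates a finite case. It keeps $A$ fixed and adds the points of a countable dense subset $\{g_1,g_2,\dots\}$ of $\widehat{T(A)}$ one at a time. This gives $A\cup\{g_1,\dots,g_n\}\sim A$ for every $n$, while $A\cup\{g_1,\dots,g_n\}\to A\cup\widehat{T(A)}$. Because the comparison menu $A$ never moves, closedness of the contour sets of $A$ suffices, and this is exactly the property stated in \Cref{lem:affine-lipschitz}. It yields $A\sim A\cup\widehat{T(A)}$. The symmetric argument gives $\widehat{T(A)}\sim A\cup\widehat{T(A)}$.

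\textbf{How your route differs.} You approximate $A$ itself by finite subsets, so both sides of $A_n\sim\widehat{T(A_n)}$ move. You therefore need joint continuity of the ranking rather than closed contour sets around a fixed menu. You obtain it legitimately from the Hausdorff-Lipschitz representation $V$ constructed inside the proof of \Cref{lem:affine-lipschitz}. The price is that you rely on that construction rather than on the lemma's statement, and you need the extra check that the lifted approximants converge. In return, your write-up verifies explicitly that $\widehat{T(A)}$ is compact and hence a menu, which the paper takes for granted.

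\textbf{Your fallback argument.} The contour-set-only sketch is incomplete as written, though it is not needed. Openness of strict contour sets only gives $A_n\succ\widehat{T(A)}$ and $A\succ\widehat{T(A_n)}$ for large $n$. Chaining these into $A_n\succ\widehat{T(A_n)}$ requires the standard extra step through a menu strictly between $A$ and $\widehat{T(A)}$.
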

		
	\begin{proof}[Proof of Lemma \ref{lem:lift-equivalence}]
	Indeed, every $f\in A$ is statewise indifferent to $\widehat{T(f)}$. Starting from $A$, add a countable dense collection from $\widehat{T(A)}$, one act at a time. Domination gives indifference at every finite step and Hausdorff continuity gives $A\sim A\cup\widehat{T(A)}$.  Starting from $\widehat{T(A)}$ and adding a dense collection from $A$ proves the reverse indifference through the same union.
	\end{proof}
		
	We now verify directly that $\succsim^u$ inherits all four axioms. That is:
	\begin{lemma}
		\label{lem:succuaxioms}
		If $\succsim$ satisfies Weak Order, Strong Continuity, Independence, and Domination, then $\succsim^u$ satisfies each one of those axioms adapted to utility menus.
	\end{lemma}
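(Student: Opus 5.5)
The plan is to transport each axiom through the lift $C\mapsto\widehat C$. The key fact is that the lift is affine at the level of menus: since $v\mapsto\widehat v$ is affine, $\widehat{\alpha C+(1-\alpha)D}=\alpha\widehat C+(1-\alpha)\widehat D$. This follows from $\widehat{\alpha v+(1-\alpha)w}=\alpha\widehat v+(1-\alpha)\widehat w$, which holds because $x^{\alpha s+(1-\alpha)t}=\alpha x^s+(1-\alpha)x^t$. Unions are also preserved, since $\widehat{C\cup D}=\widehat C\cup\widehat D$. Compactness is preserved because the lift is continuous, and the lift is Lipschitz in the Hausdorff metric with some constant $L$. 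With these identities in hand, Weak Order, vNM Continuity and Independence for $\succsim^u$ are immediate. Each follows by rewriting the relevant mixed utility menus as mixtures of lifted menus and invoking the corresponding axiom for $\succsim$ on $\Menus$.

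Domination for $\succsim^u$ reads as follows: if $v,w\in[0,1]^m$ with $v_i\geq w_i$ for all $i$ and $v\in C$, then $C\cup\{w\}\sim^u C$. The lifted acts satisfy $\widehat v(\omega_i)=x^{v_i}\succsim x^{w_i}=\widehat w(\omega_i)$, because $u$ represents $\succsim$ on constant acts. Moreover, $\widehat v\in\widehat C$. So Domination for $\succsim$ gives $\widehat C\cup\{\widehat w\}\sim\widehat C$, which is exactly the required statement.

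The step I expect to need care is L-Continuity. The axiom for $\succsim$ supplies menus $A^*,A_*\in\Menus$, not utility menus, so I cannot simply lift them. I will take $C^*:=T(A^*)$ and $C_*:=T(A_*)$, which are compact as continuous images, and set $\gamma^u:=\gamma L$. Suppose $d_h(C,D)\leq\alpha/\gamma^u$. Then $d_h(\widehat C,\widehat D)\leq L\,d_h(C,D)\leq\alpha/\gamma$, and L-Continuity for $\succsim$ gives
\[
(1-\alpha)\widehat C+\alpha A^*\succsim(1-\alpha)\widehat D+\alpha A_*.
\]
It remains to replace $A^*$ by $\widehat{C^*}$ and $A_*$ by $\widehat{C_*}$ inside these mixtures.

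For that replacement I will use Lemma \ref{lem:lift-equivalence}, which gives $A^*\sim\widehat{T(A^*)}=\widehat{C^*}$, together with the fact that indifference is preserved under mixing with a common menu. Independence delivers this for strict preference only, but indifference follows by a standard argument: if the mixtures were strictly ranked, Independence applied in the contrapositive direction would contradict $A^*\sim\widehat{C^*}$. Alternatively, I can avoid this step by applying Lemma \ref{lem:lift-equivalence} directly to the whole mixed menu. The mixture $(1-\alpha)\widehat C+\alpha A^*$ has utility image $(1-\alpha)C+\alpha C^*$ because $T$ is affine, so the mixture is indifferent to the lift of that image, which is $(1-\alpha)\widehat C+\alpha\widehat{C^*}$. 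The same argument applies on the $A_*$ side. Chaining these indifferences with the displayed ranking through transitivity yields L-Continuity for $\succsim^u$.
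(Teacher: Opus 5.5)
Your proof is correct and follows the paper's argument essentially step by step. The mixture identity for the lift transfers Weak Order, vNM Continuity and Independence, Domination transfers coordinatewise, and L-Continuity is witnessed by $T(A^*)$, $T(A_*)$ and $\gamma$ times the Lipschitz constant of the lift; the paper computes that constant exactly as $c=d(x^1,x^0)$.

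One caveat concerns preserving indifference under a common mixture. Your ``contrapositive of Independence'' justification does not work as written, because it would need the converse of Independence. The fact instead follows from Weak Order, vNM Continuity and Independence via the standard mixture-space argument, or from the affine representation the paper already uses for Hausdorff continuity. Your alternative route is valid and avoids the issue altogether: it applies $A\sim\widehat{T(A)}$ directly to the mixed menus, using that $T$ is affine.
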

	\begin{proof}[Proof of Lemma \ref{lem:succuaxioms}]
	The binary relation $\succsim^u$ is complete and transitive because it is the pullback of $\succsim$ through the map $C\mapsto\widehat C$.  The lift preserves mixtures exactly, that is, for any utility menus $C,D\in\Menus^u$:
	\[
	\widehat{\alpha C+(1-\alpha)D}=\alpha\widehat C+(1-\alpha)\widehat D.
	\]
	Consequently, any strict comparison among lifted menus remains strict after a common mixture by Independence, which proves Independence for $\succsim^u$. The same identity translates the two mixtures supplied by vNM Continuity for $\succsim$ into the required mixtures of menus in $\Menus^u$, proving vNM Continuity for $\succsim^u$.
			
	It remains to verify L-Continuity. Endow $\Acts=X^\Omega$ with the product Euclidean metric and $[0,1]^m$ with its usual Euclidean metric, denoting both by $d$, and put $c:=d(x^1,x^0)>0$. Because $x^t=(1-t)x^0+tx^1$, for all $v,w\in[0,1]^m$, $d(\widehat v,\widehat w)=c\cdot d(v,w)$, and $d_h(\widehat C,\widehat D)=c\cdot d_h(C,D)$.
			
	Let $A^*,A_*$ and $\gamma$ witness L-Continuity for $\succsim$, and set $C^*:=T(A^*)$ and $C_*:=T(A_*)$.  Lemma~\ref{lem:lift-equivalence} gives $A^*\sim\widehat{C^*}$ and $A_*\sim\widehat{C_*}$. If $d_h(C,D)\leq\alpha/(c\gamma)$, then $d_h(\widehat C,\widehat D)\leq\alpha/\gamma$, so the original axiom gives
	\[
    (1-\alpha)\widehat C+\alpha A^*\succsim(1-\alpha)\widehat D+\alpha A_*.
	\]
	Preservation of indifference under common mixtures permits $A^*,A_*$ to be replaced by $\widehat{C^*},\widehat{C_*}$. Using the mixture identity above and the definition of $\succsim^u$, we obtain
	\[
	(1-\alpha)C+\alpha C^*\succsim^u(1-\alpha)D+\alpha C_*.
	\]
	Thus $C^*,C_*$ and $c\gamma$ witness L-Continuity for $\succsim^u$.
			
	Finally, $\succsim^u$ satisfies coordinatewise Domination:
		\begin{equation}
			v\geq w,\quad v\in C\quad\Longrightarrow\quad C\sim^u C\cup\{w\}.
			\label{eq:utility-domination}
		\end{equation}
	Indeed, $v\geq w$ implies $\widehat v(\omega)\succsim\widehat w(\omega)$ in every state because $u$ represents preferences over constant acts; the conclusion is therefore exactly \Cref{ax:dom} applied to the lifted menu.
	\end{proof}
		
	\noindent
	\emph{\bf Step 2: Embed utility acts into a lottery simplex.}
		
	\noindent Introduce an artificial prize, $m+1$, and define $J:[0,1]^m\to\Delta(\{1,\ldots,m+1\})$ by
	\begin{equation}
		J(v):=\left(\frac{v_1}{m},\ldots,\frac{v_m}{m},1-\frac1m\sum_{i=1}^m v_i\right).
	\label{eq:DLST-embedding}
	\end{equation}
	The final coordinate is nonnegative because $\sum_i v_i\leq m$ for any $v\in[0,1]^m$. Thus, $J(v)$ is a lottery over $m+1$ prizes.  Let
	\[
		P:=J([0,1]^m)=\left\{q\in\Delta(\{1,\ldots,m+1\})\mid q_i\leq\frac1m\text{ for }i\leq m\right\}.
	\]
	$J$ is an affine bijection between $[0,1]^m$ and $P$, with inverse $J^{-1}(q)=(mq_1,\ldots,mq_m)$.  Let $\Menus^P$ denote the set of all nonempty compact subsets of $P$. Define a binary relation $\succsim^P$ on $\Menus^P$ by $M\succsim^P N$ if $J^{-1}(M)\succsim^u J^{-1}(N)$.
		
	We verify the properties of this transported preference that will be used below. $\succsim^P$ is complete and transitive since $\succsim^u$ is complete and transitive.  Since $J^{-1}$ is affine,
	\[
		J^{-1}\bigl(\alpha M+(1-\alpha)N\bigr)=\alpha J^{-1}(M)+(1-\alpha)J^{-1}(N),
    \]
	so vNM Continuity and Independence pass from $\succsim^u$ to $\succsim^P$. For L-Continuity, observe directly from $J^{-1}(q)=(mq_1,\ldots,mq_m)$ that
	\[
        d_h\bigl(J^{-1}(M),J^{-1}(N)\bigr)\leq m\,d_h(M,N).
	\]
	Let $C^*,C_*$ and $\gamma_u$ witness L-Continuity for $\succsim^u$.  If $d_h(M,N)\leq\alpha/(m\gamma_u)$, then
	\[
		d_h\bigl(J^{-1}(M),J^{-1}(N)\bigr)\leq\frac{\alpha}{\gamma_u}.
    \]
	Applying L-Continuity for $\succsim^u$ and using the displayed mixture identity shows that $J(C^*)$ and $J(C_*)$, with constant $m\gamma_u$, witness L-Continuity for $\succsim^P$.
		
	Let $\Menus^{\ell}$ be the set of all nonempty compact subsets of $\Delta(\{1,\ldots,m+1\})$. We extend $\succsim^P$ from $P$ to $\Menus^\ell$.  Let $q^0:=\left(\frac1{m+1},\ldots,\frac1{m+1}\right)$ and fix $\varepsilon\in(0,1/m^2)$.  For every lottery $q$ and every $i\leq m$,
	\[
		\varepsilon q_i+(1-\varepsilon)\frac1{m+1}\leq\varepsilon+(1-\varepsilon)\frac1{m+1}\leq\frac1m.
	\]
	Consequently, if we fix any $M\in\Menus^\ell$ and let $S_\varepsilon(M):=\varepsilon M+(1-\varepsilon)\{q^0\}$, then $S_\varepsilon(M)\subseteq P$. Finally, define a binary relation $\succsim^\ell$ by $M\succsim^\ell N$ if $S_\varepsilon(M)\succsim^P S_\varepsilon(N).$ The definition does not depend on the particular sufficiently small $\varepsilon$. Indeed, if $0<\varepsilon'<\varepsilon$, then
	\[
		S_{\varepsilon'}(M)=\frac{\varepsilon'}{\varepsilon}S_\varepsilon(M)+\left(1-\frac{\varepsilon'}{\varepsilon}\right)\{q^0\},
	\]
	so Independence makes the rankings generated by $S_{\varepsilon'}$ and $S_\varepsilon$ equivalent.
		
	The extension agrees with $\succsim^P$ on its original domain $P$.  Take menus $M,N\subseteq P$.  If $M\succ^P N$, Independence gives
	\[
		S_\varepsilon(M)=\varepsilon M+(1-\varepsilon)\{q^0\}\succ^P\varepsilon N+(1-\varepsilon)\{q^0\}=S_\varepsilon(N).
	\]
	If $M\sim^P N$, preservation of indifference under a common mixture gives $S_\varepsilon(M)\sim^P S_\varepsilon(N)$.  Hence $M\succsim^P N$ implies $S_\varepsilon(M)\succsim^P S_\varepsilon(N)$.  Conversely, if the latter comparison held while $M\not\succsim^P N$, Weak Order would give $N\succ^P M$, and Independence would give $S_\varepsilon(N)\succ^P S_\varepsilon(M)$, a contradiction.  Therefore, for all $M,N\subseteq P$,
	\[
		M\succsim^\ell N\quad\Longleftrightarrow\quad S_\varepsilon(M)\succsim^P S_\varepsilon(N)\quad\Longleftrightarrow\quad M\succsim^P N.
	\]
	This proves the claimed agreement.  Moreover,
	\[
		S_\varepsilon\bigl(\alpha M+(1-\alpha)N\bigr)=\alpha S_\varepsilon(M)+(1-\alpha)S_\varepsilon(N),
	\]
	so Weak Order, vNM Continuity, and Independence are preserved.
		
	By Lemmas S6--S8 in the online supplement to \citet{DLRS2007}, the properties just verified yield an affine Hausdorff-Lipschitz representation $V_P$ of $\succsim^P$ on menus in $P$.  Then $V^\ell(M):=V_P(S_\varepsilon(M))$ is an affine Hausdorff-Lipschitz representation of $\succsim^\ell$, since $d_h(S_\varepsilon(M),S_\varepsilon(N))=\varepsilon d_h(M,N).$
		
	An affine Hausdorff-Lipschitz representation satisfies L-Continuity: if its Lipschitz constant is $L$, choose menus $M^+,M^-$ with $D:=V^\ell(M^+)-V^\ell(M^-)>0$ and take $\gamma\geq L/D$.  The defining inequality of L-Continuity then follows directly from affinity and $d_h(M,N)\leq\alpha/\gamma$.  Thus, $\succsim^\ell$ satisfies the hypotheses of the DLR--DLRS additive theorem.
		
	\medskip
	\noindent
	\emph{\bf Step 3: Apply the DLR--DLRS additive EU representation theorem.}
		
	Let $\mathcal U_0:=\left\{a\in\R^{m+1}\mid \sum_{i=1}^{m+1}a_i=0,\ \sum_{i=1}^{m+1}a_i^2=1\right\}$ be the space of doubly-normalized vNM utility indices over the $m+1$ artificial prizes. The additive representation theorem of \citet{DLR2001}, with the Strong Continuity correction of \citet{DLRS2007}, gives a finite signed Borel measure $\lambda$ on $\mathcal U_0$ such that, up to a positive affine normalization,
	\begin{equation}
	      \overline V(M)=\int_{\mathcal U_0}\left[\max_{q\in M}a\cdot q\right]\,\lambda(da)
	  \label{eq:DLR-rep}
	\end{equation}
	represents $\succsim^\ell$. For $a\in\mathcal U_0$, define $r_i(a):=a_i-a_{m+1}$ for each $i\in\{1,2,\dots,m\}$ and let $r(a):=(r_1(a),\ldots,r_m(a))$.  For $v\in[0,1]^m$,
	\begin{align}
		a\cdot J(v)&=\frac1m\sum_{i=1}^ma_iv_i+a_{m+1}\left(1-\frac1m\sum_{i=1}^mv_i\right) =a_{m+1}+\frac1m r(a)\cdot v.
	\label{eq:artificial-calculation}
	\end{align}
	For a utility menu $C\in\Menus^u$, $J(C):=\{J(v):v\in C\}\subseteq P$. Define $\widetilde V^u(C):=\overline V(J(C)).$ Because $\succsim^\ell$ agrees with $\succsim^P$ on $\Menus^P$ and $\succsim^P$ is the transport of $\succsim^u$ through $J$, we can conclude that $\widetilde V^u$ represents $\succsim^u$ because for all utility menus $C,D\in\Menus^u$,
	\[
		C\succsim^u D\quad\Longleftrightarrow\quad J(C)\succsim^\ell J(D) \quad\Longleftrightarrow\quad \widetilde V^u(C)\geq \widetilde V^u(D).
	\]
	For any utility menu $C\in\Menus^u$, let $h_C:\R^m\to\R$ be defined by $h_C(r):=\max_{v\in C}r\cdot v$.
	\begin{align*}
		\widetilde V^u(C)&=\int_{\mathcal U_0}\left[\max_{v\in C}a\cdot J(v)\right]\,\lambda(da)=\int_{\mathcal U_0}a_{m+1}\,\lambda(da)+\frac1m\int_{\mathcal U_0}h_C(r(a))\,\lambda(da),
	\end{align*}
	Subtracting the menu-independent constant $\int_{\mathcal U_0}a_{m+1}\,\lambda(da)$ produces another representation of $\succsim^u$, namely
	\begin{equation}
		V^u(C)=\frac1m\int_{\mathcal U_0}h_C(r(a))\,\lambda(da).
	\label{eq:direction-rep}
	\end{equation}
	\medskip
	\noindent
	\emph{\bf Step 4: Applying Domination to eliminate negative differences.}
	
	For a utility menu $C\in\Menus^u$, define its \emph{downward closure} as
	\begin{equation}
		C^\downarrow:=\{w\in [0,1]^m\mid \text{there exists }v\in C\text{ such that }w\leq v\}.
	\label{eq:downward-closure}
	\end{equation}
	Thus $C^\downarrow$ adds every utility vector that is state-by-state no better than a vector already in $C$. It is thus compact and a menu.  By \eqref{eq:utility-domination}, adding any individual $w\in C^\downarrow$ to $C$ does not change the menu.  Add a countable dense subset and use Hausdorff Continuity, which follows for $\succsim^u$ from \Cref{lem:affine-lipschitz}, to obtain
	\begin{equation}
		C\sim^u C^\downarrow.
	\label{eq:downward-indifference}
	\end{equation}
		
	For a vector $r=(r_1,\dots,r_m)\in\R^m$, let $r^+:=(r_1^+,\dots, r_m^+)$ where $r_i^+:=\max\{r_i,0\}$. By construction, for any $C\in\Menus^u$ and $r\in\R^m$,
	\begin{equation}
		h_{C^\downarrow}(r)=h_C(r^+).
	\label{eq:positive-part}
	\end{equation}
	Combining \eqref{eq:direction-rep}, \eqref{eq:downward-indifference}, and \eqref{eq:positive-part} gives
	\begin{equation}
		V^u(C)=\frac1m\int_{\mathcal U_0}h_C(r(a)^+)\,\lambda(da).
	\label{eq:positive-directions}
	\end{equation}
		
	\medskip
	\noindent
	\emph{\bf Step 5: Cast subjective states as beliefs.}
		
	For each $a\in\mathcal U_0$, define $s(a):=\sum_{i=1}^m r_i(a)^+.$ Whenever $s(a)>0$, define $P(a):=\frac{r(a)^+}{s(a)}$, then $P(a)\in\DeltaO$ whenever it is defined. 
		
	From $\lambda$, a finite signed Borel measure on $\mathcal U_0$, we can define a signed Borel measure $\eta$ on $\DeltaO$ by
	\begin{equation}
		\eta(E):=\frac1m\int_{\{a\in\mathcal U_0\mid s(a)>0\}}s(a)\mathbf 1\{P(a)\in E\}\,\lambda(da),
	\label{eq:pushforward}
    \end{equation}
	for every Borel set $E\subseteq\DeltaO$. Accordingly, for every bounded Borel function $\varphi$,
	\begin{equation}
		\int_{\DeltaO}\varphi(p)\,\eta(dp)=\frac1m\int_{\{a:s(a)>0\}}s(a)\varphi(P(a))\,\lambda(da).
	\label{eq:pushforward-identity}
	\end{equation}
	Because $\mathcal U_0$ is compact and $s$ is bounded, $\eta$ is a finite signed measure.
		
	Since $h_C(r(a)^+)=s(a)h_C(P(a))$, with both sides equal to zero when $s(a)=0$, equations \eqref{eq:positive-directions} and \eqref{eq:pushforward-identity} imply
	\begin{equation}
		V^u(C)=\int_{\DeltaO}h_C(p)\,\eta(dp).
	\label{eq:posterior-rep}
	\end{equation}
	To see $\eta$ has total mass 1, let $\one$ denote the $m$-dimensional vector whose coordinates all equal to 1, then the restriction of $V^u$ to the singleton menus $\{t\one\}$ is a nonconstant affine representation of the usual order on $[0,1]$. Moreover, $V^u(\{0\one\})=0$ by construction.  Hence there exists $\beta>0$ such that $V^u(\{t\one\})=\beta t$ for any $t\in[0,1]$. Replace $V^u$, $\lambda$, and $\eta$ by $V^u/\beta$,$\lambda/\beta$, and $\eta/\beta$, respectively, and retain the same notation for the rescaled objects.  All the preceding representation equalities remain valid, and we now have
	\begin{equation}
		V^u(\{t\one\})=t\text{ for any }t\in[0,1].
	\label{eq:risk-normalization}
	\end{equation}
	Then: $1=V^u(\{\one\})=\int_{\DeltaO}p\cdot\one\,\eta(dp)=\eta(\DeltaO).$ Finally, for a menu of acts $A\in\Menus$, $h_{T(A)}(p)=\max_{f\in A}\sum_{\omega\in\Omega}p(\omega)u(f(\omega)).$ Using \eqref{eq:pushforward} and \eqref{eq:posterior-rep} yields the normalized additive representation \eqref{eq:NA}.
		
	\medskip
	\noindent
	\emph{``Only if.''} Suppose $(u,\eta)$ is a normalized additive representation.  Weak Order is immediate. Let $\sigma_A^u(p):=\max_{f\in A}\sum_{\omega\in\Omega}p(\omega)u(f(\omega)).$ Then Independence and vNM Continuity follow from the fact that $\sigma_{\alpha A+(1-\alpha)B}^u=\alpha\sigma_A^u+(1-\alpha)\sigma_B^u$ for any $A,B\in\Menus$ and $\alpha\in[0,1]$. Since $u$ is Lipschitz on the finite-dimensional compact lottery simplex, for some $L_u<\infty$,
	\[
	    |W(A)-W(B)|\leq L_u\|\eta\|_{\mathrm{TV}}d_h(A,B).
    \]
	If $W$ is constant, L-Continuity is immediate.  Otherwise, choose $A^+,A^-$ with $D:=W(A^+)-W(A^-)>0$ and take $\gamma\geq L_u\|\eta\|_{\mathrm{TV}}/D$.  If $d_h(A,B)\leq\alpha/\gamma$, then
	\begin{align*}
		&W\bigl((1-\alpha)A+\alpha A^+\bigr)-W\bigl((1-\alpha)B+\alpha A^-\bigr)\\
		&\quad=(1-\alpha)[W(A)-W(B)]+\alpha D\\
		&\quad\geq-(1-\alpha)L_u\|\eta\|_{\mathrm{TV}}d_h(A,B)+\alpha D\geq0.
	\end{align*}
	Thus L-Continuity holds.
		
	For constant lotteries, $W(x)=u(x)\eta(\DeltaO)=u(x),$ so $u$ represents the restriction of $\succsim$ to constant acts.  If $f(\omega)\succsim g(\omega)$ in every state and $f\in A$, then $u(f(\omega))\geq u(g(\omega))$ in every state.  Hence $\sigma_{A\cup\{g\}}^u(p)=\sigma_A^u(p)$ for every $p\in\DeltaO$, which implies $A\sim A\cup\{g\}$.  Domination follows.
	\end{proof}
	
	\section{Uniqueness and Proof}
	
	\begin{proposition}[Uniqueness]\label{prop:normalized-uniqueness}
	Suppose the restriction of $\succsim$ to constant acts is nontrivial. Two normalized additive representations $(u_1,\eta_1)$ and $(u_2,\eta_2)$ both represent $\succsim$ if and only if $u_2=\alpha u_1+\beta$ for some $\alpha>0$ and $\beta\in\R$ and $\eta_1=\eta_2$. In particular, the same uniqueness conclusion holds for additive EU representations.
	\end{proposition}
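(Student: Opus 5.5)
The ``if'' direction is a direct computation. If $u_2=\alpha u_1+\beta$ and $\eta_1=\eta_2=:\eta$, then $\sigma_A^{u_2}=\alpha\sigma_A^{u_1}+\beta$ pointwise on $\DeltaO$. Because $\eta(\DeltaO)=1$, this gives $W_2=\alpha W_1+\beta$, so both functionals represent the same preference.

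For ``only if,'' I proceed in three steps.

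\textbf{Step 1: utilities and the functionals.} Both $W_i$ are affine under menu mixtures, since $\sigma^u_{\alpha A+(1-\alpha)B}=\alpha\sigma^u_A+(1-\alpha)\sigma^u_B$. They are also Hausdorff continuous, and they represent the same nonconstant order on the mixture space $\Menus$. By the standard uniqueness of affine representations on a mixture space (vNM), $W_2=\alpha W_1+\beta$ with $\alpha>0$. On constant acts $W_i(\{x\})=u_i(x)$ because $\eta_i(\DeltaO)=1$. Hence $u_2=\alpha u_1+\beta$, and each $u_i$ is nonconstant by lottery nontriviality. Renormalizing $u_2$, I may assume $u_1=u_2=:u$ and $W_1=W_2$, and then it suffices to show $\eta_1=\eta_2$.

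\textbf{Step 2: reduction to a vanishing integral.} Set $\xi:=\eta_1-\eta_2$, a finite signed measure on $\DeltaO$. Then $\int\sigma_A^u\,d\xi=0$ for every menu $A$, and the goal is to show $\xi=0$.

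\textbf{Step 3: density of support functions.} By \Cref{lem:support-difference} (or directly from its construction), every convex continuous piecewise-affine function $\phi=\max_i p\cdot v_i$ can be written as $\phi=(\sigma^u_C-b)/a$ for some finite menu $C$, with $a>0$ and $b\in\R$. Since $\xi(\DeltaO)=0$, the constant $b$ integrates to zero, so $\int\phi\,d\xi=0$ for every such $\phi$. Differences of these functions give every continuous piecewise-affine function, using Fact 1 from the proof of \Cref{lem:support-difference}. The barycentric-interpolation argument in the proof of \Cref{lem:positive-off-degenerate} shows that continuous piecewise-affine functions are uniformly dense in $C(\DeltaO)$. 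Because $|\xi|$ is finite, dominated approximation gives $\int g\,d\xi=0$ for all continuous $g$. By uniqueness in the Riesz representation theorem on the compact metric space $\DeltaO$, it follows that $\xi=0$.

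The ``in particular'' clause follows immediately, since additive EU representations are a subclass of normalized ones.

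I expect the main obstacle to be the density step in Step 3. Here I need to check two points: that the lifted menus realize arbitrary max-affine functions up to positive scaling and translation, and that the zero total mass of $\xi$ disposes of the translation constant. The remainder of the argument is routine affine uniqueness.
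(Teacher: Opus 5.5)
Your proof is correct, but it reaches $\eta_1=\eta_2$ by a different route from the paper.

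\textbf{The paper's route.} It first aligns utilities: vNM uniqueness on lotteries, then DLRS Lemma S7 on convex utility menus, with the normalization $V_i^u(\{t\one\})=t$. It then does not work on $\DeltaO$ directly. Each $\eta_i$ is transported to a signed measure $\lambda_i$ on the DLRS sphere $\mathcal U_0$ via the map $\Phi$ with weight $m\kappa(p)$. The paper checks that the induced indices $\overline V_i$ both represent the extended lottery-menu preference $\succsim^\ell$ and coincide. DLRS canonical-measure uniqueness (density of differences of support functions in $C(\mathcal U_0)$, Lemmas S10--S12) then gives $\lambda_1=\lambda_2$. Finally, $\eta_i$ is recovered through the weighted pushforward of Step 5.

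\textbf{Your route.} You prove the needed density directly on the belief simplex. The construction in \Cref{lem:support-difference} realizes every max-affine function as a positive affine image of some $\sigma_C^u$. The condition $\xi(\DeltaO)=0$ removes the translation constants. Fact 1 plus barycentric interpolation then give density in $C(\DeltaO)$, and Riesz uniqueness finishes the argument.

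\textbf{Comparison.} Your route is shorter and self-contained. It uses only tools already proved in the main appendix, neither of which depends on uniqueness, so there is no circularity. It also dispenses with the artificial-prize embedding and the external DLRS lemmas. The paper's route instead keeps the uniqueness proof parallel to the existence proof. It shows that the correspondence $\lambda\mapsto\eta$ is exactly the one used there, so uniqueness is inherited directly from DLRS.

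\textbf{A small precision point.} $\Menus$ under Minkowski mixtures is not a mixture space in the strict sense, because the reduction axiom fails for nonconvex menus. Your Step 1, however, only needs each $W_i$ to be affine under menu mixtures, which it is, so the usual normalization argument still yields $W_2=\alpha W_1+\beta$. Alternatively, restrict to convex menus and extend via $\sigma_A^u=\sigma_{\co A}^u$, as the paper does.
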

	
	\begin{proof}
	The ``if'' direction follows immediately from $\eta_1(\DeltaO)=1$. Indeed, if $u_2=\alpha u_1+\beta$ and $\eta_2=\eta_1$, then $W_2=\alpha W_1+\beta$, so the two indices represent the same preference.
		
	We prove the ``only if'' direction. For a constant act $x$, normalization of the measures gives $W_i(x)=u_i(x)$ for $i=1,2$. Thus $u_1$ and $u_2$ are nonconstant affine representations of the same preference over lotteries, so vNM uniqueness gives $u_2=\alpha u_1+\beta$ for some $\alpha>0$ and $\beta\in\R$. Since $\eta_2(\DeltaO)=1$,
	$\frac{W_2(A)-\beta}{\alpha}=\int_{\DeltaO}\left[\max_{f\in A}\sum_{\omega}p(\omega)u_1(f(\omega))\right]\eta_2(dp).$ Thus, replacing $W_2$ by $(W_2-\beta)/\alpha$ leaves $\eta_2$ unchanged and permits us to use the common utility index $u:=u_1$. Normalize it so that $u(X)=[0,1]$.
	
    For $i=1,2$ and $C\in\Menus^u$, define $V_i^u(C):=\int_{\DeltaO}h_C(p)\,\eta_i(dp).$ Because $V_i^u(C)=W_i(\widehat C)$, both $V_1^u$ and $V_2^u$ represent the same relation $\succsim^u$. The mixture-space uniqueness result in Lemma~S7 of the online supplement to \citet{DLRS2007}, applied first to convex utility menus, therefore gives $V_2^u=cV_1^u+d$ for some $c>0$ and $d\in\R$. For every $t\in[0,1]$, $V_i^u(\{t\one\})=\int_{\DeltaO}t\,\eta_i(dp)=t.$ Taking $t=0$ and $t=1$ gives $d=0$ and $c=1$. Since every menu has the same support function as its convex hull, it follows that
	\begin{equation}
		V_1^u(C)=V_2^u(C) \text{ for every }C\in\Menus^u.
	\label{eq:equal-utility-indices}
	\end{equation}
		
	We now use the objects introduced in Steps~2--5 in the proof of Proposition S1 to translate each $\eta_i$ into a DLRS measure on $\mathcal U_0$. For $p=(p_1,\ldots,p_m)\in\DeltaO$, let
	\[
	    z(p):=\left(p_1-\frac1{m+1},\ldots,p_m-\frac1{m+1},-\frac1{m+1}\right),
		\qquad
		\kappa(p):=\|z(p)\|_2,
	\]
	and define
	$\Phi(p):=\frac{z(p)}{\kappa(p)}.$ The coordinates of $z(p)$ sum to zero and its final coordinate is nonzero; hence $\kappa(p)>0$ and $\Phi(p)\in\mathcal U_0$. Moreover, using the definitions of $r$, $s$, and $P$ from Steps~3 and~5,
	\begin{equation}
		r(\Phi(p))=\frac{p}{\kappa(p)},
		\qquad
		s(\Phi(p))=\frac1{\kappa(p)},
		\qquad
		P(\Phi(p))=p.
	\label{eq:Phi-properties}
	\end{equation}
		
	Define a finite signed Borel measure $\lambda_i$ on $\mathcal U_0$ by
	\begin{equation}
		\lambda_i(E):=m\int_{\DeltaO}\kappa(p)\mathbf 1\{\Phi(p)\in E\}\,\eta_i(dp),
	\label{eq:uniqueness-lambda}
	\end{equation}
	for every Borel set $E\subseteq\mathcal U_0$. The map $\Phi$ is continuous, so $\lambda_i$ is a Borel measure; finiteness follows because $\kappa$ is bounded on $\DeltaO$. By \eqref{eq:Phi-properties},
	\begin{align}
		\frac1m\int_{\mathcal U_0}h_C(r(a))\,\lambda_i(da)&=\int_{\DeltaO}\kappa(p)h_C\left(\frac{p}{\kappa(p)}\right)\eta_i(dp)=\int_{\DeltaO}h_C(p)\,\eta_i(dp)=V_i^u(C).
	\label{eq:eta-lambda-identity}
	\end{align}
		
	For a lottery menu $M\in\Menus^\ell$, set $\overline V_i(M):=\int_{\mathcal U_0}\left[\max_{q\in M}a\cdot q\right]\lambda_i(da).$ The identity $a\cdot J(v)=a_{m+1}+m^{-1}r(a)\cdot v$ established in Step~3, together with \eqref{eq:eta-lambda-identity}, implies that, for every utility menu $C$,
	\begin{equation}
		\overline V_i(J(C))=\int_{\mathcal U_0}a_{m+1}\,\lambda_i(da)+V_i^u(C).
	\label{eq:lifted-index-on-P}
	\end{equation}
	The first term is independent of $C$, so $\overline V_i$ represents $\succsim^P$ on menus in $P$. Furthermore,
	\[
	    \overline V_i(S_\varepsilon(M))=\varepsilon\overline V_i(M)+(1-\varepsilon)\overline V_i(\{q^0\}).
	\]
	We also have that $\overline V_i$ represents $\succsim^\ell$ on $\Menus^\ell$ since, for all $M,N\in\Menus^\ell$,
	\begin{align*}
		M\succsim^\ell N&\iff S_\varepsilon(M)\succsim^P S_\varepsilon(N)\iff \overline V_i(S_\varepsilon(M))\geq\overline V_i(S_\varepsilon(N))\iff \overline V_i(M)\geq\overline V_i(N).
	\end{align*}
		
	Both $\overline V_1$ and $\overline V_2$ are therefore affine representations of the same nontrivial preference. Applying Lemma~S7 in the DLRS supplement to convex lottery menus gives $\overline V_2=c\overline V_1+d$ for some $c>0$ and $d\in\R$ on that domain. Since each $\overline V_i$ assigns the same value to a menu and its convex hull, the equality extends to all lottery menus. Since every $a\in\mathcal U_0$ satisfies $a\cdot q^0=0$, we have $\overline V_1(\{q^0\})=\overline V_2(\{q^0\})=0$, and hence $d=0$. In addition, \eqref{eq:lifted-index-on-P} and \eqref{eq:equal-utility-indices} give $\overline V_i(J(\{\one\}))-\overline V_i(J(\{0\one\}))=V_i^u(\{\one\})-V_i^u(\{0\one\})=1.$ Therefore $c=1$ and $\overline V_1=\overline V_2$.
		
	We can now apply the canonical-measure uniqueness argument in the DLRS supplement. Let $\nu:=\lambda_1-\lambda_2$. The preceding equality implies that integration against $\nu$ vanishes on every menu support function on $\mathcal U_0$. It therefore vanishes on the linear space generated by differences of such support functions. Lemma~S10(3) of the supplement shows that this space is dense in $C(\mathcal U_0)$. Because integration against a finite signed measure is continuous in the supremum norm, it follows that integration against $\nu$ vanishes on all of $C(\mathcal U_0)$. Equivalently, this is the unique-extension argument of Lemma~S11. Uniqueness in the Riesz representation used in Lemma~S12 then gives $\lambda_1=\lambda_2$.
		
	Finally, the weighted pushforward already introduced in Step~5 recovers $\eta_i$ from $\lambda_i$. Indeed, by \eqref{eq:uniqueness-lambda} and \eqref{eq:Phi-properties}, for every Borel set $E\subseteq\DeltaO$,
	\begin{align*}
		\frac1m\int_{\{a:s(a)>0\}}s(a)\mathbf 1\{P(a)\in E\}\,\lambda_i(da)&=\int_{\DeltaO}\kappa(p)\frac1{\kappa(p)}\mathbf 1\{p\in E\}\,\eta_i(dp)=\eta_i(E).
	\end{align*}
	Thus the map $\lambda_i\mapsto\eta_i$ is exactly the map in \eqref{eq:pushforward}. The fact that $\lambda_1=\lambda_2$ consequently implies $\eta_1=\eta_2$, completing the proof.
	\end{proof}


\begin{thebibliography}{99}
\emergencystretch=1em

\bibitem[Ahn(2008)]{Ahn2008}
Ahn, David S. 2008.
``Ambiguity Without a State Space.''
\emph{Review of Economic Studies} 75(1): 3--28.

\bibitem[Ahn and Sarver(2013)]{AhnSarver2013}
Ahn, David S., and Todd Sarver. 2013.
``Preference for Flexibility and Random Choice.''
\emph{Econometrica} 81(1): 341--361.

\bibitem[Aliprantis and Border(2006)]{AliprantisBorder2006}
Aliprantis, Charalambos D., and Kim C. Border. 2006.
\emph{Infinite Dimensional Analysis: A Hitchhiker's Guide}.
3rd ed. Berlin: Springer.

\bibitem[Bell(1982)]{Bell1982}
Bell, David E. 1982.
``Regret in Decision Making under Uncertainty.''
\emph{Operations Research} 30(5): 961--981.

\bibitem[Blumstein, McMichael, and Storrow(2020)]{BlumsteinEtAl2020}
Blumstein, James F., Benjamin J. McMichael, and Alan B. Storrow. 2020.
``Constraints on Medical Liability Through Malpractice Safe Harbors.''
\emph{JAMA Health Forum} 1(8): e200961.

\bibitem[Blumstein, McMichael, and Storrow(2023)]{BlumsteinEtAl2023}
Blumstein, James F., Benjamin J. McMichael, and Alan B. Storrow. 2023.
``Developing Safe Harbors to Address Malpractice Liability and Wasteful Health
Care Spending.''
\emph{JAMA Health Forum} 4(11): e233899.

\bibitem[Cram\'er and Wold(1936)]{CramerWold1936}
Cram\'er, Harald, and Herman Wold. 1936.
``Some Theorems on Distribution Functions.''
\emph{Journal of the London Mathematical Society} s1-11(4): 290--294.

\bibitem[Dekel and Lipman(2012)]{DekelLipman2012}
Dekel, Eddie, and Barton L. Lipman. 2012.
``Costly Self-Control and Random Self-Indulgence.''
\emph{Econometrica} 80(3): 1271--1302.

\bibitem[Dekel, Lipman, and Rustichini(2001)]{DekelLipmanRustichini2001}
Dekel, Eddie, Barton L. Lipman, and Aldo Rustichini. 2001.
``Representing Preferences with a Unique Subjective State Space.''
\emph{Econometrica} 69(4): 891--934.

\bibitem[Dekel et al.(2007)]{DekelLipmanRustichiniSarver2007}
Dekel, Eddie, Barton L. Lipman, Aldo Rustichini, and Todd Sarver. 2007.
``Representing Preferences with a Unique Subjective State Space: A
Corrigendum.''
\emph{Econometrica} 75(2): 591--600.

\bibitem[de Oliveira et al.(2017)]{deOliveiraDentiMihmOzbek2017}
de Oliveira, Henrique, Tommaso Denti, Maximilian Mihm, and Kemal \"Ozbek. 2017.
``Rationally Inattentive Preferences and Hidden Information Costs.''
\emph{Theoretical Economics} 12(2): 621--654.

\bibitem[Dillenberger et al.(2014)]{dillenberger2014theory}
Dillenberger, David, Juan Sebasti\'{a}n Lleras, Philipp Sadowski, and Norio
Takeoka. 2014.
``A Theory of Subjective Learning.''
\emph{Journal of Economic Theory} 153: 287--312.

\bibitem[Ederer and Manso(2013)]{EdererManso2013}
Ederer, Florian, and Gustavo Manso.\penalty-100\ 2013.
``Is Pay for Per\-formance Detrimental to Innovation?''
\emph{Management Science} 59(7): 1496--1513.

\bibitem[Fischhoff(1975)]{Fischhoff1975}
Fischhoff, Baruch. 1975.
``Hindsight $\neq$ Foresight: The Effect of Outcome Knowledge on Judgment under Uncertainty.''
\emph{Journal of Experimental Psychology: Human Perception and Performance}
1(3): 288--299.

\bibitem[Folland(1999)]{Folland1999}
Folland, Gerald B. 1999.
\emph{Real Analysis: Modern Techniques and Their Applications}.
2nd ed. New York: Wiley.

\bibitem[Frakes and Gruber(2019)]{FrakesGruber2019}
Frakes, Michael, and Jonathan Gruber. 2019.
``Defensive Medicine: Evidence from Military Immunity.''
\emph{American Economic Journal: Economic Policy} 11(3): 197--231.

\bibitem[Gul and Pesendorfer(2001)]{GulPesendorfer2001}
Gul, Faruk, and Wolfgang Pesendorfer. 2001.
``Temptation and Self-Control.''
\emph{Econometrica} 69(6): 1403--1435.

\bibitem[Gul and Pesendorfer(2006)]{GulPesendorfer2006}
Gul, Faruk, and Wolfgang Pesendorfer. 2006.
``Random Expected Utility.''
\emph{Econometrica} 74(1): 121--146.

\bibitem[Hawkins and Hastie(1990)]{HawkinsHastie1990}
Hawkins, Scott A., and Reid Hastie. 1990.
``Hindsight: Biased Judgments of Past Events after the Outcomes Are Known.''
\emph{Psychological Bulletin} 107(3): 311--327.

\bibitem[Higashi et al.(2009)]{HigashiHyogoTakeoka2009}
Higashi, Youichiro, Kazuya Hyogo, and Norio Takeoka. 2009.
``Subjective Random Discounting and Intertemporal Choice.''
\emph{Journal of Economic Theory} 144(3): 1015--1053.

\bibitem[Kamin and Rachlinski(1995)]{KaminRachlinski1995}
Kamin, Kim A., and Jeffrey J. Rachlinski. 1995.
``Ex Post $\neq$ Ex Ante: Determining Liability in Hindsight.''
\emph{Law and Human Behavior} 19(1): 89--104.

\bibitem[Kreps(1979)]{Kreps1979}
Kreps, David M. 1979.
``A Representation Theorem for `Preference for Flexibility'.''
\emph{Econometrica} 47(3): 565--577.

\bibitem[Kopylov(2012)]{Kopylov2012}
Kopylov, Igor. 2012.
``Perfectionism and Choice.''
\emph{Econometrica} 80(5): 1819--1843.

\bibitem[Kripfganz and Schulze(1987)]{KripfganzSchulze1987}
Kripfganz, A., and R. Schulze. 1987.
``Piecewise Affine Functions as a Difference of Two Convex Functions.''
\emph{Optimization} 18(1): 23--29.

\bibitem[Loomes and Sugden(1982)]{LoomesSugden1982}
Loomes, Graham, and Robert Sugden. 1982.
``Regret Theory: An Alternative Theory of Rational Choice under
Uncertainty.''
\emph{Economic Journal} 92(368): 805--824.

\bibitem[Lu(2016)]{Lu2016}
Lu, Jay. 2016.
``Random Choice and Private Information.''
\emph{Econometrica} 84(6): 1983--2027.

\bibitem[Manso(2011)]{Manso2011}
Manso, Gustavo. 2011.
``Motivating Innovation.''
\emph{Journal of Finance} 66(5): 1823--1860.

\bibitem[Pennesi(2026)]{Pennesi2026}
Pennesi, Daniele. 2026.
``Subjective Timing of Information Arrival.''
Working paper, University of Torino.

\bibitem[Rockafellar and Wets(1998)]{RockafellarWets1998}
Rockafellar, R. Tyrrell, and Roger J.-B. Wets. 1998.
\emph{Variational Analysis}. Berlin: Springer.

\bibitem[Sarver(2008)]{Sarver2008}
Sarver, Todd. 2008.
``Anticipating Regret: Why Fewer Options May Be Better.''
\emph{Econometrica} 76(2): 263--305.

\bibitem[Tian and Wang(2014)]{TianWang2014}
Tian, Xuan, and Tracy Yue Wang. 2014.
``Tolerance for Failure and Corporate Innovation.''
\emph{Review of Financial Studies} 27(1): 211--255.

\end{thebibliography}

\begin{thebibliography}{9}
		
		\bibitem[Dekel, Lipman, and Rustichini(2001)]{DLR2001}
		Dekel, E., B. L. Lipman, and A. Rustichini (2001),
		``Representing Preferences with a Unique Subjective State Space,''
		\emph{Econometrica} 69, 891--934.
		
		\bibitem[Dekel, Lipman, Rustichini, and Sarver(2007)]{DLRS2007}
		Dekel, E., B. L. Lipman, A. Rustichini, and T. Sarver (2007),
		``Representing Preferences with a Unique Subjective State Space: A Corrigendum,''
		\emph{Econometrica} 75, 591--600, together with its online supplement.
		
		\bibitem[Dillenberger, Lleras, Sadowski, and Takeoka(2014)]{DLST2014}
		Dillenberger, D., J. S. Lleras, P. Sadowski, and N. Takeoka (2014),
		``A Theory of Subjective Learning,''
		\emph{Journal of Economic Theory} 153, 287--312.
		
	\end{thebibliography}
\end{document}